\author[1]{Alexander Kozachinskiy\thanks{akozachinskiy@hse.ru
}}\author[1,2,3]{
Mikhail Vyalyi\thanks{vyalyi@gmail.com}}
\affil[1]{National Research University Higher School of Economics, Russian Federation}
\affil[2]{Moscow Institute of Physics and Technology, Russian Federation}
\affil[3]{Dorodnicyn Computing Centre, FRC CSC RAS, Russian Federation}
\title{An application of communication complexity, Kolmogorov complexity and extremal combinatorics to parity games}
\date{}
\newtheorem{theorem}{Theorem}
\newtheorem{corollary}[theorem]{Corollary}
\newtheorem{lemma}[theorem]{Lemma}
\newtheorem{proposition}[theorem]{Proposition}
\renewcommand{\le}{\leqslant}
\renewcommand{\ge}{\geqslant}
\let\sm\setminus
\def\X{\mathcal {X}}
\def\F{\mathcal {F}}
\def\G{\mathcal {G}}
\def\I{\mathcal {I}}
\def\P{\mathcal {P}}
\def\A{\mathcal {A}}
\def\B{\mathcal {B}}
\def\maj{\sqsubseteq}
\def\Pr{\mathop{\mathbf{Pr}}}
\begin{document}

\maketitle

\begin{abstract}
So-called separation automata are in the core of  several recently invented quasi-polynomial time algorithms for parity games. An explicit $q$-state separation automaton implies an algorithm for parity games with running time polynomial in $q$. It is open  whether a polynomial-state separation automaton exists. A positive answer will lead to a polynomial-time algorithm for parity games, while a negative answer will at least demonstrate impossibility 
to construct such an algorithm using separation approach. 

In this work we prove exponential lower bound 
for a restricted class of  separation automata.
Our technique combines communication complexity and Kolmogorov complexity.  One of our technical contributions belongs to extremal combinatorics. Namely, we prove  a new upper bound on the product of sizes of two families of sets with small pairwise intersection.
\end{abstract}

\section{Introduction}

Applications of Communication Complexity (CC) in Formal Language
Theory (FLT) are well-known. Apparently, the most important one is obtaining  lower
bounds on state complexity of non-deterministic automata (NFA) (see, e.g., the
monograph~\cite{hromkovic97}).
CC is also applied to analysis of
nondeterminism measures in finite automata~\cite{HSKKSch2002} and for
a number of other  problems in FLT (e.g., see~\cite{Ada2010} for
bounds on nondeterministic communication complexity of regular
languages).
Also, it is worth to mention that  lower bounds
on memory used in streaming algorithms, an~another important
application of CC (see
the book~\cite{RE2019}), 
can be viewed as lower bounds on  size of probabilistic automata of
specific form.
Note that most of these applications have equivalent combinatorial counterparts
(see discussion in~\cite{GruberHolzer06}).

In this paper we extend the applications of CC to separation problems
for \emph{safety automata}. These automata accept or reject infinite
words. They appeared in recent developments in algorithmic game
theory. More exactly, safety automata play an important role in 
analysis of quasi-polynomial algorithms for solving parity
games~\cite{czerwinski2019universal} (see details below). 

We are interested in state complexity of deterministic safety automata
separating a pair of languages (in the sequel, \emph{separation
automata}). There are no convenient tools for this task. We propose an
approach based on \emph{time restriction}. Our technique gives lower
bounds on state complexity of separation automata that accept a word
after reading a~sufficiently short prefix of an infinite word. 

These lower bounds are based on lower bounds for  multi-party
nondeterministic communication complexity in the number-in-hand model.
But, in contrast with the previous works, we do not give a direct way
to convert a small separation automaton to a protocol solving an
appropriate communication problem. Our approach uses also the ideas of
the fooling set technique. We conclude from lower bounds for a communication problem that a~small automata cannot separate a specific family of pairs of finite languages.  In the definition of this family we use Kolmogorov complexity to control the size of communication protocols. 
 Next step is to use this family to construct  a~pair of words  fooling  a~small separation automaton.
The family is used multiple times, and each time we have to manage Kolmogorov complexity by exploiting the fact that the automaton has few states.

We hope that the approach presented in this paper has a~potential to get
more strong bounds for separation automata as well as to be
applied for other problems in FLT.

To present our results in more details, we need a brief introduction to the area of
parity games and to separation approach in solving parity games.


\subsection{Parity games}

For a game with two competitive players one can consider a problem of deciding which player has a winning strategy.
Solving parity games is a~classical example when this problem lies in NP$\cap$coNP yet for which no polynomial-time algorithm is known. 
To specify an instance of a~parity game one needs  to specify:
\begin{itemize}
\item $n$-node directed graph   in which any node has at least one outgoing-edge;
\item indicated initial node;
\item labeling of edges by integers from $\{1, 2, \ldots, d\}$ (\emph{priorities});
\item partition of nodes into two parts, $V_0$ and $V_1$.
\end{itemize}

There are two players named \emph{Player 0} (Even) and \emph{Player 1} (Odd). 
A~position of a game 
is specified by a node of a graph.  It is possible to move from node $u$ to node $v$ if and only if $(u, v)$ is an edge of a graph. For each node $u$ it is predetermined which player makes a move in $u$. Namely, Player 0 makes a move in $V_0$ and Player 1 makes a move in $V_1$.
 
Since all nodes have out-going edges, 
a~play
can always last for infinite number of moves. In this way we obtain an infinite sequence of nodes $\{v_k\}_{k = 1}^\infty$ visited by players. We can also look at the sequence of corresponding priorities. Namely, let $l_k$ be a priority of an edge $(v_k, v_{k + 1})$. Winning conditions in parity game are the following: Player $i$ wins if and only if $$\limsup\limits_{k\to\infty} l_k \equiv i \pmod{2}.$$ 

Such a winning condition is Borel, which means due to Martin's theorem (\cite{martin1985purely}) that either Player $0$ or Player $1$ has a winning strategy. Moreover, it turns out  that a player having a~winning strategy in a parity game has also a \emph{memoryless} winning strategy, i.e. one in which  every move depends only on  a current node (\cite{emerson1991tree, mostowski1991games}). This fact means a lot for the complexity of  $\mathsf{ParityGames}$, a problem of determining the winner of a parity game.  Namely, due to this fact $\mathsf{ParityGame}$ is in NP$\cap$coNP (a short certificate for a player is 
his/her
memoryless winning strategy). More involved argument shows that actually $\mathsf{ParityGames}$ is in UP$\cap$coUP (\cite{jurdzi1998deciding}).

All this leaves a hope that $\mathsf{ParityGames}$ is solvable in polynomial time. 
Yet this is still an open problem. A~lot of work 
was done to improve an obvious $n^n$-time algorithm checking all memoryless strategies (see, e.g., \cite{petersson2001randomized, jurdzinski2008deterministic, mcnaughton1993infinite, schewe2007solving}).  This finally led in 2017 to a \emph{quasi-polynomial} time algorithm for $\mathsf{ParityGames}$:

\begin{theorem}[Calude et.\,al., \cite{calude2017deciding}]
$\mathsf{ParityGames}$ with $n$ nodes and $d$ priorities can be solved in $n^{O(\log d)}$ time.
\end{theorem}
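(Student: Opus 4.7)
The plan is to reduce solving a parity game to solving a \emph{safety game} on a quasi-polynomial-size graph. The key ingredient is a deterministic safety automaton $\mathcal{A}$ over the alphabet of priorities $\{1,\dots,d\}$, having $n^{O(\log d)}$ states and the following \emph{separation} property: (a)~on any infinite sequence of priorities produced by a positional Player~0-winning strategy in some $n$-vertex parity game, the run of $\mathcal{A}$ never hits a rejecting sink; (b)~no infinite word on which $\mathcal{A}$ stays safe can be generated by a positional Player~1-winning strategy in such a game. Once $\mathcal{A}$ is built, the synchronised product of the game graph with $\mathcal{A}$ is a safety game on $n\cdot n^{O(\log d)}=n^{O(\log d)}$ positions, which is solvable by a standard attractor / backward-induction computation in time polynomial in its size.

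To construct $\mathcal{A}$, I would encode each state as a \emph{play summary}: a tuple $(b_1,\dots,b_k)$ with $k=\lceil\log_2 n\rceil+1$, where each $b_j$ either is empty or records the largest priority seen inside a hierarchically nested suffix of the play read so far. When the automaton reads a new priority~$\ell$, it updates the tuple by a level-merging rule: roughly, it ``rolls up'' adjacent entries whenever a dominance pattern is witnessed, and transitions to the sink as soon as some level would be forced to record an odd dominant priority without a matching even priority below it. Since each $b_j$ takes $O(d)$ values and there are $O(\log n)$ levels, the total number of states is bounded by $d^{O(\log n)}=n^{O(\log d)}$.

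Correctness of $\mathcal{A}$ rests on two combinatorial claims. Completeness: under a positional winning strategy for Player~0 every play is eventually periodic with period at most $n$, so any sufficiently long window contains only cycles with an even top priority, and the update rule can be shown to never reach the sink on such runs. Soundness: whenever a play sends $\mathcal{A}$ to the sink, the summary exhibits an odd dominant priority inside a window of length~$\le n$, witnessing a cycle on which Player~1 wins in the underlying game. Both claims reduce to a careful invariant analysis of the update rule along arbitrary infinite inputs.

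The main obstacle, in my view, is precisely the combinatorial content of the construction: proving that $O(\log n)$ levels of summary suffice to certify the parity of the $\limsup$ \emph{relative to cycles of length at most $n$}. In modern language, this is equivalent to exhibiting a universal tree of quasi-polynomial size for $n$-vertex, $d$-priority parity games, and is where the carefully chosen level-merging semantics of Calude et al.\ does its real work; verifying that the update rule preserves the right invariants along every run — rather than, say, producing spurious ``odd dominance'' witnesses on genuinely Player-0-winning plays — is the technically delicate part of the argument.
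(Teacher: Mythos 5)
The paper does not reprove this theorem; it is cited as an external result, and the paper's Section~1.2 merely outlines the \emph{separation} architecture under which the Calude et al.\ algorithm (and its successors) can be understood. Your sketch reproduces that architecture correctly: build a deterministic automaton with $n^{O(\log d)}$ states whose product with the game graph yields a polynomially solvable reachability/safety game. (You phrase the product game as a safety game for Player~0 rather than a reachability game as the paper does; these are dual and equivalent.)

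There is, however, a genuine error in the way you argue completeness. You claim that ``under a positional winning strategy for Player~0 every play is eventually periodic with period at most $n$''. This is false: a play in which only \emph{one} player is constrained to a positional strategy is an arbitrary infinite walk in the resulting subgraph and need not be eventually periodic at all. The actual correctness argument for the Calude et al.\ automaton does not rely on periodicity. As the paper points out, their automaton in fact separates the \emph{larger} language $\mathrm{LimSupEven}_{n,d}$ (all words whose priority $\limsup$ is even) from $\mathrm{OddCycles}_{n,d}$; completeness is then proved by an invariant showing that the ``play summary'' never produces a spurious odd-dominance witness on any word with even $\limsup$, whereas soundness exhibits, whenever the sink is reached, a window of length at most $n$ whose repeated node frames a cycle with odd top priority. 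Basing completeness on periodicity would prove acceptance only on a strict subset of $\mathrm{EvenCycles}_{n,d}$ and would therefore fail to establish the separation property. Your description of the update rule is otherwise in the right spirit, and you correctly identify that the combinatorial heart of the construction is equivalent (in hindsight, via~\cite{czerwinski2019universal}) to exhibiting a quasi-polynomial universal tree.
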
 

Although we 
made no assumptions on $d$, it is clear that we can always reduce a given instance of parity game to one in which $d$ is linear in the number of edges. Thus in a worst case the algorithm of Calude et.\,al. takes $n^{O(\log n)}$ time.

\subsection{Separation approach to parity games}

Since the paper of Calude et.\,al., several other quasi-polynomial time algorithms were invented for $\mathsf{ParityGames}$ (\cite{jurdzinski2017succinct, fearnley2017ordered, lehtinen2018modal}). The paper of Czerwi{\'n}ski et.\,al. (\cite{czerwinski2019universal}) argues that all these works follow so-called \emph{separation approach}. Let us briefly summarize this approach.

The main idea is to reduce $\mathsf{ParityGames}$ to  \emph{reachability games}. To specify a reachability game one needs to specify a graph and mark some of its nodes as winning. The goal of one player is to visit one of winning nodes at least once. Correspondingly, the goal of the other player is to avoid winning nodes. 

A standard analysis of complete information games works also for reachability games, which leads to a polynomial-time algorithm for the these games. 
In separation approach, a~parity game on a graph $G$ with $n$ nodes is reduced 
to a reachability game on a \emph{product} of~$G$ and the transition graph of some specific deterministic finite automaton $\mathcal{A}$. 

The input alphabet of $\mathcal{A}$ is a set $\{1, \ldots, n\}\times\{1, \ldots, d\}$ (pairs of the form $\langle$a node of $G$, a priority$\rangle$). We use this alphabet to encode infinite paths in $n$-node labeled graphs with $d$ priorities. Namely, assume that an infinite path starts with node $v_1$, then goes to $v_2$, then to $v_3$ and so on. Moreover, assume that the priority of the first edge in a path is $p_1$, the priority of the second edge in a path is $p_2$ etc. 
Then this path corresponds to 
the infinite  sequence $(v_1, p_1) (v_2, p_2) (v_3, p_3)\ldots$ over the alphabet $\{1, \ldots, n\}\times\{1, \ldots, d\}$. In what follows by saying that $\mathcal{A}$ does something on an infinite path we mean that $\A$ does something on the input sequence corresponding to this path. 


To make a reduction correct, we impose
the following requirement on $\mathcal{A}$. There should be a state $q_{accept}$ of $\mathcal{A}$ with the following properties:
\begin{itemize}
\item $\mathcal{A}$ reaches $q_{accept}$ on all paths produced by  memoryless strategies of Player~$0$ which are winning for some $n$-node graph with $d$ priorities.
\item $\mathcal{A}$ never reaches $q_{accept}$ on any path produced by a  memoryless strategy of Player $1$ which is winning  for some $n$-node graph with $d$ priorities.
\end{itemize}

Automata satisfying the above requirements are called  \emph{separation automata}. It follows immediately from definition  that  a~memoryless winning strategy  in a~parity game on $G$ yields a winning strategy in a reachability game on a product of $G$ and $\mathcal{A}$, where $\mathcal{A}$ is a separation automaton and  winning nodes  correspond to a state $q_{accept}$. 
Thus indeed to solve a parity game on $G$ it is enough to solve such a reachability game, and this takes time polynomial in the number of states of $\mathcal{A}$. 

It is possible to simplify a little bit a definition of separation automata.
A~graph is called  \emph{even} (\emph{odd}) if the maximum of priorities along a cycle is even (odd) for all cycles. 
 Take any winning positional strategy of Player $i$ in a parity game on $G$. Notice that if we remove  from~$G$ all edges contradicting this strategy, then we obtain, depending on~$i$, either odd or even graph.

 In \cite{czerwinski2019universal}, Czerwi{\'n}ski et.\,al. define 
the following two languages consisting of infinite words over $\{1, \ldots, n\}\times \{1, \ldots, d\}$.  Denote by $\mathrm{EvenCycles}_{n,d} \subseteq (\{1, \ldots, n\}\times\{1, \ldots, d\})^\mathbb{N}$ the set of all inputs sequences to $\mathcal{A}$ which correspond to some infinite path in an even graph with at most $n$ nodes and $d$ priorities. Define $\mathrm{OddCycles}_{n,d}$ similarly. Now from the observation above it follows that we can define separation automata equivalently as follows: $\mathcal{A}$ should reach $q_{accept}$ on sequences from $\mathrm{EvenCycles}_{n,d}$ and should avoid $q_{accept}$ on sequences from $\mathrm{OddCycles}_{n,d}$. 

As far as we know, before \cite{czerwinski2019universal}   a formalization of separation approach appears in a textbook of Boja\'nczyk and Czerwi{\'n}ski \cite{automata_toolbox}. However, instead of $\mathrm{EvenCycles}_{n,d}$ and $\mathrm{OddCycles}_{n,d}$, they used another two languages, $\mathrm{EvenLoops}_{n,d}$ and $\mathrm{OddLoops}_{n,d}$\footnote{Actually,  \cite{automata_toolbox} contains no name for these two languages and we use a terminology of \cite{czerwinski2019universal}.}. Namely,  $\mathrm{EvenLoops}_{n,d}$ ($\mathrm{OddLoops}_{n,d}$) consists of all infinite paths in which the maximum of priorities between any two visits of a same node is always even (odd).
It is clear that $\mathrm{EvenCycles}_{n,d}\subsetneq\mathrm{EvenLoops}_{n,d}$ and $\mathrm{OddCycles}_{n,d}\subsetneq\mathrm{OddLoops}_{n,d}$. Thus it is easier to construct separation automata in a~sense of  \cite{czerwinski2019universal} than in a~sense of \cite{automata_toolbox}. Correspondingly, it is easier to obtain lower bounds against the latter than against the former. We stress that in this paper we  follow the approach of \cite{czerwinski2019universal}, i.e., we use  $\mathrm{EvenCycles}_{n,d}$ and $\mathrm{OddCycles}_{n,d}$.


To describe the main lower bound of \cite{czerwinski2019universal} we shall introduce two more sets of infinite sequences from  $(\{1, \ldots, n\}\times\{1, \ldots, d\})^\mathbb{N}$. Namely, let $\mathrm{LimSupEven}_{n, d}$ be the set of all sequences $(v_1, p_1) (v_2, p_2) (v_3, p_3) \in (\{1, \ldots, n\}\times\{1, \ldots, d\})^\mathbb{N}$ satisfying $\limsup\limits_{i\to\infty} p_i \equiv 0 \pmod{2}$. Define $\mathrm{LimSupEven}_{n, d}$ similarly. 
Again, it is clear that
$$\mathrm{EvenCycles}_{n,d}\subsetneq\mathrm{LimSupEven}_{n, d}, \qquad \mathrm{OddCycles}_{n,d}\subsetneq\mathrm{LimSupOdd}_{n, d}.$$
First  Czerwi{\'n}ski et.\,al. demonstrate that actually all quasi-polynomial time algorithms for parity games listed above provide a quasi-polynomial-state automaton separating $\mathrm{LimSupEven}_{n, d}$ from $\mathrm{OddCycles}_{n,d}$ (in the same sense of separation as above --- an automaton should reach $q_{accept}$ on sequences from the first set an avoid $q_{accept}$ from sequences of the second set). It is more than required in  separation approach  --- however, no quasi-polynomial state automaton doing ``no more than required'' is known.

 On the other hand  Czerwi{\'n}ski et.\,al. show that any automaton separating $\mathrm{LimSupEven}_{n, d}$ from $\mathrm{OddCycles}_{n,d}$ has $n^{\Omega(\log d)}$ number of states. This exactly matches known constructions. To obtain such a lower bound,  they introduce a combinatorial object called ``universal trees'' and show that automata  separating $\mathrm{LimSupEven}_{n, d}$ from $\mathrm{OddCycles}_{n,d}$ should contain a~universal tree within the set of its states. Then they prove a quasi-polynomial lower bound on universal trees. 

It is not clear how to generalize this technique 
to separation  $\mathrm{EvenCycles}_{n,d}$ from $\mathrm{OddCycles}_{n,d}$ 
(for which no better lower bound that just $n$ is known). One of the obstacles is that the lower bound based on universal trees works also for non-deterministic automata.  At the same time separation of  $\mathrm{EvenCycles}_{n,d}$ and $\mathrm{OddCycles}_{n,d}$ is very easy with non-determinism allowed --- just guess a node appearing more then once and compute the maximum between two occurrences of this node.

\subsection{Our contribution}

We attack the question of obtaining lower bound on automata separating $\mathrm{EvenCycles}_{n,d}$ and $\mathrm{OddCycles}_{n,d}$. To do so we first relax a notion of separation automata by introducing an additional parameter $t$. 
Namely, recall that for any $w\in \mathrm{EvenCycles}_{n,d}$ a separation automaton should reach an accepting state on some finite prefix on $w$. The length of such prefix is not anyhow bounded. 
We suggest to simplify the problem and study it for automata in which such prefix is of length at most~$t$. 

More specifically, we say that a deterministic finite automaton separates $\mathrm{EvenCycles}_{n,d}$ from $\mathrm{OddCycles}_{n,d}$ in \textbf{time $\mathbf{t}$} if for all $w\in(\{1, \ldots, n\}\times\{1, \ldots, d\})^\mathbb{N}$:
\begin{itemize}
\item if $w\in\mathrm{EvenCycles}_{n,d}$, then an automaton reaches $q_{accept}$ while reading $w_1w_2\ldots w_t$ and  always stays in $q_{accept}$ after that;
\item if $w\in\mathrm{OddCycles}_{n,d}$, then an automaton never  reaches $q_{accept}$  on $w$.
\end{itemize}

A requirement that an automaton stays in $q_{accept}$ forever after reading $w_1w_2\ldots w_t$ is not essential because we can make  $q_{accept}$ an absorbing state.

It is easy to see that a deterministic automaton with $q$ states separating    $\mathrm{EvenCycles}_{n,d}$ from $\mathrm{OddCycles}_{n,d}$
necessarily does it 
in $qn$-time (for the sake of completeness we include the proof in Appendix \ref{sec_reduction}). Thus a  lower bound $q$ on the size of separation automata working in time $t$ implies $\min\{t/n, q\}$ lower bound on the size of  unrestricted separation automata.

Even super-linear lower bounds  for unrestricted separation automata are not known. To obtain such bounds with our approach   we first have to prove a good lower bound for super-quadratic $t$.
Unfortunately,   lower bounds we obtain in this paper are reasonable only for $t =  O(n^{5/4})$. 

\begin{theorem}
\label{main_theorem}
Any deterministic finite automaton separating $\mathrm{EvenCycles}_{n,2}$ from $\mathrm{OddCycles}_{n,2}$ in time $t$ has $\exp\left(\Omega(n^5/t^4)\right)$ number of states. 
\end{theorem}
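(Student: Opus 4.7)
The natural strategy is to argue by contradiction: assume $\mathcal{A}$ has $q$ states and separates $\mathrm{EvenCycles}_{n,2}$ from $\mathrm{OddCycles}_{n,2}$ in time $t$, and derive $\log q = \Omega(n^5/t^4)$. Following the outline in the introduction, the plan has three layers: a multi-party number-in-hand (NIH) communication lower bound, a Kolmogorov-complexity step converting this into an obstruction for the automaton on specific inputs, and a fooling-style aggregation producing the pair of words that defeats $\mathcal{A}$.

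First, I would reduce to NIH communication. Split the length-$t$ prefix into $k$ consecutive blocks, giving the $i$-th block to Player $i$. The automaton directly induces a protocol: each player simulates $\mathcal{A}$ on its block starting from the state forwarded by the previous player and passes on the new state, so the total communication is $(k-1)\log q$ bits. The task is to separate $\mathrm{EvenCycles}_{n,2}$ from $\mathrm{OddCycles}_{n,2}$, which for $d=2$ reduces to deciding whether the priority-$1$ edges of the concatenated path form a DAG (EvenCycles) or whether the priority-$2$ edges do (OddCycles). I would construct hard inputs in which each block is \emph{locally ambiguous} --- each admits both an $\mathrm{EvenCycles}_{n,2}$ completion and an $\mathrm{OddCycles}_{n,2}$ completion --- but whose priority-$2$ edge supports have small pairwise intersection across players. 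This is precisely where the paper's extremal lemma enters: a bound on the product of sizes of two cross-intersecting set families implies, by a standard Yao-style rectangle-counting argument, an upper bound on the number of distinct transcripts any short protocol can produce, hence a lower bound on communication.

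To promote this average-case conclusion into a worst-case obstruction for $\mathcal{A}$, I would instantiate the construction with a Kolmogorov-random seed. Because $\mathcal{A}$ retains only $\log q$ bits of information between blocks, any description of a block's role in the induced protocol encodes at most $\log q$ bits of genuine randomness; an incompressible seed of length slightly larger than $\log q$ therefore forces the simulated protocol to produce a wrong answer. Iterating this Kolmogorov-complexity step across the $k$ blocks --- each time exploiting the bounded-state property of $\mathcal{A}$ to control the descriptive complexity of the next fooling block --- yields a pair of infinite words witnessing the contradiction.

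Finally, one balances parameters. The per-boundary obstruction is a function of the block length $t/k$, the size of the priority-$2$ edge universe (at most on the order of $n^2$), and the quantitative strength of the extremal lemma; the constraint is that blocks must be long enough for the ambiguity construction to fit on an $n$-node graph, yet the number of blocks $k$ cannot be taken too small or the $(k-1)\log q$ budget becomes trivial to meet. Optimizing $k$ against these competing constraints yields the claimed $\exp(\Omega(n^5/t^4))$ lower bound. The principal obstacle is the extremal lemma itself and its interface with Kolmogorov complexity: one must simultaneously control the intersection pattern that keeps the players' candidate blocks mutually confounding and the descriptive cost of selecting such blocks within a $q$-state memory, and this delicate interplay is what forces the specific $n^5/t^4$ exponent.
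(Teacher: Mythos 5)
Your high-level inventory of ingredients is right---a multi-party NIH communication lower bound underpinned by the extremal lemma, Kolmogorov complexity to control descriptions, and a fooling-style conclusion---but the way you propose to connect them has genuine gaps, and the two most critical design choices of the actual proof are missing.

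First, the communication problem you propose to reduce to is not the one the paper lower-bounds. You suggest splitting the length-$t$ prefix into $k$ consecutive blocks and asking the players to decide a global cycle condition on the concatenation. The paper instead works with a \emph{promise} problem about \emph{vertex sets}, $\mathrm{DISJ}'_{k,\gamma}$: each party holds a set $X_i \subseteq [n']$ of fixed size, and the promise is that either all $X_i$ are pairwise disjoint (the $\mathcal{D}$ case) or all pairs have tiny symmetric difference (the $\mathcal{I}$ case). This promise is exactly what makes Theorem~\ref{low-intersections} bite: the $1$-inputs and $0$-inputs are cross-$t$-far families, so any box that covers many $1$-inputs must also hit a $0$-input. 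Your ``locally ambiguous blocks with small pairwise intersection of priority-2 edge supports'' is much vaguer and it is not clear it gives rise to any structure on which the extremal lemma can be invoked. In particular, for $d=2$ the priority-$2$ edges are precisely the ``resetting'' transitions, and the hard instances in the paper make essentially no use of their intersection pattern; the whole combinatorial action happens on the priority-$1$ vertex sets.

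Second, the role of Kolmogorov complexity is mischaracterized. You propose to plug an incompressible seed into the protocol and argue the $\log q$-bit bottleneck must lose information about it, forcing an error. That is the incompressibility method in spirit, but the paper uses Kolmogorov complexity for a different and more delicate purpose: it needs to define a nondeterministic protocol $\mathcal{P}$ that guesses a pair $(q_0, U)$, and the only way to keep $|\mathcal{P}|$ small enough for the communication bound to be meaningful is to restrict $U$ to sets satisfying $C(U \mid n,t,\mathcal{A}) \le k\log_2 Q$ (Proposition~\ref{kolm_number} then bounds the number of guesses). The iterative construction then has to certify, at each step $r$, that the accumulated $U = v(f^1)\cup\cdots\cup v(f^{r-1})$ really does have low conditional complexity---which it does because each $f^j$ is computable from the previous ones plus a $\log_2 Q$-bit state, via $\mathbf{ALG_1}$. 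None of this ``budget accounting'' appears in your sketch, and the incompressible-seed framing would not produce it.

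Third, and most concretely, you do not construct the fooling pair. The paper has to exhibit a concrete word in $\mathrm{OddCycles}_{n,2}$ and a concrete word in $\mathrm{EvenCycles}_{n,2}$, of length $\ge t$, on which $\mathcal{A}$ lands in the same state. This requires (i) the separator letters $\#_r = (n'+r,2)$ to stitch the blocks together with priority~$2$, and (ii) the observation that a word split by the $\#_r$'s into $\overline{X}$-increasing pieces can be realized on an even game graph (the linear order $<_{\overline{X}}$ gives a topological order on the priority-$1$ subgraph), while disjoint blocks with priority-$2$ separators and one final sink loop realize an odd game graph. Without this ordering structure you have no argument that the $g$-word you are producing is even a prefix of something in $\mathrm{EvenCycles}_{n,2}$, so ``locally ambiguous blocks'' is not enough: the confusable pair has to come with an explicit even witness on one side and an explicit odd witness on the other. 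This is the place where the construction would fail as written.

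So: same toolkit, but the specific promise problem, the role of Kolmogorov complexity (controlling the guess set rather than certifying randomness), and the $\overline{X}$-increasing / $\#_r$-separator structure of the fooling words are all essential and absent from your sketch.
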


Notice that this theorem is true even for $d = 2$. The fact that our argument uses only $2$ priorities means that essentially new ideas are needed to obtain similar bound for super-quadratic $t$. Indeed, there exists a simple $O(n)$-state deterministic automaton, separating $\mathrm{EvenCycles}_{n,2}$ from $\mathrm{OddCycles}_{n,2}$ in $O(n^2)$-time (namely, accept if and only if at least $n+1$ priorities which are equal to $2$ have been already seen).

\subsection{Auxiliary results}

For our proof we define the following communication problem which is a variation of Disjointness problem. Fix $n, k\in\mathbb{N}$ and $\gamma > 0$. There are $k$ parties. The $i^{th}$ party receives a set $X_i\subseteq\{1, 2, \ldots, n\}$ of size $\lfloor n/k\rfloor$. It is promised that either $X_1, X_2, \ldots, X_k$ are disjoint or $\forall i, i^\prime\in\{1, \ldots, k\}\,\, |X_i\triangle X_{i^\prime}| \le \gamma \cdot \lfloor n/k\rfloor$. The goal of parties  is to output $1$ in the first case and $0$ otherwise.  We denote this problem by $\mathrm{DISJ}^\prime_{k,\gamma}(n)$.

We show the following lower bound on $\mathrm{DISJ}^\prime_{k,\gamma}(n)$:

\begin{theorem} 
\label{communication_lower_bound}  For all large enough $n$ and for all $k\in\{2, \ldots, n - 1\}$ and   $\gamma \in (0, 1)$ satisfying $\frac{k}{\gamma} \le  \frac{\sqrt{n}}{100}$ the non-deterministic communication complexity of $\mathrm{DISJ}^\prime_{k,\gamma}(n)$ is at least $\frac{\gamma^2 n}{10^4 \cdot k} - 2 \log_2(n)$. 
\end{theorem}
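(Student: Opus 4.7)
I use the standard cover characterisation of non-deterministic NIH communication: any protocol of complexity $c$ yields a cover of the YES-inputs (pairwise-disjoint $k$-tuples of $m$-subsets, $m=\lfloor n/k\rfloor$) by at most $2^c$ combinatorial rectangles $R_1\times\cdots\times R_k$, each avoiding every NO-input (tuples with all pairwise symmetric differences $\le \gamma m$). It will suffice to show that every NO-avoiding rectangle contains at most a $\exp(-\Omega(\gamma^{2} n/k))$-fraction of YES-inputs; then $2^c\ge \exp(\Omega(\gamma^{2} n/k))$ and the theorem follows after absorbing the $O(\log n)$ slack coming from the rounding $m=\lfloor n/k\rfloor\neq n/k$. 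A convenient first reduction is to regard the $k$-party NIH problem as a two-party one, with party A holding $X_1$ and party B holding $(X_2,\ldots,X_k)$, since the two-party NIH complexity is bounded above by the $k$-party NIH complexity (party B can simulate parties $2,\ldots,k$).

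From a NO-avoiding rectangle $R_A\times R_B$ I extract two families of $m$-subsets of $[n]$. Let
\[
R_B^{\star}=\bigl\{(X_2,\ldots,X_k)\in R_B\ :\ |X_i\triangle X_j|\le \gamma m/2\text{ for all }2\le i<j\le k\bigr\},
\]
and let $\mathcal G$ be the projection of $R_B^{\star}$ onto its first coordinate; set $\mathcal F:=R_A$. By the triangle inequality for $\triangle$, if $X_1\in\mathcal F$ and $X_2\in\mathcal G$ satisfy $|X_1\triangle X_2|\le \gamma m/2$, then any completion $(X_2,X_3,\ldots,X_k)\in R_B^{\star}$ witnessing $X_2\in\mathcal G$ yields a full NO-tuple inside $R_A\times R_B$, a contradiction. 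Hence
\[
\forall\,X_1\in\mathcal F,\ X_2\in\mathcal G\colon\quad |X_1\cap X_2|<m-\gamma m/4,
\]
which is precisely the ``small pairwise cross-intersection'' hypothesis of the new extremal theorem announced in the abstract. Applying it gives $|\mathcal F|\cdot|\mathcal G|\le\binom{n}{m}^{2}\exp\bigl(-\Omega(\gamma^{2}m)\bigr)$.

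Finally I translate this product bound into a bound on $|\mathrm{YES}\cap(R_A\times R_B)|$. Every YES-input $(X_1,X_2,\ldots,X_k)$ in the rectangle contributes $X_1\in\mathcal F$; and by averaging over which of the $k-1$ coordinates of $R_B$ one singles out as ``first'', the $X_2$-coordinate is forced into the corresponding projection (so it is either in $\mathcal G$ or in an analogous $\mathcal G_i$ for which the same extremal bound holds). A Cauchy--Schwarz-type step then converts the product bound into a cardinality bound on pairs $(X_1,X_2)\in\mathcal F\times\mathcal G$ with $X_1\cap X_2=\varnothing$; once such a pair is fixed, at most $(n-2m)!/(m!)^{k-2}$ ways remain to extend it to a YES-tuple. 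Comparing the result with the total count $|\mathrm{YES}|=n!/(m!)^{k}$ via Stirling gives $|\mathrm{YES}\cap (R_A\times R_B)|\le|\mathrm{YES}|\cdot\exp(-\Omega(\gamma^{2}n/k))$, which combined with the cover bound yields the theorem.

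\textbf{Main obstacle.} The key technical hurdle is of course the new extremal inequality --- the upper bound on $|\mathcal F|\cdot|\mathcal G|$ under the cross-intersection hypothesis --- which is one of the paper's genuinely new combinatorial contributions and will require real work. A secondary but nontrivial subtlety is the extraction step: since YES-tuples have pairwise disjoint coordinates and are therefore very far from the ``close'' subfamily $R_B^{\star}$ used to force NO-avoidance, the argument must simultaneously control both the ``clustered'' part of $R_B$ (which drives the extremal bound) and its ``spread-out'' part (which houses YES-tuples). Handling this gap cleanly, without losing the $\gamma^2$-dependence or introducing spurious factors of $k$, is where most of the remaining work lies.
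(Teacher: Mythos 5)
There is a fatal flaw at the very first step: the reduction to the two-party setting with party $A$ holding $X_1$ and party $B$ holding $(X_2,\dots,X_k)$. In that two-party problem party $B$ can determine the answer \emph{on its own}: if two of $B$'s sets, say $X_2$ and $X_3$, are disjoint then $|X_2\triangle X_3|=2m>\gamma m$ and the tuple cannot be a NO-input, so if it is in the promise set it is YES; conversely, if $|X_2\triangle X_3|\le\gamma m$ then $X_2\cap X_3\neq\varnothing$ and the tuple cannot be a YES-input, so if it is in the promise set it is NO. Thus the two-party nondeterministic complexity of $\mathrm{DISJ}'_{k,\gamma}$ (with this merging) is $O(1)$, and no argument about two-party rectangles can yield a nontrivial lower bound. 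Concretely, a single rectangle $R_A\times R_B$ with $R_A=\binom{[n]}{m}$ and $R_B=\{(X_2,\dots,X_k):|X_2\triangle X_3|>\gamma m\}$ already avoids every NO-input and covers every YES-input, so $R_B^{\star}$ is empty and the whole extraction step is vacuous. What you flag as a ``secondary subtlety'' --- that YES-tuples live in the spread-out part of $R_B$ while the extremal bound constrains only the clustered subfamily $R_B^{\star}$ --- is in fact the manifestation of this trivialisation and cannot be patched.

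The essential point that the two-party reduction throws away is that in the $k$-party number-in-hand model no single party ever sees two of the $X_i$, so a nondeterministic protocol yields a cover by \emph{full product} boxes $\mathcal F_1\times\dots\times\mathcal F_k$, and ``avoids $\mathcal I$'' is a genuinely $k$-ary constraint on these projections. The paper keeps this structure and proves a $k$-dimensional extension of the extremal theorem (Lemma~\ref{main_lemma}, proved by induction on $k$ from the two-family Theorem~\ref{low-intersections}): if all $k$ projections $\mathcal F_i$ are large then one can pick $F_i\in\mathcal F_i$ with $|F_1\cap F_i|\ge t+1$ for all $i$, which by the triangle inequality produces a NO-tuple inside the box. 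Hence at least one $\mathcal F_i$ must be small, and then the number of YES-tuples with $i$-th coordinate in $\mathcal F_i$ is counted directly (no Cauchy--Schwarz or averaging over coordinates is needed). You correctly identified the cover characterisation and the role of the new extremal inequality as the driving force, but the two-party collapse and the absence of a $k$-ary analogue of the extremal bound are genuine gaps that this plan does not close.
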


A similar problem (without restrictions on sizes of input sets) in the two-party setting was considered in~\cite{gruska2013communication}. We postpone proof of Theorem \ref{communication_lower_bound} to Section~\ref{sec_com}.

To show Theorem \ref{communication_lower_bound} we prove the following result from extremal combinatorics which is interesting on its own:
\begin{theorem}\label{low-intersections}

For all $n, a, t\in\mathbb{N}$ satisfying $t < a < n$ the following holds.
If $\mathcal{F}\subseteq\binom{[n]}{a}$ and $\mathcal{G}\subseteq\binom{[n]}{a}$ are such that $|F\cap G| \le t$ for all $F\in \mathcal{F}$ and $G\in\mathcal{G}$ , then 
     $$|\mathcal{F}| \cdot |\mathcal{G}| \le 32  a (n - a)\cdot e^{-(a - t - 1)^2/(20a)}
     \binom{n}{a}^2. $$
\end{theorem}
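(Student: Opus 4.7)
The plan is to reduce the problem to estimating the lower tail of a hypergeometric distribution. First, I would observe that for any fixed $F_0\in\binom{[n]}{a}$, the set $\{G\in\binom{[n]}{a}:|F_0\cap G|\le t\}$ has size $N_t:=\sum_{i=0}^{t}\binom{a}{i}\binom{n-a}{a-i}$. The cross-intersection hypothesis places every $G\in\mathcal{G}$ into this neighborhood of every $F_0\in\mathcal{F}$; hence $|\mathcal{G}|\le N_t$, and swapping the roles of $\mathcal{F}$ and $\mathcal{G}$ gives $|\mathcal{F}|\le N_t$ as well, so $|\mathcal{F}|\cdot|\mathcal{G}|\le N_t^2$.

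Next, I would bound $N_t$ by $\sqrt{32\,a(n-a)}\,e^{-(a-t-1)^2/(40a)}\binom{n}{a}$, so that squaring yields the theorem. The ratio $N_t/\binom{n}{a}$ is the lower-tail probability $\Pr[X\le t]$ for $X$ hypergeometric with parameters $(n,a,a)$, and each summand $\binom{a}{i}\binom{n-a}{a-i}/\binom{n}{a}$ is log-concave in $i$ with maximum of order $\sqrt{n/(a(n-a))}$, which explains the polynomial prefactor. A direct Stirling-based estimate of the dominant term together with a geometric-series control over the sum (using that the ratio of consecutive terms stays strictly below $1$ while $i<t<a^2/n$) should give the claimed bound.

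The main obstacle is that a standard Chernoff/Hoeffding inequality for the hypergeometric distribution, centered at its mean $a^2/n$, only yields an exponent of the form $(a^2/n-t)^2/a$, which is much weaker than the $(a-t-1)^2/a$ appearing in the theorem when $a\ll n$. To reach this cleaner form, one needs either a more careful direct comparison expressing $N_t$ as $\binom{n}{a}$ times a product of ratios $\prod_{j<t}\bigl((a-j)(n-2a+j+1)\bigr)\big/\bigl((j+1)(n-j)\bigr)$ and then applying $1-x\le e^{-x}$ term by term, or an entirely different route --- for instance the expander mixing lemma on the bipartite Johnson-scheme graph whose edges are pairs $(F,G)$ with $|F\cap G|\le t$, exploiting that the second singular value admits an Eberlein-polynomial expression which decays exponentially in $(a-t)^2/a$. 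In regimes where $t$ is close to $a^2/n$, the one-sided bound $N_t^2$ alone is slack and this joint spectral input seems unavoidable.
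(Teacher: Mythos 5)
Your first observation is correct and genuinely useful as far as it goes: fixing any $F_0\in\F$ forces $\G$ into the ball $\{G:|F_0\cap G|\le t\}$, so $|\G|\le N_t=\sum_{i=0}^t\binom{a}{i}\binom{n-a}{a-i}$, and symmetrically $|\F|\le N_t$. The problem is the second step: the inequality $N_t\le\sqrt{32\,a(n-a)}\,e^{-(a-t-1)^2/(40a)}\binom{n}{a}$ that you need is simply \emph{false} as soon as $t$ is at or above the hypergeometric mean $a^2/n$, yet the theorem still has nontrivial content there. Take $a=n/2$ and $t=\lfloor a/2\rfloor$. Then $a-t-1\approx a/2$, so the theorem asserts $|\F|\cdot|\G|\le 32a(n-a)\,e^{-\Theta(n)}\binom{n}{a}^2$. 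But $N_t/\binom{n}{a}=\Pr[X\le t]$ with $X\sim\mathrm{Hyp}(n,a,a)$ and $t$ essentially equal to $\E X$, so $N_t=\Theta\bigl(\binom{n}{a}\bigr)$ and $N_t^2=\Theta\bigl(\binom{n}{a}^2\bigr)$, which is off by an exponential factor. So the ``one-sided'' route cannot reach the theorem; it is not merely slack as you suggest at the end, the intermediate inequality you propose to prove is not true. (Your remark that Chernoff ``only yields $(a^2/n-t)^2/a$'' is a symptom of this: that is the correct exponent for $\Pr[X\le t]$, and no manipulation will upgrade it to $(a-t-1)^2/a$, because $\Pr[X\le t]$ really is of that larger order.) The spectral alternative you float (expander mixing on the Johnson graph) is a reasonable direction, but as written it is only a pointer, with no verification that the relevant Eberlein eigenvalue decays like $e^{-\Omega((a-t)^2/a)}$ \emph{uniformly} in the regime $t<a<n$ with the explicit constants the theorem demands.

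The paper's proof sidesteps the one-sided reduction entirely, and this is precisely where the extra strength comes from. It first applies shifting to replace $\F$ by a left-compressed family of the same size that is still $t$-far from the (suitably shifted) $\G$; a left-compressed family is an ideal of the order $\sqsubseteq_a$ (Propositions~\ref{shifts} and~\ref{ideals-filters}). It then proves (Lemma~\ref{FI-noncomparable}) that being $t$-far from an ideal is equivalent to the combinatorial condition $L_{t+1}(G)\not\sqsubseteq_{t+1}R_{t+1}(F)$ for all $F\in\F$, $G\in\G$, i.e.\ the $(t{+}1)$ smallest elements of each $G$ do not pointwise dominate the $(t{+}1)$ largest elements of each $F$. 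Finally, drawing $\mathbf{X},\mathbf{Y}$ independently from the product measure $\mu_{a/n}$, the paper shows that the event $\{\mathbf{X}\in\F,\ \mathbf{Y}\in\G\}$ forces at least one of the prefix counts $|\mathbf{X}\cap[1,r]|$, $|\mathbf{Y}\cap[1,r]|$ to deviate from its mean by $\ge(a-t-1)/2$ for some $r$ (Lemma~\ref{prob_bound}); a Chernoff--Hoeffding bound on these partial sums then yields exactly the exponent $(a-t-1)^2/(20a)$. The crucial point is that the displacement $a-t-1$ appears as a forced deviation in \emph{one} of the two random sets, obtained from the \emph{joint} structural constraint between $\F$ and $\G$ via the ideal/order reformulation. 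Bounding each family separately, as your $N_t^2$ plan does, throws this joint constraint away and cannot recover the theorem.
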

We postpone  proof of  Theorem \ref{low-intersections} to Section \ref{sec_int}.
 For a special case when $t = \Omega(n)$ and $a - t = \Omega(n)$ this bound can be found  in a classical work of Frankl and R{\"o}dl \cite{frankl1987forbidden}. Moreover, their result only requires that $|F\cap G| \neq t + 1$ for all $F\in\mathcal{F}, G\in\mathcal{G}$. 
However, the paper~\cite{frankl1987forbidden} does not contain a complete proof of this bound and it is unclear how to restore details omitted. Also, 
it is quite hard to turn a proof of  Frankl and R{\"o}dl into an explicit bound for sublinear $a$ and $t$.

\section{Preliminaries}

We denote the set $\{1, 2, \ldots, n\}$ by $[n]$ and 
 the set $\{a, a + 1, \ldots, b\}$ by $[a, b]$. By $2^{[n]}$ we mean  the set of all subsets of $[n]$ and by $\binom{[n]}{k}$ we mean  the set of all $k$-element subsets of $[n]$. Notation $X\triangle Y$ is used for the symmetric difference of two sets $X$, $Y$.

\subsection{Separation automata}

Let $\Sigma$ be a finite alphabet. For $w\in\Sigma^*\cup \Sigma^\mathbb{N}$ by $|w|$ we denote the length of $w$. We assume that subscripts enumerating letters of $w$ start with $1$, i.e., we write $w = w_1 w_2 w_3\ldots$  


A deterministic finite automaton $\A$ over $\Sigma$ is specified by a finite set $Q$ of its states, an indicated initial state $q_{start}\in Q$ and a transition function $\delta_\A\colon Q \times \Sigma \to Q$. 
As usual, we  extend $\delta_\A$ 
to be a function of the form $\delta\colon Q \times \Sigma^* \to Q$ by setting 
$\delta_\A(q, w_1\ldots w_p)$ 
to be a state reached 
by the automaton 
from $q\in Q$ after reading $w_1\ldots w_p\in \Sigma^*$. 

For $A, B\subseteq \Sigma^\mathbb{N}$, $A\cap B = \varnothing$, we say that a deterministic finite automaton $\A$ \emph{separates} $A$ from $B$ if there exists a state $q_{accept}\in Q$  such that for all $w = w_1w_2\ldots\in\Sigma^\mathbb{N}$ the following holds:
\begin{itemize}
\item if $w\in A$, then there exists $i_0\in\mathbb{N}$ such that $\delta_\A(q_{start}, w_1\ldots w_i) = q_{accept}$ for all $i\ge i_0$;
\item if $w\in B$, then for all $i\in\mathbb{N}$ it holds that $\delta_\A(q_{start}, w_1\ldots w_i) \neq q_{accept}$.
\end{itemize} 
We say that an automaton separates $A$ from $B$ in time $t$ if, instead of the first condition, the stronger one holds:  if $w\in A$, then  $\delta(q_0, w_1\ldots w_i) = q_{accept}$ for all $i \ge t$.

A \emph{game graph} with $n$ nodes and $d$ priorities is a pair $G = \langle E, \pi\rangle$, where 
\begin{itemize}
\item $E$ is a subset of $\{1, \ldots, n\}^2$ satisfying the following condition: for all $u\in\{1, \ldots, n\}$ there is $v\in\{1, \ldots, n\}$ such that $(u, v) \in E$;
\item  $\pi$ is a function of the form $\pi\colon E \to\{1, \ldots, d\}$.
\end{itemize}
  I.e., we consider $G$ as a directed graph in which nodes are elements of $\{1, \ldots, n\}$ and edges are elements of $E$. Moreover,  edge $e$ has a label $\pi(e) \in\{1, \ldots, d\}$ on it. Edge labels are called priorities. A game graph should satisfy the following requirement: 
 for each node,  there exists at least one out-going edge. 
We stress that we  allow loops but do not allow parallel edges\footnote{Our main lower bound holds for graphs without loops as well and the proof is easily adaptable. To simplify an exposition, we present a weaker result.}.


A game graph $G = \langle E, \pi\rangle$ is called even (odd), if the maximum of $\pi$ on every cycle of $G$ is even (odd).  More formally, $G$ is called even (odd) if for all $k\ge 1$ and $v_1, \ldots, v_k\in\{1, \ldots, n\}$ satisfying:
$$(v_1, v_2), (v_2, v_3), \ldots, (v_{k - 1}, v_k), (v_k, v_1) \in E,$$
it holds that:
$$\max\{\pi((v_1, v_2)), \pi((v_2, v_3)), \ldots, \pi((v_{k - 1}, v_k)), \pi((v_k, v_1))\} \mbox{ is even (odd).}$$ 

Now let us define two sets (languages) consisting of infinite words over an alphabet $\Sigma = \{1, \ldots, n\}\times\{1, \ldots, d\}$. These two languages will be called $\mathrm{EvenCycles}_{n,d}$ and $\mathrm{OddCycles}_{n,d}$. Namely, an infinite sequence $(v_1, l_1) (v_2, l_2) (v_3, l_3)\ldots \in(\{1, \ldots, n\}\times\{1, \ldots, d\})^\mathbb{N}$ belongs to $\mathrm{EvenCycles}_{n,d}$ 
if 
there exists an even game graph $G = \langle E, \pi\rangle$ with at most $n$ nodes and $d$ priorities such that for all $i\ge 1$ it holds that $(v_i, v_{i + 1})\in E$ and $\pi((v_i, v_{i + 1})) = l_i$. I.e., we put $(v_1, l_1) (v_2, l_2) (v_3, l_3)\ldots$ into $\mathrm{EvenCycles}_{n,d}$ if and only if this sequence can be realized as an infinite path in some even game graph 
with at $n$ nodes and $d$ priorities.

If, instead of $G$ being even, we require that $G$ is odd, we obtain a definition of $\mathrm{OddCycles}_{n,d}$.

\subsection{Communication complexity}

For our main  lower bound we use non-deterministic communication complexity in the \emph{number-in-hand} model, but let us start with the deterministic case. In the number-in-hand model there are $k$ parties and their goal is to compute some (fixed in advance, possibly partial) function $f\colon\mathcal{X}_1\times\ldots\times\mathcal{X}_k \to \{0, 1\}$, where sets $\mathcal{X}_1, \ldots, \mathcal{X}_k$ are finite. The $i^{th}$ party receives an element $X_i$ of $\mathcal{X}_i$ on input. Parties have a shared blackboard on which they can write binary messages. 
Blackboard is seen by all parties. 
A~deterministic protocol 
specifies at each moment of time:
\begin{itemize}
\item whose turn is to write on the  blackboard (depending on what is already written there);
\item a message of the corresponding party (which depends not only on what is written on the blackboard but also on the player's input).
\end{itemize}
In the end of the communication, parties output 
a single bit which is assumed to be the value of $f$ on $(X_1, \ldots, X_k)$. This bit is a function of the history of communication, i.e. it can be computed by an external observer who can see only the blackboard but does not see inputs of players.  The communication complexity of a deterministic protocol $\pi$ (denoted below by $CC(\pi)$) is the maximal  possible (over all inputs) number of bits written on the blackboard in~$\pi$. 

Now let us switch to non-deterministic protocols. The most convenient definition for us is the following one. A non-deterministic protocol is a set $\mathcal{P}$ of deterministic protocols. A~run of a non-deterministic protocol has two phases. 
At first phase parties guess $\pi\in\mathcal{P}$. 
The guess is public 
so all the parties 
have the same~$\pi$.  Then the parties run $\pi$ on $(X_1, \ldots, X_k)$. By the communication complexity of $\mathcal{P}$ we mean the following expression:
$$CC(\mathcal{P}) = \lceil \log_2\left(\left|\mathcal{P}\right| \rceil \right) + \max\limits_{\pi\in\mathcal{P}} CC(\pi).$$
In particular, besides  communication in $\pi$, the number of bits needed to specify $\pi$ also counts. 
For brevity, we use a term ``$c$-bit protocol'' for a protocol with the communication complexity at most $c$.

 We say that $\mathcal{P}$ computes $f$ if for all $(X_1, \ldots, X_k) \in\mathcal{X}_1\times \ldots \times\mathcal{X}_k$ it holds that:
\begin{itemize}
\item if $f(X_1, \ldots, X_k) = 1$, then there is $\pi\in\mathcal{P}$ such that $\pi$ outputs $1$ on $(X_1, \ldots, X_k)$;
\item if $f(X_1, \ldots, X_k) = 0$, then for all $\pi\in\mathcal{P}$ it holds that $\pi$ outputs $0$ on $(X_1, \ldots, X_k)$. 
\end{itemize} 
Finally, by the non-deterministic communication complexity of $f$ we mean the minimal $c\in\mathbb{N}$ such that 
there exists a $c$-bit 
non-deterministic communication protocol computing $f$. 

More formal introduction to the number-in-hand model 
can be found, for instance, 
in \cite[Chapter 5]{jukna2012boolean}. 
For our lower bound we use only a~very basic technique of \emph{monochromatic boxes}. This technique is  a generalization of a standard two-party monochromatic rectangle technique.
A box is a set of the form $\mathcal{F}_1\times\ldots\times\mathcal{F}_k$ for some $\mathcal{F}_1\subseteq\mathcal{X}_1, \ldots, \mathcal{F}_k\subseteq\mathcal{X}_k$.  We exploit the following feature of protocols:   a $c$-bit non-deterministic protocol computing $f$  induces a \emph{cover} of $\{(X_1, \ldots, X_k) \in \mathcal{X}_1\times\ldots\times\mathcal{X}_k : f(X_1, \ldots, X_k) = 1\}$ 
by at most $2^c$ boxes such that each box in the cover does not  contain a tuple on which $f$ is defined and takes value $0$.

\subsection{Kolmogorov complexity}

Consider  two binary strings $x$ and $y$.
Informally speaking, the  conditional   Kolmogorov complexity of  $x$ given $y$  is the minimal length of a program producing $x$ from $y$ (length is measured in bits). To define it formally, consider any partial computable function $D\colon\{0, 1\}^*\times\{0, 1\}^*\to\{0, 1\}^*$. Let $C_D(x|y)$ denote $$\min\{|p| : p\in\{0, 1\}^* \mbox{ and } D(p, y) = x\}.$$
(here, as above, $|p|$ stands for the length of $p$). So $C_D(x|y)$ can be viewed as a compressed size of $x$ given $y$ with respect to ``decompressor'' $D$.  Kolmogorov -- Solomonoff theorem states that there exists an ``optimal'' decompressor; more precisely, there is a partial computable function $D_0\colon\{0, 1\}^*\times\{0, 1\}^*\to\{0, 1\}^*$ such that for any partial computable function $D\colon\{0, 1\}^*\times\{0, 1\}^*\to\{0, 1\}^*$    there exists $A > 0$ such that for  all $x, y\in\{0, 1\}^*$ we have $C_{D_0}(x|y) \le C_D(x|y) + A$.  We fix any such $D_0$ and let 
$C(x|y) = C_{D_0}(x|y)$ 
be the Kolmogorov complexity of $x$ given $y$. We also define the unconditional Kolmogorov complexity of $x$ as the Kolmogorov complexity of $x$ 
given the empty word.

Let us list some standard properties of Kolmogorov complexity which will be used in this paper. Proofs of them can be found, for instance, in \cite{shen2017kolmogorov}.

\begin{proposition}
\label{kolm_number}
For any $z\in\{0,1\}^*$ the number of $x\in\{0,1\}^*$ satisfying $C(x|z) \le a$ is less than $2^{a + 1}$. 
\end{proposition}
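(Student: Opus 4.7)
The plan is to use a straightforward counting argument directly from the definition of Kolmogorov complexity via the optimal decompressor $D_0$. By definition, $C(x|z) \le a$ means there exists some program $p \in \{0,1\}^*$ with $|p| \le a$ such that $D_0(p, z) = x$. So I would assign to each $x$ with $C(x|z) \le a$ a witness program $p_x$ of length at most $a$ with $D_0(p_x, z) = x$.

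The key observation is that $D_0$ is a (partial) \emph{function}, so distinct values of $x$ must come from distinct witness programs $p_x$: if $p_x = p_{x'}$ then $x = D_0(p_x, z) = D_0(p_{x'}, z) = x'$. Hence the map $x \mapsto p_x$ is injective into the set of binary strings of length at most $a$.

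It remains to bound the number of binary strings of length at most $a$. There are exactly $2^i$ strings of length $i$ for each $i \in \{0, 1, \ldots, a\}$, so the total number is
\[
\sum_{i=0}^{a} 2^i = 2^{a+1} - 1 < 2^{a+1},
\]
which gives the claimed bound.

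There is no real obstacle here; the statement is purely a pigeonhole on programs, and the only subtlety is remembering to include the empty string (length $0$) among admissible programs, which is why the count is $2^{a+1}-1$ rather than $2^{a+1}-2$. The bound $< 2^{a+1}$ is stated loosely to avoid fiddling with the off-by-one.
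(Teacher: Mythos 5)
Your proof is correct and is exactly the standard counting argument; the paper itself gives no proof but merely cites \cite{shen2017kolmogorov}, where this is the canonical argument. The injectivity of $x \mapsto p_x$ (because $D_0(\cdot, z)$ is a function) together with the count $\sum_{i=0}^{a} 2^i = 2^{a+1} - 1$ of binary strings of length at most $a$ is all that is needed.
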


\begin{proposition}
\label{conservation}
For any computable function $f(\cdot, \cdot)$ and for all $x, y\in \{0,1\}^*$ the following holds:
$$ C(f(x, y)|y) \le C(x|y) + O(1)$$
(constant hidden in $O(\cdot)$ depends only on $f$ but not on $x$ and $y$).
\end{proposition}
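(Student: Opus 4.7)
The plan is to exploit the universality clause of the Kolmogorov--Solomonoff theorem by designing a decompressor that is tailor-made for the fixed function $f$, and then let universality absorb the description of this decompressor into the $O(1)$ constant. Concretely, I would start from the optimal decompressor $D_0$ used to define $C(\cdot|\cdot)$, pick a shortest description $p$ of $x$ given $y$ with respect to $D_0$ (so $|p| = C(x|y)$ and $D_0(p,y) = x$), and then build an auxiliary partial computable function $D'\colon\{0,1\}^*\times\{0,1\}^*\to\{0,1\}^*$ by
\[
D'(q, z) \;=\; f\bigl(D_0(q, z),\, z\bigr).
\]
This $D'$ is partial computable because $D_0$ and $f$ are, and evaluating $D'$ on $(p, y)$ produces exactly $f(x, y)$.

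The second step is purely bookkeeping: the string $p$ is a description of $f(x,y)$ given $y$ with respect to the decompressor $D'$, so
\[
C_{D'}\bigl(f(x,y)\,\big|\,y\bigr) \;\le\; |p| \;=\; C(x|y).
\]
Now apply universality of $D_0$ to $D'$: there is a constant $A = A(D', D_0)$ such that for every pair of strings $u, v$,
\[
C_{D_0}(u|v) \;\le\; C_{D'}(u|v) + A.
\]
Plugging in $u = f(x,y)$, $v = y$, and chaining the two inequalities gives $C(f(x,y)|y) \le C(x|y) + A$, which is the claim. Crucially, $A$ depends only on the description of $D'$, and $D'$ is determined by $f$ alone (the optimal $D_0$ is fixed once and for all), so the constant indeed depends only on $f$ and not on the strings $x, y$.

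There is no real obstacle here; the only thing to be slightly careful about is the partial-function bookkeeping. If $x$ has no description given $y$ then $C(x|y) = \infty$ and the inequality holds vacuously; if $f(x,y)$ is undefined then $f$ should be treated as partial and the statement is read as conditional on the left-hand side being defined, which is automatic since $D'(p,y)$ evaluates to $f(x,y)$. Everything else is direct invocation of the two standard ingredients already recalled in the preliminaries: (i) composition of a computable function with a partial computable decompressor is partial computable, and (ii) the universal decompressor $D_0$ dominates every other decompressor up to an additive constant.
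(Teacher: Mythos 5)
Your proof is correct and is the standard textbook argument: compose $f$ with the optimal decompressor to obtain a new decompressor $D'$, then invoke universality of $D_0$ to absorb the (finite) description of $D'$ into the $O(1)$ constant. The paper itself does not prove this proposition but defers to Shen, Uspensky, and Vereshchagin; your argument is precisely the one found there.
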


\begin{proposition}
\label{simple_prefix_free}
For all $m\in\mathbb{N}$ and for all $x_1, \ldots, x_m, y\in\{0, 1\}^*$ the following holds:
$$C(x_1, x_2, \ldots, x_m|y) \le O(1) + \sum\limits_{i = 1}^m \left( 2C(x_i|x_1, \ldots, x_{i - 1}, y) + 2\right)$$
(constant hidden in $O(\cdot)$ is absolute)\footnote{There is a more tight relation between the left and the right hand side known as ``chain rule''. However, Proposition \ref{simple_prefix_free} is enough for our purposes.}.
\end{proposition}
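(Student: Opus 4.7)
The plan is to construct an auxiliary decompressor $D$ that, given $y$ as side information, receives a single program formed by concatenating self-delimited descriptions of $x_1,\ldots,x_m$ and unpacks them one at a time, feeding each one to the universal decompressor $D_0$ used to define Kolmogorov complexity. The ``$+2$'' per summand in the statement comes from a very specific prefix-free encoding: I would encode a binary string $p=b_1b_2\cdots b_\ell$ as $\tilde p=b_1b_1b_2b_2\cdots b_\ell b_\ell\,01$, which has length $2\ell+2$. A concatenation $\tilde p_1\tilde p_2\cdots\tilde p_m$ can be uniquely and computably parsed by scanning pairs of bits: the pairs $00$ and $11$ contribute a literal $0$ or $1$ to the current chunk, while the pair $01$ signals the end of that chunk.

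Using this encoding I would define $D(q,y)$ as follows. Parse $q$ greedily from left to right into chunks $\tilde p_1,\tilde p_2,\ldots,\tilde p_m$; if the parse leaves a remainder or fails, $D$ diverges. Otherwise compute $x_1=D_0(p_1,y)$, and for $i=2,\ldots,m$ compute $x_i=D_0(p_i,\langle x_1,\ldots,x_{i-1},y\rangle)$ using any fixed computable pairing of tuples of binary strings, then output $\langle x_1,\ldots,x_m\rangle$. Since $D$ is partial computable, the Kolmogorov--Solomonoff optimality of $D_0$ yields a constant $A$ (depending only on $D$, hence absolute) such that $C(z|y)\le C_D(z|y)+A$ for every $z,y$.

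To close the argument, for each $i$ I would take a shortest $D_0$-program $p_i$ witnessing $|p_i|=C(x_i\mid x_1,\ldots,x_{i-1},y)$, and set $q=\tilde p_1\tilde p_2\cdots\tilde p_m$. By construction $D(q,y)=\langle x_1,\ldots,x_m\rangle$, so
$$
C_D(x_1,\ldots,x_m\mid y)\le|q|=\sum_{i=1}^{m}\bigl(2|p_i|+2\bigr)=\sum_{i=1}^{m}\bigl(2\,C(x_i\mid x_1,\ldots,x_{i-1},y)+2\bigr),
$$
and adding the optimality constant $A$ gives exactly the claimed inequality, with $A$ absorbed into the $O(1)$ together with the fixed overhead for the tuple-pairing convention. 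The only real (and very minor) obstacle is matching the additive constant $+2$ to the statement; any standard self-delimiting code would prove an inequality of the same shape, but the doubling-with-``$01$''-terminator is chosen precisely to make the $+2$ tight.
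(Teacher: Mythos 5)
Your proof is correct. The paper itself does not prove Proposition~\ref{simple_prefix_free} but only cites the textbook \cite{shen2017kolmogorov}; the argument you give --- concatenating bit-doubled, ``$01$''-terminated self-delimiting encodings of shortest conditional programs and decoding them sequentially with a fixed auxiliary decompressor, then invoking the optimality of $D_0$ --- is exactly the standard argument one would find there, and your encoding correctly produces the ``$2C(\cdot)+2$'' per-term cost claimed in the statement.
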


Kolmogorov Complexity can be defined not only for binary strings but for other 
``finite objects'', 
like tuples of strings, finite sets, graphs etc. To do so we have to fix some encoding of these objects by binary strings. Different encodings lead to the same complexity up to $O(1)$ additive term.

\section{Proof of Theorem \ref{main_theorem}}

We actually prove a more specified version of Theorem \ref{main_theorem}. 
\begin{theorem}
\label{restated}
For all large enough $n$ the following holds. If $ 8n \le t \le \frac{n^{5/4}}{10^3}$, then any deterministic 
finite automaton separating $\mathrm{EvenCycles}_{n,2}$ from $\mathrm{OddCycles}_{n,2}$ in time  $t$ has  more than $2^{ \frac{n^5}{(10^3 \cdot t)^4 }}$ states.
\end{theorem}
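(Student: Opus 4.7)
The plan is to prove Theorem~\ref{restated} by contradiction: starting from a separation automaton $\mathcal{A}$ with $q$ states that is too small, I construct a non-deterministic protocol for $\mathrm{DISJ}'_{k,\gamma}(n)$ whose cost violates Theorem~\ref{communication_lower_bound} for suitably chosen $k$ and $\gamma$.

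The core reduction encodes an instance $(X_1,\ldots,X_k)$ of $\mathrm{DISJ}'_{k,\gamma}(n)$, with each $X_i \subseteq [n]$ of size $\lfloor n/k\rfloor$, as a finite walk $w(X_1,\ldots,X_k)$ of length $\Theta(n)$ in $([n]\times\{1,2\})^*$. I want this encoding to satisfy: when the $X_i$ are pairwise disjoint, $w$ is a prefix of a word in $\mathrm{EvenCycles}_{n,2}$; when the pairwise symmetric differences are all at most $\gamma\lfloor n/k\rfloor$, $w$ is a prefix of a word in $\mathrm{OddCycles}_{n,2}$. Concretely, party $i$'s block would walk through the elements of $X_i$ via priority-$2$ transitions with priority-$1$ ``bridges'' between blocks: disjointness lets the walk visit many distinct nodes and be closed into cycles that contain priority-$2$ edges (consistent with an even graph), whereas large overlap forces the walk to visit a small common core, and all short cycles in that core can be arranged to carry only priority-$1$ edges (consistent with an odd graph). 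Partitioning $w$ into $k$ blocks and letting party $i$ simulate $\mathcal{A}$ on its block while forwarding the current state (using $\lceil\log_2 q\rceil$ bits) to party $i+1$ yields a non-deterministic protocol of cost roughly $k\lceil\log_2 q\rceil$; comparing this with the lower bound $\gamma^2 n/(10^4 k)$ from Theorem~\ref{communication_lower_bound} then forces $\log q = \Omega(\gamma^2 n/k^2)$.

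To exhaust the full $t$-budget (rather than merely $\Theta(n)$) and push the bound up to $n^5/t^4$, the plan is to iterate this reduction over $r \approx t/n$ successive rounds, reusing the same node set $[n]$ in each round but with independently chosen set-tuples. Kolmogorov complexity is crucial here: because $\mathcal{A}$ transmits at most $\log q$ bits of information between rounds, Proposition~\ref{kolm_number} caps the number of round-input configurations compatible with a given trace of states, so restricting each round's inputs to have bounded conditional Kolmogorov complexity (given the state carried in from previous rounds) guarantees that each round genuinely supplies an independent $\mathrm{DISJ}'$ obstruction rather than being already ``predictable'' from $\mathcal{A}$'s internal state. The hard part will be designing the walk so that the required $\mathrm{EvenCycles}_{n,2}$/$\mathrm{OddCycles}_{n,2}$ extendability survives the repeated reuse of the $n$ nodes across all $r$ rounds simultaneously---this is where the bookkeeping over even/odd graph structure becomes delicate---and performing the Kolmogorov accounting so that iteration produces a genuine multiplicative gain in the lower bound. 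Optimizing $k,\gamma,r$ against the constraints $k/\gamma \le \sqrt{n}/100$ (from Theorem~\ref{communication_lower_bound}) and $rn \le t$ should then yield the stated exponent $n^5/t^4$, with the working range $8n \le t \le n^{5/4}/10^3$ corresponding precisely to the regime in which all these parameter constraints are simultaneously feasible.
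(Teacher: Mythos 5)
Your high-level framework — convert a too-small automaton into a non-deterministic protocol for $\mathrm{DISJ}'_{k,\gamma}$, then iterate over $\Theta(t/n)$ rounds using Kolmogorov complexity to keep the round-by-round bookkeeping under control — matches the paper's strategy. However, you explicitly flag as "the hard part" the question of how to make the even/odd graph structure survive the repeated reuse of the node set across all rounds, and this is precisely the part you have not solved; it is also the place where the paper's argument has its crucial ingredient. The paper does \emph{not} iterate a separate, independently-chosen $\mathrm{DISJ}'$ obstruction in each round. Instead it applies the $\mathrm{DISJ}'$ lower bound exactly once, to a protocol that non-deterministically guesses both a state $q_0$ and a low-complexity subset $U$, to extract a \emph{single} fixed tuple $\overline{X} = (X_1,\ldots,X_k) \in \mathcal{D}$ (Proposition~\ref{overline_X}) that is then reused in every round. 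The even-side words $g^1,\ldots,g^{k/5}$ are all subwords of the canonical enumeration of $\overline{X}$ with previously-used nodes removed, so they are all $\overline{X}$-increasing with respect to the \emph{same} linear order $<_{\overline{X}}$; this is what lets them be packed into one acyclic priority-$1$ subgraph, with the round-separators $\#_j$ carrying priority $2$, producing an even graph. Independently chosen set-tuples in each round would give incompatible orders and the single even graph could not be built. Your Kolmogorov accounting is also looser than what is needed: the paper's key point is that each $f^r$ is reconstructible from a single $\log_2 Q$-bit state via the algorithm $\mathbf{ALG_1}$, so $C(U \mid n,t,\A)$ grows by only $O(\log Q)$ per round and stays below the $k\log_2 Q$ threshold required for Proposition~\ref{overline_X}, whereas you only gesture at "bounded conditional complexity given the carried state" without exhibiting the short description.

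Two further concrete issues. First, your proposed priority assignment ("walk through the elements of $X_i$ via priority-$2$ transitions with priority-$1$ bridges between blocks") is reversed relative to what can work: if within-block edges are priority $2$, then a short cycle inside the common core already has maximum priority $2$, which is \emph{even}, contradicting your intent that the overlap case certify $\mathrm{OddCycles}_{n,2}$. The paper uses priority $1$ inside blocks and priority $2$ on the separators $\#_j = (n'+j,2)$. Second, you treat the number of rounds $r$ and the number of parties $k$ as independent knobs, but they must be coupled: in the paper both are $\Theta(t/n)$ (rounds $= k/5$, parties $= k$), because the number of rounds is dictated by how many disjoint small sets $v(f^r)$, each of size $\le 2n'/k$, can be packed into $[n']$. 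Optimizing $k$ and $r$ separately, as you suggest, does not produce a coherent protocol cost vs. lower-bound comparison.
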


 Theorem \ref{main_theorem}, however, has no restrictions on $t$, unlike Theorem \ref{restated}. Nevertheless,  it is easy to see that Theorem \ref{restated} implies Theorem \ref{main_theorem}. For $t > \frac{n^{5/4}}{10^3}$ the lower bound of Theorem \ref{main_theorem} is just constant, and the constant lower bound is obvious. Next, theorem \ref{main_theorem} for $n \le t < 8n$ follows from Theorem \ref{restated} for $t = 8n$ (with some constant loss in the exponent).  Finally, we observe that for $t < n$ there is no deterministic finite automaton separating  $\mathrm{EvenCycles}_{n,2}$ from $\mathrm{OddCycles}_{n,2}$ in time  $t$ at all. Indeed, a word $(1, 1) (2, 1) \ldots (n - 1, 1)$ is a prefix of a sequence from $\mathrm{EvenCycles}_{n,2}$ and also a prefix of a sequence from $\mathrm{OddCycles}_{n,2}$.

 Now we proceed to a proof of Theorem \ref{restated}. Assume for contradiction that for some $n$ and $8n \le t \le \frac{n^{5/4}}{10^3}$ there exists a deterministic finite automaton $\A$ with at most $Q$ states separating  $\mathrm{EvenCycles}_{n,2}$ from $\mathrm{OddCycles}_{n,2}$ in time  $t$. Here $Q$ is defined as follows 
\begin{equation}
\label{q_bound}
Q = 2^{\left\lceil \frac{n^5}{(10^{3} \cdot t)^4} \right\rceil}.
\end{equation}
  
To obtain a contradiction we construct two  words  on which $\mathcal{A}$ comes into the same state. One word is a prefix of a sequence from $\mathrm{EvenCycles}_{n,2}$. Moreover, its length is at least $t$.  The other word is a prefix of a sequence from $\mathrm{OddCycles}_{n,2}$. This  gives a  contradiction with the fact that $\A$ separates $\mathrm{EvenCycles}_{n,2}$ from $\mathrm{OddCycles}_{n,2}$ in time $t$.

To explain the construction
let us introduce some notation.  For a finite $X\subseteq \mathbb{N}$ denote:
$(X, 1) = (x_1, 1) (x_2, 1) \ldots (x_m, 1) \in (\mathbb{N} \times\{1\})^*,$
where $x_1, x_2, \ldots, x_m\in \mathbb{N}$ are such that $x_1 < x_2 < \ldots < x_m$ and $X = \{x_1, x_2, \ldots, x_m\}$. Next, for a word $w = (v_1, 1)\ldots (v_m, 1)\in(\mathbb{N}\times\{1\})^*$ denote $v(w) = \{v_1, \ldots, v_m\}$. I.e., $w \mapsto v(w)$ operation, loosely speaking, is the inverse to $X \mapsto (X, 1)$ operation.

 Set 
\begin{gather}
\label{parameters}
n^\prime = \lceil n/2\rceil,  \qquad k = 20 \cdot  \left\lfloor\frac{t}{n}\right\rfloor, \qquad
 \gamma = \frac{1}{k}, \qquad a = \lfloor n^\prime/k\rfloor,\\
\label{D_and_I}
\begin{aligned}
\mathcal{D} &= \left\{ (X_1, \ldots, X_k) \in \binom{[n^\prime]}{a}^k : X_1, X_2, \ldots, X_k \mbox{ are disjoint}\right\}, \\
\mathcal{I}&= \left\{ (Y_1, \ldots, Y_k) \in \binom{[n^\prime]}{a}^k :  \forall i, i^\prime \in \{1, \ldots, k\}\,\,  |Y_i\triangle Y_{i^\prime}| \le \gamma a\right\}.
\end{aligned}
\end{gather}
Note that $\mathrm{DISJ}^\prime_{k,\gamma}(n^\prime)$ is a problem to output $1$ on $\mathcal{D}$ and output $0$ on $\mathcal{I}$.

For a tuple $\overline{X} = (X_1, X_2, \ldots, X_k) \in \mathcal{D}$ 
let $<_{\overline{X}}$ be 
the linear order on $X_1 \cup X_2 \cup \ldots \cup X_k$ drawn on Figure~\ref{fig:order}. 
\begin{figure}[hh!]
\centering
\includegraphics{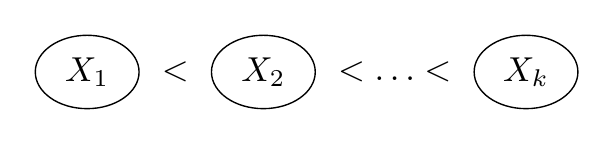}
\caption{$<_{\overline{X}}$ order}
\label{fig:order}
\end{figure}

Formally, we say that $p <_{\overline{X}} q$ 
if 
at least one of the following two conditions holds:
\begin{itemize}
\item $p\in X_i, q\in X_{i^\prime}$ for some $i,i^\prime\in[k], i < i^\prime$;
\item $p < q$ and $p, q\in X_i$ for some $i\in[k]$.
\end{itemize} 

Next, given $\overline{X} = (X_1, X_2, \ldots, X_k)\in\mathcal{D}$, let us say that a word $(v_1, l_1)\ldots (v_m, l_m) \in ([n]\times\{1, 2\})^*$ is $\overline{X}$-increasing if $v_1, v_2, \ldots, v_m \in X_1 \cup X_2 \cup \ldots \cup X_k$ and $v_1 <_{\overline{X}} v_2  <_{\overline{X}} \ldots  <_{\overline{X}} v_m$.

Finally, for $r\in\mathbb{N}$ let $\#_r$ denote a pair $(n^\prime + r, 2)$. We will 
use symbols $\#_r$ only for $r \le k/5 + 1$. 
It is easy to see from \eqref{parameters} and from the hypotheses of Theorem~\ref{restated} that
 $k = O(n^{1/4})$. This means that for any $r \le k/5 + 1$ it holds that $\#_r\in [n]\times \{1, 2\}$, i.e., $\#_r$ belongs to the input alphabet of $\A$. 

We are ready to formulate our main lemma.
\begin{lemma}
\label{long_inductive_lemma}
For some tuple $\overline{X} \in \mathcal{D}$ there are words $f^1, \ldots, f^{k/5}, g^1, \ldots, g^{k/5} \in([n^\prime]\times\{1\})^*$ satisfying 
the following conditions
\begin{itemize}
\item $v(f^1), \ldots, v(f^{k/5})$ are disjoint;
\item $g^1, \ldots, g^{k/5} \mbox{ are $\overline{X}$-increasing}$;
\item $|g^1| \ge 4n^\prime/7, \ldots, |g^{k/5}| \ge 4n^\prime/7$;
\item $\delta_{\mathcal{A}}(q_{start}, f^1\#_1 f^2 \#_2 \ldots f^{k/5} \#_{k/5}) 
= \delta_{\mathcal{A}}(q_{start}, g^1\#_1 g^2\#_2 \ldots g^{k/5}\#_{k/5})$.
\end{itemize}
Here $q_{start}$ is the initial state of $\A$. 
\end{lemma}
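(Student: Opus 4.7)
The plan is to construct the tuple $\overline{X}$ together with the pairs $(f^r, g^r)$ inductively in $r = 1, 2, \ldots, k/5$, maintaining the invariant
\[
\delta_{\mathcal{A}}(q_{start}, f^1 \#_1 \cdots f^r \#_r) \;=\; \delta_{\mathcal{A}}(q_{start}, g^1 \#_1 \cdots g^r \#_r).
\]
Before the induction starts I would fix $\overline{X} \in \mathcal{D}$ to be Kolmogorov-incompressible given a~description of $\mathcal{A}$ and the parameters $n, k, a$; since $|\mathcal{D}|$ is astronomically larger than $Q$, such an $\overline{X}$ exists by Proposition \ref{kolm_number}. The incompressibility of $\overline{X}$ will be crucial when we later plug the automaton's transition structure into a communication protocol for $\mathrm{DISJ}^\prime_{k,\gamma}(n^\prime)$.

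For the inductive step, let $q$ be the common state reached so far. On the ``$g$-side'' I consider the family of all $\overline{X}$-increasing words of length at least $4n^\prime/7$, and on the ``$f$-side'' the family of words whose vertex-set avoids all previously used $v(f^i)$ with $i<r$. Both families have cardinality $\gg Q$, so pigeonhole yields large monochromatic sub-families (words leading from $q$ to a single common state after reading $\#_r$). The heart of the argument is to force the two monochromatic target states to coincide. If they did not, one would extract — via the standard monochromatic-box covering — a short non-deterministic protocol for $\mathrm{DISJ}^\prime_{k,\gamma}(n^\prime)$ in which the parties publicly guess the common state together with a compact description of the trajectory through $\mathcal{A}$, and each party then locally verifies consistency with its own input. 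Theorem \ref{communication_lower_bound} then yields a lower bound on the protocol size that contradicts the assumed bound on $Q$, producing the desired matching $(f^r, g^r)$.

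The main obstacle I expect is the Kolmogorov-complexity bookkeeping across the $k/5$ iterations: each inductive step commits further information about $\overline{X}$ (the concrete choice of $f^r$ and $g^r$ is a function of~$\overline{X}$ and the automaton), slowly eroding its conditional incompressibility. Using Propositions \ref{conservation} and \ref{simple_prefix_free}, one has to track the number of bits extracted about $\overline{X}$ after each step and verify it never exceeds the margin supplied by Theorem \ref{communication_lower_bound}. The protocol-size lower bound of roughly $\gamma^2 n^\prime/(k\cdot 10^4) \asymp n^\prime/k^3$, combined with only $O(\log Q)$ bits leaking per round, is what lets the argument survive exactly $k/5$ rounds — which in turn, after plugging in the parameter choices of \eqref{parameters}, is compatible with the claimed $\exp(\Omega(n^5/t^4))$ lower bound on $Q$ and makes all the quantitative pieces fit together.
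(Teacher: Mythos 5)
Your sketch has the right silhouette — an induction over $r$ maintaining the state equality, a call to the communication lower bound for $\mathrm{DISJ}'_{k,\gamma}(n')$, and Kolmogorov‑complexity bookkeeping — but it inverts a structural choice that the paper relies on, and it leaves the crux of that bookkeeping unexplained.

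First, $\overline{X}$ is not chosen by incompressibility in the paper. It is chosen as a concrete witness that a specific non‑deterministic protocol $\mathcal{P}$ (in which parties guess a state $q_0$ and a \emph{set} $U$ of small Kolmogorov complexity, then send the states $\delta_\A(q_{i-1},(X_i\setminus U,1))$ one by one) fails to compute $\mathrm{DISJ}'_{k,\gamma}(n')$. Because $\mathcal{P}$ always outputs $0$ on $\mathcal{I}$ and is too cheap by Theorem~\ref{communication_lower_bound}, there must be some $\overline{X}\in\mathcal{D}$ on which it outputs $0$ for every guess — and that is exactly Proposition~\ref{overline_X}. The Kolmogorov‑complexity constraint lives on $U$, not on $\overline{X}$; this matters because the protocol's non‑deterministic guess must range only over low‑complexity $U$'s, so the universal quantifier in Proposition~\ref{overline_X} reaches precisely those $U$'s that can arise during the induction. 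Your version — pick $\overline{X}$ incompressible and watch its complexity ``erode'' — has no corresponding mechanism tying the choice of $\overline{X}$ to the protocol's failure set, and the ``erosion'' is measuring the wrong object.

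Second, and more seriously, your sketch never explains how to describe $f^r$ in $O(\log Q)$ bits given the earlier $f^j$'s, yet this is the step that makes the whole induction close. The paper's observation is that the brute‑force search for a matching $(Y_1,\dots,Y_k)\in\mathcal{I}$ does not need $\overline{X}$ at all: it only needs the single state $q=\delta_\A(q_0,(X_1\setminus U,1)\dots(X_k\setminus U,1))$, and that state is a $\log_2 Q$‑bit description. This is formalized by the algorithm $\mathbf{ALG_1}$, which takes $(n,t,\A,(f^1,\dots,f^{r-1}))$ plus $q$ and outputs $f^r$. Without that observation, ``pigeonhole yields large monochromatic sub‑families'' does not tell you how to select, among possibly many matching $f^r$'s, one whose conditional complexity is small — and so you cannot certify that $C(U\mid n,t,\A)\le k\log_2 Q$ still holds at the next round, which is what lets you keep applying Proposition~\ref{overline_X}. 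The non‑computability of Kolmogorov complexity makes this non‑trivial: a brute‑force search over $\mathcal{D}$ for a good $\overline{X}$ is off the table precisely because you cannot enumerate the low‑complexity $U$'s. So as written, the proposal has a genuine gap at the point where the two halves (communication lower bound and complexity bookkeeping) are supposed to interlock.
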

Let us explain how Lemma \ref{long_inductive_lemma} implies Theorem \ref{restated}.
 Take $\overline{X}\in\mathcal{D}$ and $f^1, \ldots, f^{k/5}, g^1, \ldots, g^{k/5} \in ([n^\prime]\times \{1\})^*$ satisfying Lemma \ref{long_inductive_lemma}. 
To obtain a contradiction it is enough to show that
\begin{align}
\label{odd_word}
f^1 \#_1 f^2 \#_2 \ldots f^{k/5} \#_{k/5} \mbox{ is a prefix of a word from } \mathrm{OddCycles}_{n,d},\\
\label{even_word}
g^1 \#_1 g^2 \#_2 \ldots g^{k/5} \#_{k/5} \mbox{ is a prefix of a word from } \mathrm{EvenCycles}_{n,d}.
\end{align}
Indeed, define
$$q^\prime = \delta_{\mathcal{A}}(q_{start}, f^1\#_1 f^2 \#_2 \ldots f^{k/5} \#_{k/5}) 
= \delta_{\mathcal{A}}(q_{start}, g^1\#_1 g^2\#_2 \ldots g^{k/5}\#_{k/5}).$$
By \eqref{odd_word} we have $q^\prime \neq q_{accept}$. On the other hand 
 the length of $g^1 \#_1 g^2 \#_2 \ldots g^{k/5} \#_{k/5}$ is at least $(k/5) \cdot (4n^\prime/7)$. By  \eqref{parameters} the last expression is at least   $4(\frac{t}{n} -1) \cdot \frac{2n}{7}$. In turn, from the formulation of Theorem~\ref{restated} we know that $t\ge 8n$. This implies that the length of $g^1 \#_1 g^2 \#_2 \ldots g^{k/5} \#_{k/5}$  is at least $t$. Due to \eqref{even_word} this means that $q^\prime = q_{accept}$, contradiction.

Let us at first show \eqref{odd_word}. Consider the following graph $G_{odd}$ (see Figure \ref{fig:odd_graph}). 
\begin{figure}[h!]
\centering
\includegraphics{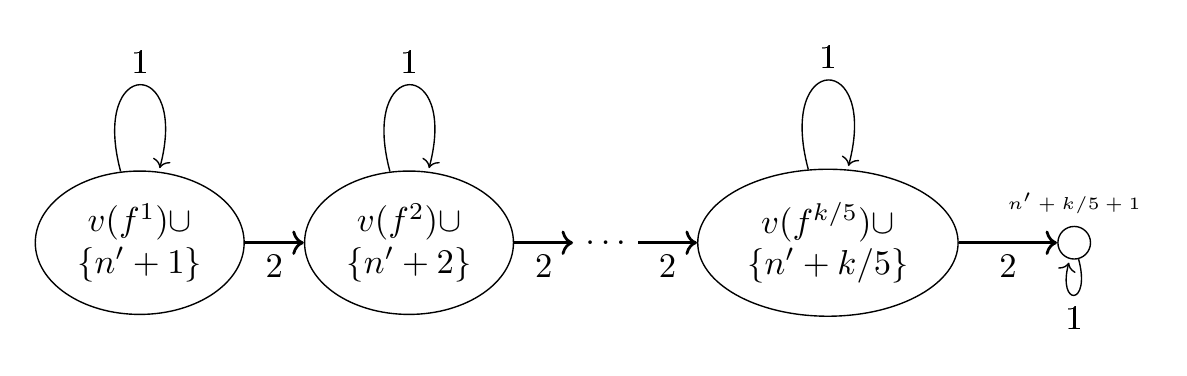}
\caption{A graph for  $f^1 \#_1 f^2 \#_2 \ldots f^{k/5} \#_{k/5}$. }
\label{fig:odd_graph}
\end{figure}

Nodes of this $G_{odd}$ are elements of $$v(f^1) \cup v(f^2) \cup \ldots \cup v(f^{k/5}) \cup \{n^\prime + 1, \ldots, n^\prime + k/5 + 1\}.$$
By Lemma \ref{long_inductive_lemma} sets $v(f^1), \ldots, v(f^{k/5})$ are disjoint subsets of $[n^\prime]$.  Let us specify edges of $G_{odd}$. First of all, for each $j\in[k/5]$ we draw all possible edges between nodes from $v(f^j) \cup \{n^\prime + j\}$ (including loops), each with priority~$1$.  Next, for all $j  < k/5$ we draw all edges  that start at a node from $v(f^j) \cup \{n^\prime + j\}$ and end at a node from $v(f^{j + 1}) \cup \{n^\prime + j + 1\}$,   each  with priority $2$. We also draw all edges that start at a node from $v(f^{k/5}) \cup \{n^\prime + k/5\}$ and end at $\{n^\prime + k/5 + 1\}$, each again with priority $2$. Finally, draw a loop
 at $n^\prime + k/5 + 1$ with priority~$1$ (we add this last loop to ensure that each node of $G_{odd}$ has at least one out-going edge).

It is easy to see from the construction that $G_{odd}$ is an odd game graph with at most $n$ nodes. Moreover, $f^1 \#_1 f^2 \#_2 \ldots f^{k/5} \#_{k/5}$ encodes a path in $G_{odd}$. Indeed, we move for some time in $v(f^1)$, then  through $n^\prime + 1$ we go to $v(f^2)$ and so on.  Thus \eqref{odd_word} is proved.

For  \eqref{even_word} it is extremely important that for some tuple $\overline{X}\in\mathcal{D}$ words  $g^1, \ldots, g^{k/5}$ are all $\overline{X}$-increasing. To see why, consider any even game graph $G$ with 2 priorities. If we remove all edges of priority $2$, we obtain an acyclic graph. Let $T$ be a topological ordering of the remaining graph. If we move in $G$ using only edges of priority $1$, then nodes we visit should increase in $T$. It is reflected in a fact that $g^1 \#_1 g^2 \#_2 \ldots g^{k/5} \#_{k/5}$ is split by $\#_1, \ldots, \#_{k/5}$ into $\overline{X}$-increasing words.

Now, to show \eqref{even_word} we define another graph, $G_{even}$ (see Figure \ref{fig:even_graph}).
\begin{figure}[h!]
\centering
\includegraphics{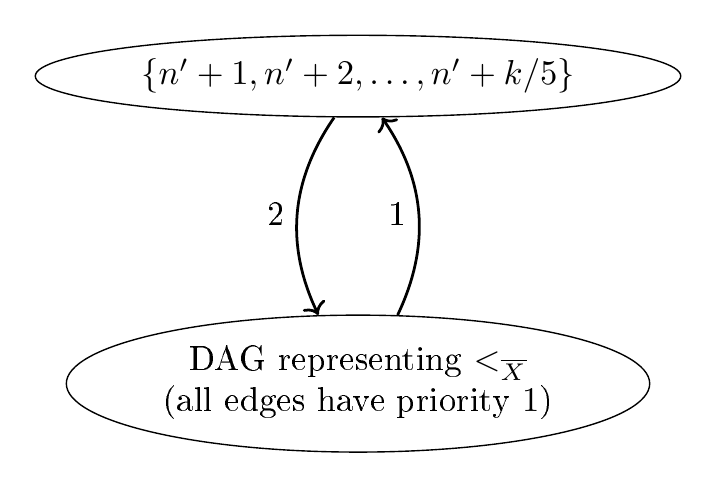}
\caption{A graph for $g^1 \#_1 g^2 \#_2 \ldots g^{k/5} \#_{k/5}$. }
\label{fig:even_graph}
\end{figure}
Its nodes are elements of $X_1\cup X_2 \cup \ldots \cup X_k \cup \{n^\prime + 1, \ldots, n^\prime + k/5\}$, where $\overline{X} = (X_1, X_2, \ldots, X_k)$. Next, let us specify edges of $G_{even}$. For all $u, v\in X_1 \cup X_2 \cup\ldots \cup X_k$ satisfying $u <_{\overline{X}} v$ we add an edge with priority $1$ from $u$ to $v$. Moreover, we draw all edges between $X_1\cup X_2 \cup \ldots \cup X_k$ and $\{n^\prime + 1, \ldots, n^\prime + k/5\}$ (in both directions). In particular, this ensures that each node of $G_{even}$ has at least one out-going edge. 
We assign  priority $1$ to the edges starting at $X_1\cup X_2 \cup \ldots \cup X_k$ and priority~$2$ to the edges starting in $\{n^\prime + 1, \ldots, n^\prime + k/5\}$.

Note that once we delete all edges with priority $2$ from $G_{even}$, we obtain an acyclic graph. Hence $G_{even}$ is an even game graph with at most $n$ nodes. On the other hand, since $g^1, g^2, \ldots, g^{k/5}$ are $\overline{X}$-increasing, it is easy to see that  $g^1 \#_1 g^2 \#_2 \ldots g^{k/5} \#_{k/5}$ corresponds to a path of $G_{even}$. Indeed, each $g^i$ represents a path at the bottom of the Figure \ref{fig:even_graph}. Once we reach the end of $g_i$, we go up with priority $1$. Then after reading $\#_j$, we go down.
  Thus \eqref{even_word} is proved. 

\subsection{Proof sketch of Lemma \ref{long_inductive_lemma}}
Here we give a proof sketch of Lemma \ref{long_inductive_lemma}. The proof is by induction. I.e., we first construct $f^1$ and $g^1$, then $f^2$ and $g^2$ and so on. 
A tuple 
 $\overline{X} = (X_1, \ldots, X_k)$, for which conditions of Lemma \ref{long_inductive_lemma} hold,
comes from the following 
\begin{proposition}
\label{overline_X}
 There exists $\overline{X} = (X_1, \ldots, X_k) \in\mathcal{D}$ such that for every state $q_0$ of $\A$ and for every $U \subseteq [n^\prime]$ satisfying
$C(U|n,t, \A) \le k\log_2(Q)$,
there exists $(Y_1, \ldots, Y_k) \in\mathcal{I}$ such that:
$$\delta_{\A}(q_0, (X_1\setminus U, 1) \ldots (X_k\setminus U, 1)) = \delta_\A (q_0, (Y_1\setminus U, 1)\ldots (Y_k\setminus U, 1)).$$ 
\end{proposition}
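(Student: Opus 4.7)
The plan is to argue by contradiction: assuming Proposition \ref{overline_X} fails, I will build a non-deterministic $k$-party protocol for $\mathrm{DISJ}^\prime_{k,\gamma}(n^\prime)$ whose communication cost is too small to be consistent with Theorem \ref{communication_lower_bound}. So suppose that for every $\overline{X} = (X_1, \ldots, X_k) \in \mathcal{D}$ there exist a state $q_0(\overline{X})$ of $\mathcal{A}$ and a set $U(\overline{X}) \subseteq [n^\prime]$ with $C(U(\overline{X}) \mid n, t, \mathcal{A}) \le k \log_2 Q$ for which $\delta_{\mathcal{A}}(q_0(\overline{X}), (X_1 \setminus U(\overline{X}), 1) \ldots (X_k \setminus U(\overline{X}), 1))$ differs from $\delta_{\mathcal{A}}(q_0(\overline{X}), (Y_1 \setminus U(\overline{X}), 1) \ldots (Y_k \setminus U(\overline{X}), 1))$ for every $(Y_1, \ldots, Y_k) \in \mathcal{I}$.

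I would then define a non-deterministic protocol $\mathcal{P}$ whose deterministic members are indexed by tuples $(q_0, U, p_1, \ldots, p_k)$, where $q_0$ is a state of $\mathcal{A}$, $U \subseteq [n^\prime]$ satisfies $C(U \mid n, t, \mathcal{A}) \le k \log_2 Q$, and the intermediate states meet the validity condition that no $(Y_1, \ldots, Y_k) \in \mathcal{I}$ has $\delta_{\mathcal{A}}(p_{i-1}, (Y_i \setminus U, 1)) = p_i$ for every $i$ (taking $p_0 = q_0$). On input $(X_1, \ldots, X_k)$, the deterministic protocol indexed by this tuple proceeds sequentially: party $i$ computes $\delta_{\mathcal{A}}(p_{i-1}, (X_i \setminus U, 1))$ locally (using her own $X_i$ and the public $q_0, U, p_{i-1}$) and writes one bit confirming whether the result equals $p_i$; the protocol accepts iff all $k$ bits are $1$. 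Correctness follows directly: for $\overline{X} \in \mathcal{D}$, taking $q_0 = q_0(\overline{X})$, $U = U(\overline{X})$, and $p_i$ equal to the state actually reached by $\mathcal{A}$ after reading the first $i$ blocks gives a valid index (otherwise some $\overline{Y} \in \mathcal{I}$ traversing these states would reach the same $p_k$, contradicting the hypothesis), on which the corresponding protocol accepts; for $(Y_1, \ldots, Y_k) \in \mathcal{I}$, an accepting protocol would exhibit $\overline{Y}$ traversing its indexed $p_1, \ldots, p_k$, contradicting the validity of that index.

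To derive the contradiction I would bound $CC(\mathcal{P})$. Proposition \ref{kolm_number} yields at most $2 \cdot 2^{k \log_2 Q}$ admissible sets $U$; multiplying by $Q$ choices of $q_0$ and at most $Q^k$ choices of $(p_1, \ldots, p_k)$ gives $|\mathcal{P}| \le 2 Q^{2k+1}$, hence $\lceil \log_2 |\mathcal{P}| \rceil \le (2k+1) \log_2 Q + 2$; combined with the $k$ bits that each deterministic protocol sends, $CC(\mathcal{P}) \le (2k+1) \log_2 Q + k + 2$. Substituting $\log_2 Q = \lceil n^5/(10^3 t)^4 \rceil$ from \eqref{q_bound} and $k = 20 \lfloor t/n \rfloor$, $\gamma = 1/k$, $n^\prime = \lceil n/2 \rceil$ from \eqref{parameters}, both $CC(\mathcal{P})$ and the lower bound $\gamma^2 n^\prime/(10^4 k) - 2 \log_2 n^\prime$ of Theorem \ref{communication_lower_bound} come out proportional to $n^4/t^3$, but the constants in \eqref{q_bound} are calibrated so that the former is strictly smaller than the latter, once the promise $k/\gamma \le \sqrt{n^\prime}/100$ is checked — it reduces under $t \le n^{5/4}/10^3$ to $400 (t/n)^2 \le \sqrt{\lceil n/2 \rceil}/100$, which is routine. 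The main obstacle is precisely this numerical bookkeeping; conceptually the argument is clean because a run of $\mathcal{A}$ on $(X_1 \setminus U, 1) \ldots (X_k \setminus U, 1)$ factors into $k$ independent one-party transitions, each verifiable locally with one bit.
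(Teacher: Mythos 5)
Your proposal is correct and follows essentially the same route as the paper's proof: both exploit the fact that a run of $\mathcal{A}$ on the concatenation $(X_1\setminus U, 1)\cdots(X_k\setminus U, 1)$ factors into $k$ stages, one per party, and both convert a putative failure of the proposition into a non-deterministic protocol for $\mathrm{DISJ}^\prime_{k,\gamma}(n^\prime)$ of cost roughly $(2k+1)\log_2 Q$, which is then checked against Theorem~\ref{communication_lower_bound}.

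The one structural difference is where the intermediate states live. In the paper's protocol, parties guess only $(q_0, U)$ and then actually \emph{communicate} the successive states $q_1,\dots,q_k$ (each in $\log_2 Q$ bits); the output rule then asks whether any $\overline{Y}\in\mathcal{I}$ reaches the \emph{same final} $q_k$. You instead push the whole tuple of intermediate states $(p_1,\dots,p_k)$ into the non-deterministic guess, have each party write a single confirmation bit, and impose the stronger validity requirement that no $\overline{Y}\in\mathcal{I}$ trace through the \emph{entire sequence} $p_1,\dots,p_k$. Both variants are sound and your correctness argument (the actual run of a $\mathcal{D}$-input yields a valid index since otherwise some $\overline{Y}\in\mathcal{I}$ would reach the same final state; an $\mathcal{I}$-input accepted by some valid index would trace through that index's states, contradicting validity) is right. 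The communication bookkeeping comes out the same to leading order: you pay $k\log_2 Q$ for guessing $(p_1,\dots,p_k)$ plus $k$ confirmation bits, the paper pays $k\log_2 Q$ in actual communication, and since $k\ge 160$ is dominated by $k\log_2 Q$ (as $\log_2 Q$ is super-constant by Proposition~\ref{simple_n_bound}), your $(2k+1)\log_2 Q + k + O(1)$ is also bounded by $3k\log_2 Q$ and the rest of the numerical verification goes through as you sketched. In short, this is the same argument, repackaged with the states moved from the transcript into the certificate; it buys nothing quantitatively but is a perfectly valid presentation.
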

We derive this proposition from Theorem~\ref{communication_lower_bound} (a~lower bound for the problem $\mathrm{DISJ}^\prime_{k,\gamma}(n^\prime)$). 

Now, assume that $f^1, \ldots, f^{r - 1}, g^1, \ldots, g^{r - 1}$ satisfying Lemma \ref{long_inductive_lemma} are already constructed for some $r \le k/5$. Note that $f^1\#_1\ldots f^{r - 1}\#_{r -1}$ and $g^1\#_1\ldots g^{r-1} \#_{r-1}$ lead $\A$ into to the same state $q_0$. We shall construct $f^{r}, g^{r}\in([n^\prime]\times\{1\})^*$ satisfying the following conditions:
\begin{description}
\item[\textup{(a)}] $v(f^{r})$ is disjoint with $U = v(f^1) \cup v(f^2) \cup \ldots \cup v(f^{r-1})$;
\item[\textup{(b)}] $g^{r}$ is long enough (more precisely, its length should be at least $4n^\prime/7$) and $g^{r}$ is $\overline{X}$-increasing;
\item[\textup{(c)}] $\delta_\A(q_0, f^{r}) = \delta(q_0, g^{r})$.
\end{description}
To do so we apply Proposition \ref{overline_X} to $q_0$ and $U$ and set
$$f^{r} = (Y_1\setminus U, 1)\ldots (Y_k\setminus U, 1), \qquad g^{r} =  (X_1\setminus U, 1) \ldots (X_k\setminus U, 1),$$
where $(Y_1, \ldots, Y_k)\in\mathcal{I}$ is such that 
$$\delta_{\A}(q_0, (X_1\setminus U, 1) \ldots (X_k\setminus U, 1)) = \delta_\A (q_0, (Y_1\setminus U, 1)\ldots (Y_k\setminus U, 1)).$$ 
Now,  \textup{(a)}, \textup{(c)} and the second part of \textup{(b)}  immediately follow from the construction. Some explanation is needed only for the first part of \textup{(b)}. Recall that $(Y_1, \ldots, Y_k)\in\mathcal{I}$, which means that  $Y_1, Y_2, \ldots, Y_k$ are highly intersecting. This implies that $v(f^{r})$ is rather small, namely of size at most $2n^\prime/k$. I.e., each time we do an induction step, the size of $U$ increases by at most $2n^\prime/k$. 
Since the number of  increment steps is $k/5$, the size of $U$ is at most $2n^\prime/5$ at any moment. 
Now, recall that $g^{r} =  (X_1\setminus U, 1) \ldots (X_k\setminus U, 1)$ and $X_1, \ldots, X_k$ are disjoint $\lfloor n^\prime/k\rfloor$-elements subsets of $[n^\prime]$. This means that the length of $g^{r}$ is at least $k\cdot\lfloor n^\prime/k\rfloor - |U| \ge n^\prime - k - 2n^\prime/5 > 4n^\prime/7$.

The only remaining problem
is to show that Proposition \ref{overline_X} can indeed be applied to $U$. I.e., we have to ensure that the Kolmogorov complexity  of $U$ given $n, t$ and $\A$ is small.

Note that $U = v(f^1) \cup v(f^2) \cup\ldots \cup v(f^{r-1})$ is a function of $f^1, \ldots, f^{r-1}$. We will explain how to add a new $f^{r}$ in such a way that complexity of $U$ increases 
by approximately $\log_2(Q)$ bits. This guarantees that complexity of $U$ is at most   $\approx (k/5) \cdot \log_2(Q)$ at any moment.

So, we need a way to describe $f^{r}$ in just $\log_2(Q)$ bits
assuming that $f^1, \ldots, f^{r-1}$ (and also $n, t, \A$) are given. Recall how $f^{r}$ was constructed. Namely, note that $f^{r}$ is a function of $(Y_1 , \ldots, Y_k)$ and $U$. In turn, $U$ is a function of $f^1, \ldots, f^{r-1}$, so we only have a problem with $(Y_1, \ldots, Y_k)$. If we knew $\overline{X} = (X_1, \ldots, X_k)$, satisfying Proposition \ref{overline_X}, we could find $(Y_1, \ldots, Y_k)$ just by the brute-force search over $\mathcal{I}$. Indeed, first we compute $q_0 = \delta_\A(q_{start}, f^1\#_1 f^2\#_2\ldots f^{r-1}\#_{r - 1})$ (this yet does not require knowing $(X_1, \ldots, X_k)$). Then by emulating $\A$  we can find some $(Y_1, \ldots, Y_k)\in \I$ satisfying $\delta_{\A}(q_0, (X_1\setminus U, 1) \ldots (X_k\setminus U, 1)) = \delta_\A (q_0, (Y_1\setminus U, 1)\ldots (Y_k\setminus U, 1)).$

 However, it is unclear how to describe $\overline{X}$ in about $\log_2(Q)$ bits (even given $n,t$ and $\A$). One could argue that $\overline{X}$ can also be found by a brute-force search over $\mathcal{D}$. Nevertheless, this requires listing all $U$ of small Kolmogorov complexity. Unfortunately, Kolmogorov complexity is not computable.

The key observation here is that in the brute-force search algorithm for finding $(Y_1, \ldots, Y_k)$ described above we never used $(X_1, \ldots, X_k)$ as a whole. Instead, we only used
$q = \delta_{\A}(q_0, (X_1\setminus U, 1) \ldots (X_k\setminus U, 1))$ (for each $(Y_1, \ldots, Y_k)\in\I$ we check whether $q =  \delta_\A (q_0, (Y_1\setminus U, 1)\ldots (Y_k\setminus U, 1))$). Hence we can just give to the algorithm a $\log_2(Q)$-bit description of $q$. In this way we get a conditional $\log_2(Q)$-bit description of $(Y_1, \ldots, Y_k)$ given $f^1, \ldots, f^{r-1}$ and $n,t,\A$, as required.

\medskip

In the end of this subsection we provide  more details of the proof of Proposition~\ref{overline_X}. We define the following non-deterministic protocol involving $\A$.

\textbf{Description of the protocol $\mathbf{\mathcal{P}}$}. In this protocol there are $k$ parties and the $i${th} party receives a set $X_i\in\binom{[n^\prime]}{a}$.  At the beginning parties non-deterministically guess a state $q_0$ of $\mathcal{A}$ and a set $U\subseteq [n^\prime]$  satisfying $C(U|n,t,\mathcal{A}) \le k\log_2(Q)$.
 Then parties communicate in $k$ stages.   Stages are numbered from $1$ to $k$. At the $i$th stage the $i$th party writes  $\log_2(Q)$ bits specifying a state of $\mathcal{A}$ on the blackboard.  
Namely,
\begin{enumerate}
\item[] at the 
1st stage
the 1st party writes $q_1 = \delta_\A(q_0, (X_1\setminus U, 1))$;
\item[] at the 2nd stage the 2nd party writes $q_2 = \delta_\A(q_1, (X_2\setminus U, 1))$;
$$\vdots$$
\item[] at the $k$th stage the $k$th party writes $q_k = \delta_\A(q_{k - 1}, (X_k\setminus U, 1))$.
\end{enumerate}
Observe that
$$q_k = \delta_A(q_0, (X_1\setminus U, 1) (X_2\setminus U, 1)\ldots (X_k\setminus U, 1)).$$

After performing these $k$  stages 
parties finish communication. 
It remains to explain how the output of the protocol  $\mathcal{P}$ is computed. Parties output $1$ if and only if there is no $(Y_1, \ldots, Y_k)\in\mathcal{I}$ such that
$$q_k = \delta_A(q_0, (Y_1\setminus U, 1) (Y_2\setminus U, 1)\ldots (Y_k\setminus U, 1)).$$
In other words, parties output $1$ if and only if there is no input from $\mathcal{I}$ on which $\mathcal{P}$ produces the same $q_k$ for a guess $(q_0, U)$. 
\textbf{Description of the protocol is finished.}

It is easy to bound $CC(\mathcal{P})$. Parties communicate exactly $k\log_2(Q)$ bits. We should also add the number of bits needed to specify a non-deterministic guess of $\mathcal{P}$. For that we only need about $(k + 1)\log_2(Q)$ bits --- this is because the number of sets of complexity at most $k\log_2(Q)$ is smaller than $2^{k\log_2(Q) + 1}$. 
After that some tedious calculations show that with the choice of parameters as above $CC(\mathcal{P})$ is smaller than the non-deterministic communication complexity of $\mathrm{DISJ}^\prime_{k,\gamma}(n^\prime)$  (here we use the bound of Theorem \ref{communication_lower_bound}). This means that $\mathcal{P}$ does not compute  $\mathrm{DISJ}^\prime_{k,\gamma}(n^\prime)$. On the other hand, it is clear from the construction that $\mathcal{P}$ always outputs $0$ on any input from $\mathcal{I}$. Hence there should be a tuple $\overline{X} \in\mathcal{D}$ on which $\mathcal{P}$ 
 outputs $0$ for any possible non-deterministic guess. 
This is exactly what is needed from $\overline{X}$ in Proposition \ref{overline_X}.

We give a formal proof of Lemma \ref{long_inductive_lemma} in the next subsection.

\subsection{Proof of Lemma \ref{long_inductive_lemma}}
\label{proof_of_main_lemma}

To simplify the analysis below we need the following lower bound on separating $\mathrm{EvenCycles}_{n,2}$ from $\mathrm{OddCycles}_{n,2}$ without any time restrictions. 

\begin{proposition}
\label{simple_n_bound}
Any deterministic finite automaton separating $\mathrm{EvenCycles}_{n,2}$ from\linebreak $\mathrm{OddCycles}_{n,2}$ has at least $n + 1$ states.
\end{proposition}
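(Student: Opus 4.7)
My plan is a fooling-word argument by contradiction. Assume $\A$ has at most $n$ states. I will first identify a canonical $\mathrm{EvenCycles}_{n,2}$ word whose ``accept time'' is forced to be small, and then exhibit an $\mathrm{OddCycles}_{n,2}$ word that agrees with it up to that time, producing the required contradiction.

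Take $w^{\mathrm{even}} = (1,2)(2,2)\cdots(n-1,2)(n,2)(n,2)\cdots$, the infinite path in the even game graph with priority-$2$ edges $1\to 2\to\cdots\to n$ and a priority-$2$ self-loop at $n$; clearly $w^{\mathrm{even}}\in\mathrm{EvenCycles}_{n,2}$. Let $q_k = \delta_{\A}(q_{start}, w^{\mathrm{even}}_1\cdots w^{\mathrm{even}}_k)$ and let $k_0$ be the smallest $k$ with $q_k = q_{accept}$. A short pigeonhole argument shows $k_0\le n-1$: if none of $q_0,\ldots,q_{n-1}$ were $q_{accept}$, these $n$ values would live in the at most $n-1$ non-accepting states, so $q_i=q_j$ for some $i<j\le n-1$; by determinism the trajectory on $w^{\mathrm{even}}$ would then be eventually periodic, avoiding $q_{accept}$ forever, contradicting $w^{\mathrm{even}}\in\mathrm{EvenCycles}_{n,2}$.

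The crux is to build the fooling word $w = w^{\mathrm{even}}_1\cdots w^{\mathrm{even}}_{k_0}\cdot (n,1)(n,1)(n,1)\cdots$. Since $k_0\le n-1<n$, the new edge $(k_0,n)$ is not a self-loop and is compatible with the priority-$1$ self-loop at $n$ coming from the tail. The forced edges of $w$ are the priority-$2$ path $1\to 2\to\cdots\to k_0\to n$ together with the priority-$1$ self-loop $(n,n)$: the priority-$2$ subgraph is acyclic and the unique cycle of the whole graph is the self-loop, which is odd, so $w\in\mathrm{OddCycles}_{n,2}$. But $\delta_{\A}(q_{start}, w_1\cdots w_{k_0}) = q_{k_0}=q_{accept}$, contradicting the separator property that $\A$ never reaches $q_{accept}$ on an $\mathrm{OddCycles}$ word. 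The degenerate case $k_0=0$ means $q_{start}=q_{accept}$ and is immediately refuted by any odd self-loop word such as $(n,1)(n,1)\cdots\in\mathrm{OddCycles}_{n,2}$. The only delicate point is choosing the tail so that it neither conflicts in priority with $w^{\mathrm{even}}_1\cdots w^{\mathrm{even}}_{k_0}$ nor closes an even cycle with the committed priority-$2$ edges; using the node $n$ (which does not appear among $v_1,\ldots,v_{k_0}$ precisely because $k_0<n$) with a priority-$1$ self-loop accomplishes both at once.
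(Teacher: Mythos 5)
The key step in your argument is the claim that $k_0\le n-1$, and the pigeonhole reasoning you give for it does not hold up. You argue that if $q_0,\ldots,q_{n-1}$ all avoid $q_{accept}$, then some $q_i=q_j$ with $i<j\le n-1$, and ``by determinism the trajectory on $w^{\mathrm{even}}$ would then be eventually periodic.'' But $q_i=q_j$ only forces the trajectory to repeat when the subsequent \emph{input symbols} also repeat, and on $w^{\mathrm{even}}=(1,2)(2,2)\cdots(n-1,2)(n,2)(n,2)\cdots$ they do not: $w^{\mathrm{even}}_{i+1}=(i+1,2)$ while $w^{\mathrm{even}}_{j+1}=(j+1,2)$ (or $(n,2)$), so $q_{i+1}=\delta_\A(q_i,(i+1,2))$ and $q_{j+1}=\delta_\A(q_i,(j+1,2))$ need not coincide. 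The trajectory only becomes forced-periodic once the input itself becomes constant, i.e.\ from position $n$ on, which gives no useful bound on $k_0$ from the number of states alone.

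In fact $k_0\le n-1$ is \emph{false} for every valid separator, not just hard to prove: $(1,2)(2,2)\cdots(n-1,2)$ is already a prefix of an $\mathrm{OddCycles}_{n,2}$ word (path $1\to 2\to\cdots\to n$ with priority $2$, plus a priority-$1$ self-loop at $n$), so $q_0,\ldots,q_{n-1}\neq q_{accept}$ and hence $k_0\ge n$. Thus the fooling word you try to build, $(1,2)\cdots(k_0,2)(n,1)(n,1)\cdots$, would have $k_0\ge n$, and then $(1,2)\cdots(k_0,2)$ is \emph{not} an $\mathrm{OddCycles}$ prefix (any continuation closes an all-priority-$2$ cycle), so the construction cannot go through. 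The paper's proof uses the same observation $q_0,\ldots,q_{n-1}\neq q_{accept}$ but then goes in the other direction: from the repetition $q_i=q_j$ it builds an $\mathrm{EvenCycles}$ word (in the complete graph with all edges priority $2$, walk $1\to\cdots\to i$ and then loop on $i{+}1,\ldots,j$ forever); on that word the automaton cycles through the states $\{q_0,\ldots,q_{j-1}\}$ and therefore never reaches $q_{accept}$, which is the contradiction. Your strategy of forcing an $\mathrm{OddCycles}$ word to be accepted does not appear salvageable here.
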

\begin{proof}
Assume that a deterministic finite automaton $\B$ separates $\mathrm{EvenCycles}_{n,2}$  from $\mathrm{OddCycles}_{n,2}$. For $i = 0, 1, \ldots, n - 1$ define
$$q_i = \delta_\B(q_{start}, (1, 2) (2, 2)\ldots (i,2)),$$
where $q_{start}$ is the initial state of $\B$. Note that $(1, 2) (2, 2)\ldots (n - 1,2)$ is a prefix of a word from $\mathrm{OddCycles}_{n,2}$. Indeed, consider a graph which for $i\in [n - 1]$ has an edge from $i$ to $i + 1$ with priority $2$  and also has  loop with priority $1$ at node $n$. This means that $q_0 \neq q_{accept}$, $q_1 \neq q_{accept}$, $\ldots$, $q_{n - 1} \neq q_{accept}$. Now assume that $\B$ has at most $n$ states. Note that $q_0, q_1, \ldots, q_n$ are distinct from $q_{accept}$. It implies  that there are at most $n - 1$ possible values for each $q_0, q_1, \ldots, q_{n -1}$. Therefore 
there are $i, j\in \{0, 1, \ldots, n - 1\}$, $i < j$, such that 
$q_i = q_j$. Consider a graph $G$ with $n$ nodes which has all possible directed edges (including loops) and all of them have priority $2$.  Obviously, $G$ is an even game graph. Let $C_{i,j}$ be a cycle of $G$ obtained by going from $i + 1$ to $j$ and then back to $i + 1$ (in particular if $j = i + 1$, then $C_{i,j}$ is a loop at $j$). Consider an infinite path in $G$ which goes from $1$ to $i$ and then stays on $C_{i,j}$ forever. By definition, $\mathcal{B}$ should reach $q_{accept}$ on this path at some point. On the other hand, it is easy to see that the set of 
states visited by $\mathcal{B}$ 
on this path is $\{q_0, q_1, \ldots, q_i, \ldots, q_{j - 1}\}$. 
\end{proof} 

Recall that $\A$ separates $\mathrm{EvenCycles}_{n,2}$ from $\mathrm{OddCycles}_{n,2}$ in time $t$ and has at most $Q$ states. From 
Proposition \ref{simple_n_bound} we get
\begin{equation}
\label{q_is_not_too_small}
Q \ge n + 1
\end{equation}
(for the rest of the proof we only need  the fact that $Q$ is super-constant). From the hypotheses of Theorem~\ref{restated}  it is easy to derive the following bound: 
\begin{equation}
\label{k_asymptotic}
k = O(n^{1/4}).
\end{equation}

Now let us prove Proposition \ref{overline_X}.
\begin{proof}[Proof of Proposition  \ref{overline_X}]
Let $\mathcal{P}$ be a non-deterministic communication protocol defined on page 15. First let us establish that
$CC(\mathcal{P})$ is smaller than the non-deterministic communication complexity of $\mathrm{DISJ}^\prime_{k,\gamma}(n^\prime)$.

Let us start with the upper bound on the communication complexity of $\mathcal{P}$. By Proposition~\ref{kolm_number} 
there are at most $Q \cdot 2^{k\log_2(Q) + 1}$ possible non-deterministic guesses in $\mathcal{P}$.
 After making a~guess, parties communicate exactly $k\log_2(Q)$ bits. Therefore:
\begin{align*}
CC(\mathcal{P}) &\le \log_2(Q) + k\log_2(Q) + 1  + k\log_2(Q) = \left(2k + 1\right) \log_2(Q) + 1. 
\end{align*} 
The last expression is at most 
$3k \log_2(Q)$. 
Indeed, $k = 20 \lfloor t/n\rfloor$ by \eqref{parameters} and $t \ge  8n$ by 
hypotheses
of Theorem \ref{restated}. Hence $k\ge 160$ and $\frac{5k}{2} \ge 2k + 1$.  Note also that $\log_2(Q)$ 
is super-constant by \eqref{q_is_not_too_small}. Thus
\begin{align*}
 \left(2k + 1\right) \log_2(Q) + 1 &\le \frac{5 k}{2} \cdot \log_2(Q) + 1\\ 
&\le 3k \log_2(Q).
\end{align*}
In this way we conclude
\begin{equation}
\label{cc_p}
CC(\mathcal{P}) \le 3k \log_2(Q).
\end{equation}

Let us verify that $\frac{k}{\gamma} \le \frac{\sqrt{n^\prime}}{100}$. Indeed, again by \eqref{parameters} and 
by the hypotheses of Theorem \ref{restated} 
we have:
$$\frac{k}{\gamma} = k^2 \le \frac{400  \cdot t^2}{n^2} \le \frac{400 \cdot \frac{n^{5/2}}{10^6}}{n^2} = \frac{400 \cdot \sqrt{2}}{10^6} \cdot \sqrt{\frac{n}{2}}  \le \frac{400 \cdot \sqrt{2}}{10^6} \cdot \sqrt{n^\prime} < \frac{\sqrt{n^\prime}}{100}.$$
Hence by Theorem \ref{communication_lower_bound} the non-deterministic communication complexity of $\mathrm{DISJ}^\prime_{k,\gamma}(n^\prime)$ is at least
$$\frac{\gamma^2 n^\prime}{10^4 \cdot k} - 2\log_2( n^\prime) \ge \frac{\gamma^2 n}{2 \cdot 10^4 \cdot k} - 2\log_2(n) \ge  \frac{\gamma^2 n}{3 \cdot 10^4 \cdot k}.$$
In the first inequality we use the definition of $n^\prime$ (see \eqref{parameters}).  The second inequality holds because  $\gamma^2 n/k = n/k^3 = \Omega(n^{1/4})$ by \eqref{k_asymptotic}.

Thus by \eqref{cc_p} it remains to show that:
$$\log_2(Q) < \frac{\gamma^2 n}{9 \cdot 10^4 \cdot k^2} = \frac{n}{9\cdot 10^4 \cdot k^4}.$$
The right hand side by definition of $k$ (see \eqref{parameters}) is at least $$\frac{n}{9\cdot 10^4 \cdot \left(\frac{20 t}{n}\right)^4} \ge \frac{n^5}{10^{11}\cdot  t^4}.$$ In turn, the left hand side by definition of $Q$ (see \eqref{q_bound}) is at most
$$\log_2(Q) = \left\lceil \frac{n^5}{(10^3 \cdot t)^4} \right\rceil < \frac{n^5}{10^{12} \cdot t^4} + 1 \le 2 \cdot \frac{n^5}{10^{12} \cdot t^4} <  \frac{n^5}{10^{11} \cdot t^4},$$
where the second inequality holds because $t^4 \le \frac{n^5}{10^{12}}$ due to the hypotheses of Theorem~\ref{restated}.
Thus the fact that $CC(\mathcal{P})$ is smaller than the non-deterministic communication complexity of $\mathrm{DISJ}^\prime_{k,\gamma}(n^\prime)$ is proved.

This means that $\mathcal{P}$ does not compute $\mathrm{DISJ}^\prime_{k,\gamma}(n^\prime)$. In turn,
obviously $\mathcal{P}$  outputs~$0$ on any input from $\mathcal{I}$ for any possible guess. 
This means that there is $\overline{X} = (X_1, \ldots, X_k) \in\mathcal{D}$ such that $\mathcal{P}$ outputs~$0$ on the input $\overline{X}$ for any guess.  It is easy to see that this   is equivalent to the statement of Proposition \ref{overline_X}.
\end{proof}

To complete the proof of Lemma~\ref{long_inductive_lemma}, we introduce the algorithm $\mathbf{ALG_1}$.

Description of $\mathbf{ALG}_1$ involves a lot of notation which resembles the one used above, but with subscript $1$. This is to avoid confusion and to stress that $\mathbf{ALG_1}$ is independent of any other parameters. The latter is quite important due to our usage of Kolmogorov complexity.

 An  input to $\mathbf{ALG_1}$ consists of two parts:
\begin{itemize}
\item $n_1, t_1\in\mathbb{N}$, a deterministic finite automaton $\mathcal{A}_1$ with input alphabet $[n_1]\times\{1, 2\}$ and a tuple $\alpha = (f^1, \ldots, f^{j})$, where   $f^1, \ldots, f^{j} \in ([n^\prime_1] \times \{1\})^*$ and  $j \ge 0$ (when $j = 0$, we assume that $\alpha$ is empty);
\item a binary word $q \in \{0, 1\}^{\log_2(Q_1)}$.
\end{itemize}
Here
$$n^\prime_1 = \lceil n_1/2 \rceil, \qquad Q_1 = 2^{\left\lceil \frac{n_1^5}{(10^{3} \cdot t_1)^4} \right\rceil}$$
(i.e., $n^\prime_1$ and $Q_1$ are defined in the same way as  $n^\prime$ and $Q$ in \eqref{parameters} and \eqref{q_bound}).  The algorithm $\mathbf{ALG_1}$ also sets $k_1 = 20\left\lfloor\frac{t_1}{n_1}\right\rfloor$, $\gamma_1 = 1/k_1$, $a_1 = \lfloor n^\prime_1/k_1\rfloor$ and
$$\mathcal{I}_1= \left\{ (Y_1, \ldots, Y_{k_1}) \in \binom{[n^\prime_1]}{a_1}^{k_1} :  \forall i, i^\prime \in \{1, \ldots, k_1\}\,\,  |Y_i\triangle Y_{i^\prime}| \le \gamma_1 a_1\right\}.$$
(this is similar 
to the definitions of
$k, \gamma, a$ and $\mathcal{I}$ in \eqref{parameters} and \eqref{D_and_I}).

The algorithm $\mathbf{ALG_1}$ interprets  $q$ as a state of $\mathcal{A}_1$ (if there is more than $Q_1$ states in $\A_1$, then $\mathbf{ALG_1}$ halts and outputs ``not found''). The algorithm $\mathbf{ALG_1}$ computes
 $$U = v(f^1)\cup v(f^2) \cup\ldots \cup v(f^j), \qquad q_0 = \delta_{\mathcal{A}_1}(q_{start,1}, f^1\#_{1,1} \ldots f^j \#_{j,1}).$$
Here $q_{start,1}$ is the initial state of $\mathcal{A}_1$ and $\#_{1,1} = (n^\prime_1 + 1, 2), \ldots, \#_{j,1} = (n^\prime_1 + j, 2)$.
 Then  $\mathbf{ALG_1}$ tries to find $(Y_1, \ldots, Y_{k_1})\in\mathcal{I}_1$ satisfying the following condition:
$$q = \delta_{\mathcal{A}_1}(q_0, (Y_1\setminus U,1) (Y_2\setminus U, 1)\ldots (Y_{k_1}\setminus U, 1)).$$
Once any such $(Y_1, \ldots, Y_{k_1})$ is found, the algorithm $\mathbf{ALG_1}$ outputs a word $f = (Y_1\setminus U, 1)\ldots (Y_{k_1}\setminus U, 1)$. If there is no such $(Y_1, \ldots, Y_{k_1})$ at all, $\mathbf{ALG_1}$ halts and outputs ``not found''. \textbf{Description of the algorithm $\mathbf{ALG_1}$ is finished}.

For the rest of the proof, we assume that $\overline{X} = (X_1, X_2, \ldots, X_k)$ is a tuple satisfying the conditions of Proposition \ref{overline_X}. By  Proposition \ref{overline_X} and by the definition of  $\mathbf{ALG_1}$  we get:

\begin{proposition}
\label{alg_feature}
Take any  $f^1, \ldots, f^j \in ([n^\prime]\times\{1\})^*$. Define $U = v(f^1)\cup v(f^2) \cup \ldots \cup v(f^j)$ and 
$$q_0 = \delta_\A(q_{start}, f^1 \#_1 f^2 \#_2 \ldots f^j\#_j), \qquad q = \delta_\A(q_0, (X_1\setminus U, 1) (X_2\setminus U, 1) \ldots (X_k \setminus U, 1)).$$
Assume that $C(U|n,t,\A) \le k\log_2(Q)$. Then
$$\mathbf{ALG_1}((n, t, \A, (f^1, \ldots, f^j)), q) = (Y_1\setminus U, 1) (Y_2\setminus U,1)\ldots (Y_k\setminus U, 1)$$
for some $(Y_1, \ldots, Y_k)\in\mathcal{I}$ satisfying $q = \delta_\A(q_0, (Y_1\setminus U, 1) (Y_2\setminus U, 1) \ldots (Y_k \setminus U, 1))$.
\end{proposition}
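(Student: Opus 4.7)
The plan is to reduce the claim to a direct application of Proposition~\ref{overline_X}. First, I would unpack the definition of $\mathbf{ALG_1}$ on the specified input: substituting $n_1 := n$, $t_1 := t$, $\A_1 := \A$, $\alpha := (f^1, \ldots, f^j)$ makes every auxiliary quantity computed inside the algorithm ($n^\prime_1$, $Q_1$, $k_1$, $\gamma_1$, $a_1$, $\I_1$) coincide with the globally fixed $n^\prime, Q, k, \gamma, a, \I$ defined in \eqref{parameters}--\eqref{D_and_I} and \eqref{q_bound}, and the separators $\#_{i,1}$ with $\#_i$. In particular, the sets $U$ and states $q_0$ that $\mathbf{ALG_1}$ computes from the input in its first phase are literally equal to the $U$ and $q_0$ named in the hypothesis.

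Second, I would verify that the hypotheses of Proposition~\ref{overline_X} are met by this pair $(q_0,U)$. The state $q_0$ is a state of $\A$ by definition of $\delta_\A$; the set $U = v(f^1) \cup \ldots \cup v(f^j)$ is a subset of $[n^\prime]$ because each $f^i$ lies in $([n^\prime] \times \{1\})^*$; and the Kolmogorov bound $C(U\mid n,t,\A) \le k\log_2(Q)$ is exactly assumed. Proposition~\ref{overline_X} therefore yields some tuple $(Y_1,\ldots,Y_k)\in\I$ satisfying
$$\delta_\A(q_0,(X_1\setminus U,1)\ldots(X_k\setminus U,1)) \;=\; \delta_\A(q_0,(Y_1\setminus U,1)\ldots(Y_k\setminus U,1)),$$
and the left-hand side equals $q$ by the definition of $q$ in the statement.

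Third, this is precisely what is needed to rule out the ``not found'' branch of $\mathbf{ALG_1}$: the brute-force search over $\I_1 = \I$ for a tuple whose image under $\delta_\A(q_0,\cdot)$ equals $q$ is non-vacuous, so the algorithm returns some word of the form $(Y_1\setminus U,1)\ldots(Y_k\setminus U,1)$ with $(Y_1,\ldots,Y_k)\in\I$ (not necessarily the witness produced by Proposition~\ref{overline_X}, but some tuple that by the search criterion itself meets the same equation). The conclusion of the proposition follows. There is no real obstacle here; the main thing to be careful about is the parameter-matching in the first paragraph, which ensures that the ``local'' objects $\I_1$, $\#_{i,1}$ and $q_{start,1}$ inside $\mathbf{ALG_1}$ really do agree with the ``global'' $\I$, $\#_i$ and $q_{start}$ used to set up Proposition~\ref{overline_X}.
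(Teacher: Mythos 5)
Your proof is correct and takes essentially the same approach as the paper, which states Proposition~\ref{alg_feature} as an immediate consequence of Proposition~\ref{overline_X} and the definition of $\mathbf{ALG_1}$ without writing out the details. You have spelled out exactly the parameter-matching and the argument ruling out the ``not found'' branch that make the assertion go through.
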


To show Lemma \ref{long_inductive_lemma} it is enough to
 show that for every $r = 1, \ldots, k/5$ there are words $f^1, \ldots, f^r, g^1, \ldots, g^r \in([n^\prime]\times\{1\})^*$ satisfying 
the following conditions:
\begin{align}
\label{f_disjoint}
v(f^1), \ldots, v(f^r) \mbox{ are disjoint and } |v(f^1)| \le 2n^\prime/k, \ldots, |v(f^r)| \le 2n^\prime/k,\\
\label{f_complexity}
C(f^j|f^1, \ldots, f^{j - 1}, n, t, \A) \le  2  \log_2(Q) \mbox{ for } j = 1, \ldots, r,\\
\label{g_disjoint}
 |g^1|  \ge 4n^\prime/7, \ldots, |g^r| \ge 4n^\prime/7,\\
\label{g_order}
g^1, \ldots, g^r \mbox{ are $\overline{X}$-increasing},\\
\label{same_state}
\delta_{\mathcal{A}}(q_{start}, f^1\#_1 f^2 \#_2 \ldots f^r \#_r) 
= \delta_{\mathcal{A}}(q_{start}, g^1\#_1 g^2\#_2 \ldots g^r\#_r).
\end{align}
The proof is by induction on $r$. Induction base and induction step 
will be proved by the same argument. 
Namely, assume that $f^1, \ldots, f^{r - 1}, g^1, \ldots, g^{r - 1}$ satisfying (\ref{f_disjoint}--\ref{same_state}) are already constructed for some $r \le k/5$ (case $r = 1$ corresponds to the induction base). Define
$$U = v(f^1) \cup v(f^2) \cup \ldots v(f^{r - 1}),$$
$$q_0 = \delta_{\mathcal{A}}(q_{start}, f^1\#_1 f^2 \#_2 \ldots f^{r - 1} \#_{r - 1})$$
(for $r = 1$ we have $U = \varnothing$ and $q_0 = q_{start}$).
Note that by \eqref{same_state} we also have
$$q_0 =  \delta_{\mathcal{A}}(q_{start}, g^1\#_1 g^2 \#_2 \ldots g^{r - 1} \#_{r - 1}).$$

It is enough to  construct $f^r, g^r \in ([n^\prime]\times\{1\})^*$ satisfying:
\begin{align}
\label{f_disjoint_2}
v(f^r) \cap U &= \varnothing \mbox{ and } |v(f^r)| \le 2n^\prime/k, \\
\label{f_complexity_2}
C(f^r| f^1, \ldots, f^{r - 1}, n, t, \mathcal{A}) &\le  2\log_2(Q),\\
\label{g_disjoint_2}
 |g^r| &\ge 4n^\prime/7,\\
\label{g_order_2}
g^r &\mbox{ is $\overline{X}$-increasing},\\
\label{same_state_2}
\delta_\A(q_0, f^r) &= \delta_\A(q_0,g^r).
\end{align}
We define $g^r$ as follows:
$$g^r = (X_1\setminus U, 1) (X_2 \setminus U, 1) \ldots (X_k\setminus U, 1). $$
At first, we derive \eqref{g_disjoint_2} and \eqref{g_order_2}. The latter is clear from construction. As for the former, recall that $(X_1, \ldots, X_k) \in\mathcal{D}$, i.e., $X_1, \ldots, X_k$ are disjoint. Hence $|g^r| = |(X_1\cup X_2 \cup \ldots \cup X_k) \setminus U|$. The last expression is at least $k \cdot \lfloor n^\prime/k\rfloor - |U|$,
By \eqref{f_disjoint} and by definition of $U$ its size is at most $(k/5) \cdot (2n^\prime/k) = 2n^\prime/5$. As $k = O(n^{1/4})$ by \eqref{k_asymptotic}, we obtain $|g^r| \ge 4n^\prime/7$.  

It remains to derive \eqref{f_disjoint_2}, \eqref{f_complexity_2} and \eqref{same_state_2} (these conditions involve $f^r$ which is not yet defined). For that we first have to establish that
$C(U|n,t,\A) \le k\log_2(Q)$. By applying Proposition \ref{conservation} to a mapping, which takes a tuple of strings from $(\mathbb{N}\times\{1\})^*$, applies $v$ to them and takes the union, we get:
$$C(U|n,t,\A) \le C(f^1, f^2, \ldots, f^{r - 1}|n,t,\A) + O(1).$$
 By Proposition \ref{simple_prefix_free}, the right hand side of the last inequality is 
upperbounded by
$$O(1) + \sum\limits_{j = 1}^{r - 1} (2 C(f^j| f^1, \ldots, f^{j - 1}, n, t,\A) + 2).$$
The last sum by \eqref{f_complexity} is at most $(k/5) \cdot (4\log_2(Q) + 2) + O(1) \le k\log_2(Q)$. The last inequality  
holds because $k \ge 160$ (see the proof of Proposition \ref{overline_X}) and $Q$ is super-constant by \eqref{q_is_not_too_small}. 

Set $q = \delta_\A(q_0, g^r) = \delta_\A(q_0, (X_1\setminus U, 1) (X_2 \setminus U, 1) \ldots (X_k\setminus U, 1))$ and define
$$f^r = \mathbf{ALG_1}((n, t, \A, (f^1, f^2, \ldots,  f^{r - 1})), q).$$
Since we have proved that $C(U|n,t,\A)\le k\log_2(Q)$, from Proposition \ref{alg_feature} we  obtain that:
$$f^r = (Y_1\setminus U, 1) (Y_2\setminus U, 1) \ldots (Y_k\setminus U, 1)$$
for some $(Y_1, \ldots, Y_k)\in\mathcal{I}$ satisfying $q = \delta_\A(q_0, (Y_1\setminus U, 1) (Y_2\setminus U, 1) \ldots (Y_k \setminus U, 1))$. 

From that we immediately  get \eqref{same_state_2}. Indeed, $q =\delta_\A(q_0, g^r)$ by definition. On the other hand, $\delta_\A(q_0, f^r) =  \delta_\A(q_0, (Y_1\setminus U, 1) (Y_2\setminus U, 1) \ldots (Y_k \setminus U, 1)) = q$.

The first part of \eqref{f_disjoint_2} is once again obvious because  $f^r = (Y_1\setminus U,1) (Y_2\setminus U, 1)\ldots (Y_k\setminus U, 1)$. To show the second part of \eqref{f_disjoint_2} observe that 
$v(f^r) \subseteq Y_1 \cup Y_2 \cup \ldots \cup Y_k$.  
Hence
\begin{align*}
|v(f^r)| &\le |Y_1| + |Y_2\setminus Y_1| + \ldots + |Y_k\setminus Y_1| \\
&\le \frac{n^\prime}{k} + (k - 1) \frac{\gamma n^\prime}{k}  \le \frac{2n^\prime}{k}.
\end{align*}
Here in the second inequality we use the fact that $(Y_1, Y_2, \ldots, Y_k) \in \I$ and in the third inequality we use the definition of $\gamma$ (see \eqref{parameters}).

Finally, to  show \eqref{f_complexity_2} recall once again that
$$f^r = \mathbf{ALG_1}((n, t, \A, (f^1, f^2, \ldots,  f^{r - 1})), q).$$
Hence by the definition of conditional Kolmogorov complexity we have:
$$C(f^r| f^1, \ldots f^{r - 1}, n,t,\A) \le |q| + O(1) = \log_2(Q) + O(1) \le 2\log_2(Q),$$
where the last inequality  is due to \eqref{q_is_not_too_small}.

\section{Proof of Theorem \ref{low-intersections}}
\label{sec_int}

Let us  sketch  our proof of Theorem \ref{low-intersections}. First of all, for the sake of brevity we say that 
two families $\F, \G\subseteq\binom{[n]}{a}$ are  \emph{$t$-far} if $|F\cap G| \leq t$ for
all $F\in\mathcal{F}, G\in\mathcal{G}$ (so that any member of $\F$ is of Hamming distance at least $2a -  2t$ from any member of $\G$). 

\textbf{Step 1}. We  use a classical shifting technique of \cite{erdos1961intersection} to define so-called \emph{left-compressed} families.  We show that it is enough to demonstrate Theorem \ref{low-intersections} for the case when $\F$ is left-compressed  (Lemma \ref{reduction_to_compressed_families}).

\textbf{Step 2}. We observe  (Proposition \ref{ideals-filters})  that left-compressed families are ideals of a special partial order $\maj_a$ (see \cite{Bashov2011}) on a set $\binom{[n]}{a}$. 

\textbf{Step 3}. We give a necessary and sufficient condition for a family $\G \subseteq\binom{[n]}{a}$ to be $t$-far from  an ideal $\F$ of $\maj_a$  (Lemma \ref{FI-noncomparable}).

\textbf{Step 4}. Using this condition we give an upper bound on the probability 
that $\mathbf{X} \in\F$ and $\mathbf{Y}\in\G$ for two suitably chosen independent random variables $\mathbf{X}$ and $\mathbf{Y}$ (Lemma \ref{prob_bound}).  From that we easily deduce an upper bound on $|\F|\cdot |\G|$.

\subsection{Shifting and compression}

For every $i,j\in[n]$ we define so-called \emph{shifting operations} $s_{ij}$ and $S_{ij}$. Namely, $s_{ij}$  is a unary operation on the set of all subsets of $[n]$. Given $X\subseteq[n]$, the value of $s_{ij}(X)$ is defined as follows:
\[
s_{ij}(X) = \left\{
\begin{aligned}
  &(X\sm\{j\})\cup\{i\}, &&\text{if}\ j\in X,\ i\notin X,\\
  &X, &&\text{otherwise.}
\end{aligned}\right.
\]
In turn, $S_{ij}$ is a unary operation on the set of all \emph{families} of subsets of $[n]$. Given $\mathcal{X}\subseteq 2^{[n]}$, we define the value of $S_{ij}(\mathcal{X})$ as follows:
\[
S_{ij}(\X) =\{ s_{ij}(X) : X\in\X, s_{ij}(X)\notin \X\}
\cup \{ X : X\in\X, s_{ij}(X)\in \X\}.
\]

Note that $s_{ij}$ preserves the size of a set, i.e., $|X| =|s_{ij}(X)|$ for all $X\subseteq[n]$.
Hence if a family $\mathcal{X}$ consists only of $a$-element subsets of $[n]$, then the same holds for $S_{ij}(\mathcal{X})$. 
  It is also easy to see   that $S_{ij}$ preserves the size of a family, i.e.,  $|\X| = |S_{ij}(\X)|$ for all $\X\subseteq 2^{[n]}$.

\begin{proposition}[Lemma 2.1 from \cite{Borg2014TheMP}]\label{shifts}
  Assume that  $1\leq i<j\leq n$ and  $\F, \G\subseteq\binom{[n]}{m}$ are $t$-far. Then  $S_{ij}(\F)$, $S_{ji}(\G)$ are also $t$-far. 
\end{proposition}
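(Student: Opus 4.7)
The plan is to prove the statement by a direct case analysis on how $F'$ and $G'$ arise from the shifting operations. Fix $F' \in S_{ij}(\F)$ and $G' \in S_{ji}(\G)$; the goal is $|F' \cap G'| \le t$. Reading off the definition of $S_{ij}$, exactly one of two mutually exclusive conditions holds for $F'$: either (F1) $F' \in \F$ and $s_{ij}(F') \in \F$, or (F2) $F' \notin \F$ with $i \in F'$, $j \notin F'$, and the preimage $(F' \setminus \{i\}) \cup \{j\}$ lying in $\F$. An analogous dichotomy (G1)/(G2) holds for $G'$: either (G1) $G' \in \G$ and $s_{ji}(G') \in \G$, or (G2) $G' \notin \G$ with $j \in G'$, $i \notin G'$, and $(G' \setminus \{j\}) \cup \{i\} \in \G$.

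In case (F1, G1) the bound $|F' \cap G'| \le t$ is immediate from the $t$-farness hypothesis. In case (F2, G2) I would set $F := (F' \setminus \{i\}) \cup \{j\} \in \F$ and $G := (G' \setminus \{j\}) \cup \{i\} \in \G$, and verify elementwise that $F \cap G = F' \cap G'$: outside $\{i, j\}$ the sets $F$ and $F'$ agree, as do $G$ and $G'$, while inside $\{i, j\}$ each index is missing from at least one side of either pair, so neither $i$ nor $j$ belongs to $F \cap G$ or to $F' \cap G'$. Hence $|F' \cap G'| = |F \cap G| \le t$.

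The mixed cases (F1, G2) and (F2, G1) are symmetric, so it suffices to treat (F1, G2). Set $G := (G' \setminus \{j\}) \cup \{i\} \in \G$, so that $|F' \cap G| \le t$ by the $t$-farness hypothesis. An elementwise comparison shows that $F' \cap G'$ is obtained from $F' \cap G$ by removing $i$ (if $i \in F'$) and adding $j$ (if $j \in F'$); hence $|F' \cap G'| - |F' \cap G| \in \{-1, 0, 1\}$, and only the subcase $j \in F', i \notin F'$ threatens a violation. Precisely in that subcase $s_{ij}(F') \ne F'$, so (F1) supplies $F := s_{ij}(F') = (F' \setminus \{j\}) \cup \{i\} \in \F$, and a second elementwise comparison for $(F, G) \in \F \times \G$ shows that $F \cap G$ and $F' \cap G'$ agree off $\{i, j\}$ while each contains exactly one of the two indices; hence $|F' \cap G'| = |F \cap G| \le t$.

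The main obstacle is bookkeeping: one has to enumerate the four cases and, in each mixed case, identify the subcase where the naive comparison fails and substitute the pair $(F', G)$ by the pair $(F, G) \in \F \times \G$ supplied by the stronger clause in the definition of $S_{ij}$. No idea beyond the definitions of the shifting operations and careful tracking of membership in $\{i, j\}$ is required.
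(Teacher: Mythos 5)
Your proof is correct, and the four-way case split (F1)/(F2) $\times$ (G1)/(G2) together with the substitution trick in the mixed cases is exactly the standard argument for shift compatibility. The paper itself supplies no proof of this proposition — it cites it as Lemma 2.1 of Borg's paper — so there is nothing in the text to compare against, but your treatment matches the usual proof of that lemma: identify, for each $F'\in S_{ij}(\F)$ and $G'\in S_{ji}(\G)$, a pair $(F,G)\in\F\times\G$ with $|F\cap G|\ge|F'\cap G'|$, with the only nontrivial step being the mixed case where $i,j$ membership in the unshifted set forces you to use the hypothesis $s_{ij}(F')\in\F$ (respectively $s_{ji}(G')\in\G$) rather than $F'$ itself.
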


A family $\F\subseteq 2^{[n]}$ is said to be \emph{left-compressed} if $S_{ij}(\F) =
\F$ for all $i<j$.

\begin{lemma}
\label{reduction_to_compressed_families}
If $\F, \G\subseteq\binom{[n]}{m}$ are $t$-far, then there are $\F^\prime, \G^\prime\subseteq\binom{[n]}{a}$ satisfying the following three conditions: 
\begin{itemize}
\item $\F^\prime$ and $\G^\prime$ are $t$-far;
\item $|\F^\prime| = |\F|$ and $|\G^\prime| = |\G|$;
\item $\F^\prime$ is left-compressed.
\end{itemize}
\end{lemma}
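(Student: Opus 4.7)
The plan is to prove the lemma by iterating shifting operations. Starting from the given pair $(\F,\G)$, I will repeatedly pick some $i<j$ such that $S_{ij}(\F)\neq \F$ and replace the pair $(\F,\G)$ by $(S_{ij}(\F),S_{ji}(\G))$. By Proposition~\ref{shifts} this operation preserves the property of being $t$-far, and by the remark in the text that $s_{ij}$ preserves $|X|$ and $S_{ij}$ preserves $|\X|$, both $|\F|$ and $|\G|$ are preserved as well, and we stay inside $\binom{[n]}{a}\times\binom{[n]}{a}$. So at every step all three conditions except possibly ``$\F$ is left-compressed'' continue to hold.

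The real content is termination. I would define the potential $\Phi(\F)=\sum_{F\in\F}\sum_{x\in F}x$, a non-negative integer. I claim that whenever $i<j$ and $S_{ij}(\F)\neq\F$, one has $\Phi(S_{ij}(\F))<\Phi(\F)$. Indeed, by the definition of $S_{ij}$, every $F\in\F$ is either kept in the new family or replaced by $s_{ij}(F)=(F\sm\{j\})\cup\{i\}$; since $i<j$, the latter strictly decreases the $x$-sum of that particular set by $j-i\ge 1$, and no set's sum ever increases. The hypothesis $S_{ij}(\F)\neq\F$ guarantees that at least one such replacement actually occurs, giving the strict decrease. Since $\Phi$ is a non-negative integer, after finitely many iterations no $(i,j)$ with $i<j$ satisfies $S_{ij}(\F)\neq\F$. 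Calling the resulting families $\F',\G'$, the family $\F'$ is by definition left-compressed, and $(\F',\G')$ still satisfies the first two bullets by the per-step preservation argument above.

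The only step requiring any care is checking the strict decrease of $\Phi$: one must separate the sets of $\F$ into those with $j\notin F$ or $i\in F$ (for which $s_{ij}(F)=F$, contributing nothing to any change), those with $j\in F$, $i\notin F$, $s_{ij}(F)\in\F$ (kept unchanged in $S_{ij}(\F)$), and those with $j\in F$, $i\notin F$, $s_{ij}(F)\notin\F$ (replaced, contributing a decrease of $j-i$). The condition $S_{ij}(\F)\neq\F$ is equivalent to the third class being nonempty, which delivers the strict decrease. I do not expect any real obstacle here beyond this bookkeeping; the argument is the classical Erd\H os--Ko--Rado shifting termination packaged with the pairing $(S_{ij},S_{ji})$ furnished by Proposition~\ref{shifts}.
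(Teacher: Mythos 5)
Your argument is correct and is essentially identical to the paper's proof: the paper likewise iterates $(\F,\G)\mapsto(S_{ij}(\F),S_{ji}(\G))$ using Proposition~\ref{shifts}, and establishes termination via the same strictly decreasing integer potential $\sum_{A\in\F}\sum_{x\in A}x$. The only difference is that you spell out the case analysis behind the strict decrease, which the paper leaves to the reader.
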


It is easy to deduce the last lemma from Proposition \ref{shifts}. Indeed, apply $S_{ij}$ to $\F$ and $S_{ji}$ to $\G$ until $S_{ij}(\F) \neq \F$ for some $i < j$. To show that this can be done only finite number of times observe that
$$\sum\limits_{A\in S_{ij}(\F)} \sum\limits_{i \in A} i < \sum\limits_{A\in\F} \sum\limits_{i \in A} i,$$
whenever $S_{ij}(\F) \neq \F$. 
 The proof  can also be found in \cite{Borg2014TheMP} (see the last two paragraphs before Section 3).
%
%

\subsection{Auxiliary order}

For $X\subseteq [n]$ and $1\le i \le |X|$ define $m(X, i)$ to be the $i$th smallest element of $X$. Also define $m(X, 0) = 0$. 

 For any  $l\in[n]$ 
we define the partial order $\maj_l$  on the set $\binom{[n]}{l}$ as follows (see~\cite{Bashov2011}):  $X\maj_l Y$ if $m(X, i) \le m(Y, i)$ for all $1 \le i \le l$.

\begin{proposition}
\label{move2left}
  Let $X =\{x_1,\dots, x_l\}\in\binom{[n]}{l}$  and $Y\in\binom{[n]}{l}$ be such that
  $x_i\leq m(Y,i)$ for all $1\leq i\leq l$. Then $X\maj_l Y$.
\end{proposition}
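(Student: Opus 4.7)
My plan is to observe that the hypothesis, together with monotonicity of $i \mapsto m(Y,i)$, immediately forces $x_1,\dots,x_i$ to all lie below $m(Y,i)$ for any given $i$, and therefore $m(X,i)$ (which is the $i$-th smallest element of $X$) is also bounded by $m(Y,i)$.

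More concretely, fix $i \in \{1,\dots,l\}$. For every $j \leq i$ the hypothesis gives $x_j \leq m(Y,j)$, and since the sequence $m(Y,1) < m(Y,2) < \dots < m(Y,l)$ is strictly increasing in its second argument, $m(Y,j) \leq m(Y,i)$. Hence each of $x_1,\dots,x_i$ lies in the interval $[1, m(Y,i)]$, so the set $\{x_1,\dots,x_i\} \subseteq X$ exhibits at least $i$ elements of $X$ that do not exceed $m(Y,i)$. By definition, the $i$-th smallest element of $X$ is bounded above by the maximum of any $i$-element subset of $X$, so
\[
  m(X,i) \;\leq\; \max(x_1,\dots,x_i) \;\leq\; \max\bigl(m(Y,1),\dots,m(Y,i)\bigr) \;=\; m(Y,i).
\]
Since $i$ was arbitrary, this is exactly the definition of $X \maj_l Y$.

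There is no real obstacle here; the only subtle point is that the elements $x_1,\dots,x_l$ enumerating $X$ in the hypothesis are \emph{not} assumed to be listed in increasing order, so one cannot directly identify $x_i$ with $m(X,i)$. The short argument above sidesteps this by only using that the first $i$ entries of the list, no matter how they are permuted, still certify that $X$ contains $i$ elements at most $m(Y,i)$.
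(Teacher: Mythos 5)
Your proof is correct and follows essentially the same route as the paper's: both arguments fix $i$, use $x_1,\dots,x_i \leq m(Y,i)$ to conclude that $X$ has at least $i$ elements in $[1, m(Y,i)]$, and deduce $m(X,i) \leq m(Y,i)$. The packaging differs only cosmetically (the paper counts $|X \cap [1, m(Y,i)]|$ explicitly; you phrase it via the maximum of an $i$-element subset of $X$), and you correctly flag the one subtlety—that the $x_i$ need not be listed in increasing order.
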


Note that $x_i$ in this proposition  are not ordered. In other words, a
smaller set w.r.t. this order can be produced by decreasing values of
some elements of a~set.

\begin{proof}[Proof of Proposition \ref{move2left}]
Take any $i\in [l]$. Let $j$ be the largest element of $\{0, 1, \ldots, l\}$ satisfying $m(X, j) \le m(Y, i)$. Note that $j$ is equal to the size of $X\cap [1, m(Y, i)]$. On the other hand, we have $x_1 \le m(Y, 1), \ldots, x_i\le m(Y, i)$. Hence $x_1, \ldots, x_i \in  X\cap [1, m(Y, i)]$, which means that $j = |X\cap [1, m(Y, i)]| \ge i$. Therefore $m(X, i) \le m(X, j) \le m(Y, i)$.
\end{proof}

\begin{proposition}
\label{move2right}
 Let $X\in\binom{[n]}{l}$  and $Y=\{y_1, \ldots, y_l\}\in\binom{[n]}{l}$ be such that
  $m(X, i)\leq y_i$ for all $1\leq i\leq l$. Then $X\maj_l Y$.
\end{proposition}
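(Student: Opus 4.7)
The plan is to mirror the counting strategy used in the proof of Proposition~\ref{move2left}, but now counting elements of $Y$ (from above) rather than elements of $X$ (from below). The goal is to show $m(X,i) \le m(Y,i)$ for every $1 \le i \le l$, which is the definition of $X \maj_l Y$.

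Fix an index $i \in [l]$. The key observation is that sorting $X$ makes the thresholds $m(X,j)$ monotone in $j$: for every $j \ge i$ we have $m(X,j) \ge m(X,i)$. Combined with the hypothesis $m(X,j) \le y_j$, this yields $y_j \ge m(X,i)$ for all $j \ge i$. Since $Y$ is a set (so the $y_j$ are distinct), this exhibits $l-i+1$ distinct elements of $Y$, namely $y_i, y_{i+1}, \ldots, y_l$, that are all at least $m(X,i)$.

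It follows that at most $l - (l-i+1) = i-1$ elements of $Y$ are strictly smaller than $m(X,i)$. By definition of $m(Y,i)$ as the $i$-th smallest element of $Y$, this forces $m(Y,i) \ge m(X,i)$. Since $i$ was arbitrary, we conclude $X \maj_l Y$.

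There is no serious obstacle here; the argument is essentially dual to Proposition~\ref{move2left}, where one counts $|X \cap [1, m(Y,i)]|$ to bound $m(X,i)$ from above. The only point that requires a bit of care is using distinctness of the $y_j$'s (which holds because $Y$ is a set, not a multiset) to guarantee that the $l-i+1$ indices $j \ge i$ really do produce $l-i+1$ \emph{distinct} witnesses in $Y$ exceeding $m(X,i)$.
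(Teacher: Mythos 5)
Your proof is correct, but it takes a different route from the paper's. The paper proves Proposition~\ref{move2right} by \emph{reducing} it to Proposition~\ref{move2left} via the order-reversing reflection $j \mapsto n - j + 1$: it sets $X^\prime = \{n - y_l + 1, \ldots, n - y_1 + 1\}$ and $Y^\prime = \{n - j + 1 : j \in X\}$ and applies Proposition~\ref{move2left} to this pair. You instead give a self-contained counting argument that is the order-theoretic dual of the counting used in the proof of Proposition~\ref{move2left}: rather than lower-bounding $|X \cap [1, m(Y,i)]|$, you observe that $y_i, \ldots, y_l$ are $l-i+1$ distinct elements of $Y$, each at least $m(X,i)$, so at most $i-1$ elements of $Y$ lie strictly below $m(X,i)$, forcing $m(Y,i) \ge m(X,i)$. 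Both arguments are sound; the paper's is a one-liner once Proposition~\ref{move2left} is in hand but requires the reader to verify that the reflection preserves the hypotheses and translates $\maj_l$ correctly, whereas yours is slightly longer but fully transparent and does not lean on the earlier proposition. Your remark that distinctness of the $y_j$ matters is the right thing to flag — it is exactly what makes $y_i,\ldots,y_l$ count as $l-i+1$ separate witnesses.
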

\begin{proof}
Apply Proposition \ref{move2left} to $X^\prime = \{n - y_l + 1, n - y_{l - 1} + 1, \ldots, n - y_1 + 1\}$ and $Y^\prime = \{n - j + 1 : j\in X\}$. 
\end{proof}

Recall that an \emph{ideal} $\A$ of a partially ordered set $\P$ is a
downward-closed subset of $\P$: if $x\leq_P y$ and $y\in \A$, then
$x\in\A$. 

\begin{proposition}[Proposition 3 in \cite{Bashov2011}]\label{ideals-filters}
  A left-compressed family $\F\subseteq \binom{[n]}{a}$ is an ideal of the order $\maj_a$. 
\end{proposition}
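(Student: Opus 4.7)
The plan is to prove that every left-compressed family $\F\subseteq\binom{[n]}{a}$ is $\sqsubseteq_a$-downward-closed by induction. Concretely, I will show: if $Y\in\F$ and $X\sqsubseteq_a Y$, then $X\in\F$, using induction on the nonnegative integer
\[
N(X,Y)=\sum_{i=1}^{a}\bigl(m(Y,i)-m(X,i)\bigr).
\]
The base case $N(X,Y)=0$ gives $m(X,i)=m(Y,i)$ for all $i$, hence $X=Y\in\F$. For the inductive step I will produce $Y'\in\F$ with $X\sqsubseteq_a Y'$ and $N(X,Y')<N(X,Y)$, then apply induction.

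Write $X=\{x_1<\dots<x_a\}$ and $Y=\{y_1<\dots<y_a\}$, so $x_i\leq y_i$ for all $i$. Let $p$ be the smallest index with $x_p<y_p$; by minimality, $x_i=y_i$ for $i<p$. The key observation is that $x_p\notin Y$: indeed $x_p>x_{p-1}=y_{p-1}$ rules out $x_p\in\{y_1,\dots,y_{p-1}\}$, while $x_p<y_p\leq y_{p+1}\leq\dots\leq y_a$ rules out $x_p\in\{y_p,\dots,y_a\}$. Set
\[
Y'=s_{x_p,y_p}(Y)=(Y\setminus\{y_p\})\cup\{x_p\},
\]
which is legal because $x_p<y_p$, $y_p\in Y$, and $x_p\notin Y$. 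Since $\F$ is left-compressed, $S_{x_p,y_p}(\F)=\F$; unpacking the definition of $S_{ij}$, this forces $s_{x_p,y_p}(Y)\in\F$ whenever $Y\in\F$, so $Y'\in\F$.

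It remains to verify $X\sqsubseteq_a Y'$ and $N(X,Y')<N(X,Y)$. From $y_{p-1}<x_p<y_p$, the sorted list of $Y'$ is
\[
y_1,\dots,y_{p-1},\,x_p,\,y_{p+1},\dots,y_a,
\]
so $m(Y',p)=x_p$ and $m(Y',i)=y_i$ for $i\neq p$. Therefore $m(X,i)=x_i\leq y_i=m(Y',i)$ for $i\neq p$ and $m(X,p)=x_p=m(Y',p)$, giving $X\sqsubseteq_a Y'$. Moreover
\[
N(X,Y')=\sum_{i\neq p}(y_i-x_i)=N(X,Y)-(y_p-x_p)<N(X,Y),
\]
so the induction hypothesis applied to $Y'$ yields $X\in\F$.

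The only step requiring genuine care is the choice of the shift: picking $p$ to be the \emph{smallest} index with $x_p<y_p$ guarantees both that $x_p\notin Y$ and that inserting $x_p$ into $Y\setminus\{y_p\}$ places it at position $p$ in the sorted order, which is exactly what keeps $X\sqsubseteq_a Y'$ intact. With a less careful choice (e.g.\ the smallest element of the symmetric difference) one gets an extra condition $m(X,p)\leq m(Y,p-1)$ that need not hold, so the argument would not close. Everything else in the proof is a routine check.
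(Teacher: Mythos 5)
Your proof is correct. It takes a somewhat different route from the paper, though both hinge on the same mechanism (applying a shift $s_{ij}$ with $i<j$ and using left-compressedness to stay inside~$\F$). The paper's proof is a very short contrapositive sketch: if $\F$ is not an ideal, choose $B\in\F$ with an immediate predecessor $A\notin\F$ under $\sqsubseteq_a$, assert (without proof) that such an $A$ must be $s_{i-1,i}(B)$ for some $i$, and conclude $S_{i-1,i}(\F)\neq\F$. That argument depends on the fact that covering pairs in $\sqsubseteq_a$ differ by a single unit decrement of one element, which the paper leaves as ``not hard to see.'' You instead give a direct, fully elementary induction on $N(X,Y)=\sum_i\bigl(m(Y,i)-m(X,i)\bigr)$, using a possibly larger shift $s_{x_p,y_p}$ that decrements one element in one go. This avoids any structural claim about immediate predecessors and makes the inductive descent completely explicit, at the cost of a slightly longer write-up. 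The choice of $p$ as the \emph{smallest} index with $x_p<y_p$ is exactly what is needed to guarantee $x_p\notin Y$ and to keep $x_p$ in position $p$ after the shift, and you verify these points cleanly. Both approaches buy the same result; yours is more self-contained, the paper's is shorter but leans on an unproved characterization of covers in the poset.
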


For reader's convenience we also give here a proof sketch of Proposition \ref{ideals-filters}. If $\F$ is not an ideal of $\maj_a$, then for some $B\in \F$ there is $A\in\binom{[n]}{a}\setminus \F$ immediately preceding $B$ with respect to $\maj_a$. It is not hard to see that $A$ can be obtained from $B$ after decreasing some element of $B$ (say, $i$) by one. Then $s_{i-1, i}(B) = A$ and hence $\F$ is not left-compressed.

 So, it suffice to prove
Theorem~\ref{low-intersections} for a pair $(\F,\G)$ in which  $\F$ is an ideal  of the order $\maj_a$.

\subsection{Characterizing families which are  $t$-far from ideals}

Define the \emph{$j$-left border} $L_j(X)$ and the \emph{$j$-right border}
$R_j(X)$ of a set $X\subseteq\binom{[n]}{a}$
as 
\[
L_j(X) = \{m(X,i): 1\leq i\leq j\}; \quad 
R_j(X) = \{m(X,i): a-j+1\leq i\leq a\}.
\]
In other words, $L_j(X)$ consists of $j$ smallest elements of $X$ and
$R_j(X)$ consists of $j$ largest elements of~$X$.

\begin{lemma}\label{FI-noncomparable}
Let  $\F\subseteq \binom{[n]}{a}$ be an ideal of $\maj_a$. 
Then for any $\G\subseteq\binom{[n]}{a}$ the following two conditions are equivalent:
\begin{description}
\item[\textup{(a)}] $\F$ and $\G$ are $t$-far;
\item[\textup{(b)}]  $L_{t + 1}(G) \not\maj_{t + 1} R_{t + 1}(F)$ for all $F\in\F$ and $G\in\G$.
\end{description}
\end{lemma}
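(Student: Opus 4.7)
My plan is to prove the two directions by contrapositive. The direction (a) $\Rightarrow$ (b) does not use the ideal structure and is straightforward; (b) $\Rightarrow$ (a) is the core of the lemma and is where the ideal property of $\F$ produces a concrete witness.

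For (a) $\Rightarrow$ (b), suppose some $F\in\F, G\in\G$ have $|F \cap G| \ge t+1$, and let $c_1 < \ldots < c_{t+1}$ be the $t+1$ smallest elements of $F \cap G$. The elements $c_1, \ldots, c_i \in G$ witness $m(G, i) \le c_i$, while $c_i, \ldots, c_{t+1} \in F$ give $t + 2 - i$ elements of $F$ that are $\ge c_i$, whence $c_i \le m(F, a - t - 1 + i)$. Chaining the two yields $m(G, i) \le m(F, a - t - 1 + i)$ for every $i \in [t+1]$, which is exactly $L_{t+1}(G) \maj_{t+1} R_{t+1}(F)$.

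For (b) $\Rightarrow$ (a), suppose $F \in \F, G \in \G$ satisfy $L_{t+1}(G) \maj_{t+1} R_{t+1}(F)$; writing $f_i = m(F, i)$ and $g_j = m(G, j)$, this translates to $g_j \le f_{a-t-1+j}$ for every $j \in [t+1]$. The plan is to construct $F^\prime \maj_a F$ with $L_{t+1}(G) \subseteq F^\prime$: the ideal property will then give $F^\prime \in \F$, and $|F^\prime \cap G| \ge |L_{t+1}(G)| = t+1$ contradicts $t$-farness. I define
\[
F^\prime = L_{t+1}(G) \cup T, \qquad T = \text{the $a - t - 1$ smallest elements of } F \sm L_{t+1}(G).
\]
Since $|F \sm L_{t+1}(G)| \ge a - t - 1$, $T$ is well-defined and automatically disjoint from $L_{t+1}(G)$, so $|F^\prime| = a$ and $L_{t+1}(G) \subseteq F^\prime$ hold by construction.

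The verification of $F^\prime \maj_a F$ is the heart of the proof and the place where the hypothesis is used. By the standard CDF characterization, $F^\prime \maj_a F$ is equivalent to $|F^\prime \cap [1, x]| \ge |F \cap [1, x]|$ for every $x \in [n]$. Setting $\alpha(x) = |F \cap [1, x]|$, $\beta(x) = |L_{t+1}(G) \cap [1, x]|$, $\gamma(x) = |F \cap L_{t+1}(G) \cap [1, x]|$ and $\delta(x) = \alpha(x) - \gamma(x)$, a direct computation from the definition of $T$ gives $|T \cap [1, x]| = \min(\delta(x), a-t-1)$, so the inequality becomes $\beta(x) + \min(\delta(x), a-t-1) \ge \gamma(x) + \delta(x)$. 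When $\delta(x) \le a-t-1$ this reduces to $\beta(x) \ge \gamma(x)$, which is immediate from $F \cap L_{t+1}(G) \cap [1,x] \subseteq L_{t+1}(G) \cap [1,x]$. When $\delta(x) > a-t-1$ it becomes $\alpha(x) - \beta(x) \le a-t-1$, and this is where the hypothesis $g_j \le f_{a-t-1+j}$ is used: writing $\alpha(x) = a-t-1+j$ with $j \in [t+1]$, one has $f_{a-t-1+j} \le x$, hence by hypothesis $g_j \le x$ and therefore $g_1, \ldots, g_j \le x$, giving $\beta(x) \ge j = \alpha(x) - (a-t-1)$. The main obstacle of the proof is exactly this second direction: the naive candidate ``the $a$ smallest elements of $F \cup L_{t+1}(G)$'' can omit some $g_j$ whose rank in that union exceeds $a$, but the construction above is tailored to force $L_{t+1}(G)$ inside $F^\prime$ while padding the remaining $a-t-1$ slots by the smallest available elements of $F$.
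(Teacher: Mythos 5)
Your proof is correct, though you have swapped the two labels: assuming $|F\cap G|\ge t+1$ (i.e.\ $\lnot$(a)) and deriving $L_{t+1}(G)\maj_{t+1}R_{t+1}(F)$ (i.e.\ $\lnot$(b)) establishes $\text{(b)}\Rightarrow\text{(a)}$, not $\text{(a)}\Rightarrow\text{(b)}$, and symmetrically for the other block. With the labels corrected, both implications are sound.

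For the direction not using the ideal property your argument coincides with the paper's: extract the $t+1$ smallest elements $c_1<\cdots<c_{t+1}$ of $F\cap G$ and chain $m(G,i)\le c_i\le m(F,a-t-1+i)$, which is exactly the sandwiching the paper does with a $(t+1)$-subset of $F\cap G$. Where you genuinely diverge is in the direction that uses the ideal structure. The paper passes to the subfamily $\F'=\{F'\in\F:L_{t+1}(G)\maj_{t+1}R_{t+1}(F')\}$, picks a $\maj_a$-minimal element $F_0$, and shows that if $|F_0\cap G|<t+1$ one can lower a single coordinate of $F_0$ (via Propositions~\ref{move2left} and~\ref{move2right}) to get a strictly $\maj_a$-smaller member of $\F'$, contradicting minimality --- an extremal/descent argument terminating by finiteness. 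You instead give a one-shot explicit witness: $F'=L_{t+1}(G)\cup T$ with $T$ the $a-t-1$ smallest elements of $F\sm L_{t+1}(G)$, and prove $F'\maj_a F$ directly via the cumulative-count equivalence $X\maj_a Y\iff|X\cap[1,x]|\ge|Y\cap[1,x]|$ for all $x$. The two cases $\delta(x)\le a-t-1$ and $\delta(x)>a-t-1$ do precisely the right bookkeeping, and the hypothesis $m(G,j)\le m(F,a-t-1+j)$ is invoked exactly where the padding by $T$ saturates. Your construction is arguably cleaner to verify: it avoids the auxiliary minimality argument and the intermediate shift propositions, at the modest cost of introducing and checking the CDF characterization of $\maj_a$.
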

\begin{proof}

\textbf{(b)} $\mathbf{\implies}$ \textbf{(a)}. Assume for contradiction that $\mathcal{F}$ and $\G$ are \emph{not} $t$-far. Hence there are $F\in\F$ and $G\in\G$ such that $|F\cap G| \ge t + 1$. Let $X$ be any $(t + 1)$-element subset of $F\cap G$. Then obviously we have that  $L_{t + 1}(G) \maj_{t + 1} X \maj_{t + 1} R_{t+1}(F)$, contradiction. 

\textbf{(a)} $\mathbf{\implies}$ \textbf{(b)}.  Assume for contradiction that there are $F\in\F$ and $G\in \G$ such that $L_{t + 1}(G)\maj_{t + 1} R_{t + 1}(F)$.  Define
$$\F^\prime = \{ F^\prime \in \F : L_{t + 1}(G)\maj_{t + 1} R_{t + 1}(F^\prime)\}.$$ 
By definition $F\in\F^\prime$, i.e., $\F^\prime$ is non-empty. Let $F_0$ be any minimal element of $\F^\prime$ with respect to $\maj_a$, i.e., assume there is no $F^\prime\in \F^\prime$, $F^\prime\neq F_0$ such that $F^\prime \maj_a F_0$. To obtain a~contradiction it is enough to show that $|F_0 \cap G| \ge t + 1$ (this would mean that $\F$ and $\G$ are not $t$-far).

Assume that $|F_0 \cap G| < t + 1$. Hence there is an element of $L_{t + 1}(G)$ which is not in $F_0$. Namely, there is $i\in\{1, 2, \ldots, t + 1\}$ such that $m(G, i) \notin F_0$. Define
$$F_1 = (F_0\setminus \{m(F_0, a - t - 1 + i)\}) \cup\{ m(G, i)\}.$$
First of all, observe that $|F_1| = |F_0| = a$ (this is because $m(F_0, a - t - 1 + i) \in F_0$ and $m(G, i)\notin F_0$). 
Let us check that the following three claims hold:
\begin{align}
\label{in}
F_1 &\in \F^\prime \\
\label{neq}
F_1 &\neq F_0 \\
\label{maj}
F_1 &\maj_a F_0.
\end{align}
These three claims give a contradiction with minimality of $F_0$. 

The simplest one is \eqref{neq} --- observe that $F_1$ contains $m(G, i)$ and $F_0$ does not. 

Now, let us show \eqref{maj}. Recall that $F_0 \in \F^\prime$, i.e., $L_{t + 1}(G)\maj_{t + 1} R_{t + 1}(F_0)$. Hence $m(G, i) = m(L_{t + 1}(G), i)\le m(R_{t + 1}(F_0), i) =  m(F_0, a - t - 1 + i)$, i.e., $F_1$ is obtained from $F_0$ by removing a bigger element and adding a smaller element (which originally was not in $F_0$). Hence by Proposition \ref{move2left} we have that $F_1\maj_a F_0$.

To show \eqref{in} let us at first show that $F_1\in\F$. Indeed, $\F$ is an ideal of $\maj_a$ and 
$F_0\in\F^\prime\subseteq \F$. 
Hence by \eqref{maj} we have that $F_1\in \F$. To show that actually $F_1\in \F^\prime$ 
we have to prove that $L_{t + 1}(G)\maj_{t + 1} R_{t + 1}(F_1)$. 
Define
$$X = (R_{t + 1}(F_0)\setminus \{m(F_0, a - t - 1 + i)\}) \cup\{ m(G, i)\}.$$
Observe that $X$ is a $(t + 1)$-element subset of $F_1$.  Note that $m(G, i)  = m(L_{t + 1}(G), i) \notin R_{t + 1}(F_0)$ and $m(F_0, a - t - 1 + i) = m(R_{t + 1}(F_0), i)$ and recall once again that $L_{t + 1}(G) \maj_{t + 1} R_{t + 1}(F_0)$. 
Thus 
$X$ is obtained from $R_{t + 1}(F_0)$ by removing the $i$th element of $R_{t + 1}(F_0)$ and adding the $i$th element of $L_{t + 1}(G)$. Hence by Proposition \ref{move2right} we have that $L_{t + 1}(G) \maj_{t + 1} X$. On the other hand obviously $X\maj_{t + 1} R_{t + 1}(F_1)$, which means that \eqref{in} is proved. 
\end{proof}

\subsection{Probabilistic lemma }

To upperbound $|\F|\cdot|\G|$, where $\F, \G\subseteq\binom{[n]}{a}$ are $t$-far and $\F$ is an ideal of the order $\maj_a$, we use an approach suggested
in~\cite{frankl1987forbidden}. We introduce a probabilistic measure
$\mu_p$ on the set $2^{[n]}$ such that the probability of a subset $X\in 2^{[n]}$ is equal to
$p^{|X|}(1-p)^{n-|X|}$. It is easy to see that this measure is a
product of Bernoulli measures: each point $x$ belongs to a random set
$X$ with probability $p$ and points are included in the set
independently.

\begin{lemma}
\label{prob_bound}
Let $\mathcal{F}, \mathcal{G} \subseteq \binom{[n]}{a}$ be such that $L_{t + 1}(G)\not\maj_{t + 1} R_{t + 1}(F)$ for all $F\in\mathcal{F}, G\in\mathcal{G}$.  Define $\mathbf{X}$ and $\mathbf{Y}$ to be two independent random variables, both distributed according to $\mu_{a/n}$. Then 
$$\Pr[\mathbf{X}\in\mathcal{F}, \mathbf{Y}\in\mathcal{G}] \le 4n \cdot \exp\left(- (a - t - 1)^2/(20a)\right).$$
\end{lemma}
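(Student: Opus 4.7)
The plan is to translate the combinatorial non-comparability hypothesis into a one-dimensional tail estimate on a random walk, and then apply a routine Chernoff bound together with a union bound over thresholds. The bridge between the two sides is the prefix-count function: for $Z \subseteq [n]$ and $j \in \{0, 1, \dots, n\}$ set $f_j(Z) := |Z \cap [1, j]|$, and observe the duality $m(Z, k) \le j \iff f_j(Z) \ge k$.

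I would first carry out the translation. Fix any $F, G \in \binom{[n]}{a}$ with $L_{t+1}(G) \not\maj_{t+1} R_{t+1}(F)$. Unpacking the definition and using $m(L_{t+1}(G), i) = m(G, i)$ and $m(R_{t+1}(F), i) = m(F, a-t-1+i)$ for $i \in [t+1]$, there is some $i \in \{1, \dots, t+1\}$ with $m(G, i) > m(F, a-t-1+i)$. Setting $j := m(F, a-t-1+i)$ (which lies in $[1, n]$ because $1 \le a-t-1+i \le a$), the duality above gives $f_j(F) \ge a-t-1+i$ and $f_j(G) \le i-1$, hence $f_j(F) - f_j(G) \ge a-t$. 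Since $\mathbf{X} \in \F$ and $\mathbf{Y} \in \G$ force $|\mathbf{X}| = |\mathbf{Y}| = a$ and then the lemma's hypothesis applies, this argument yields
$$\Pr[\mathbf{X} \in \F,\, \mathbf{Y} \in \G] \;\le\; \Pr\bigl[\,\exists\, j \in \{1, \dots, n\} : D_j \ge a-t\,\bigr] \;\le\; \sum_{j=1}^{n} \Pr[D_j \ge a-t],$$
where $D_j := f_j(\mathbf{X}) - f_j(\mathbf{Y})$.

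For each fixed $j$ I would write $D_j = \sum_{\ell=1}^j \xi_\ell$ with $\xi_\ell := \mathbb{1}[\ell \in \mathbf{X}] - \mathbb{1}[\ell \in \mathbf{Y}]$; by independence of the coordinates of $\mathbf{X}$ and $\mathbf{Y}$ and of $\mathbf{X}$ from $\mathbf{Y}$, the $\xi_\ell$ are i.i.d. and take values $+1, -1, 0$ with probabilities $p(1-p), p(1-p), 1-2p(1-p)$ respectively, where $p = a/n$. In particular $\E[\xi_\ell] = 0$, $|\xi_\ell| \le 1$, and $\Var(D_j) \le 2jp \le 2a$. A standard Chernoff bound, using $\E[e^{\lambda \xi_\ell}] \le \exp(2p(1-p)(\cosh \lambda - 1))$ and $\cosh \lambda - 1 \le \lambda^2$ for moderately small $\lambda$, yields $\Pr[D_j \ge a-t] \le \exp(-c(a-t)^2/a)$ for an absolute constant $c > 0$; tuning $\lambda$ and absorbing the integrality slack (which costs a $-1$ in the numerator) brings this into the form $\exp(-(a-t-1)^2/(20a))$. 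Summing over $j \in \{1, \dots, n\}$ and allowing a constant-factor cushion produces the stated bound $4n \cdot \exp(-(a-t-1)^2/(20a))$.

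The main obstacle is not the probabilistic estimate, which is a textbook Chernoff for a symmetric random walk with bounded steps, but the translation step: one must see that the $\maj_{t+1}$ non-comparability of the two order-statistic sets is exactly captured by a single prefix-count inequality $f_j(F) - f_j(G) \ge a-t$ for some threshold $j$. Once this equivalence is in place, everything else is mechanical, and the slack in the constant $1/20$ is more than enough room to absorb any imprecision in the Chernoff tuning.
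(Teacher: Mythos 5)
Your proof is correct, and while it follows the same overall strategy as the paper (translate the $\not\maj_{t+1}$ hypothesis into a prefix-count inequality, union-bound over the threshold $j$, and close with a Chernoff bound), your intermediate decomposition is genuinely different. The paper introduces the ``good'' event $E$ that for every $r$ both $|\mathbf{X}\cap[1,r]|$ and $|\mathbf{Y}\cap[1,r]|$ stay within $s/2$ of the mean $\tfrac{a}{n}r$, shows that $\mathbf{X}\in\F,\mathbf{Y}\in\G$ forces $\lnot E$, and then union-bounds over $r$ and over the two marginal deviations, invoking the KL-based Chernoff bound (Corollary~\ref{our_chernoff}) for each of $|\mathbf{X}\cap[1,r]|$ and $|\mathbf{Y}\cap[1,r]|$ separately. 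You instead collapse the two counts into the single difference $D_j=f_j(\mathbf{X})-f_j(\mathbf{Y})$, observe that the hypothesis forces $D_j\ge a-t$ for some $j$, and bound the tail of the zero-mean walk $D_j$ with a direct MGF computation on the $\{-1,0,+1\}$-valued increments $\xi_\ell$. This buys a slightly cleaner Chernoff step (mean exactly zero, no need to set a band half-width), at the cost of needing a short case split on the optimizer $\lambda=s/(4jp)$ against the constraint $\lambda\le 1$; with $s=a-t$, either the unconstrained optimum is admissible and gives $\exp(-s^2/(8jp))\le\exp(-s^2/(8a))$, or $jp<s/4$ and the endpoint $\lambda=1$ gives $\exp(-s+2jp)\le\exp(-s/2)\le\exp(-s^2/(8a))$ (using $s\le a$). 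Summing over $j$ yields $n\exp(-(a-t)^2/(8a))$, comfortably below the stated $4n\exp(-(a-t-1)^2/(20a))$. One small remark: you do not actually need to ``absorb integrality slack'' in the Chernoff tuning; your bound already has $(a-t)^2$ in the exponent, and one simply uses $(a-t)^2\ge(a-t-1)^2$ to match the stated form.
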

We will use the following form of the Chernoff bound:
\begin{proposition}[\cite{hoeffding1963probability}, Theorem 1]
\label{additive_chernoff}
Let $Z_1, \ldots, Z_l$ be $l$ independent Bernoulli random variables. Assume that each $Z_i$ takes value $1$ with probability $p$. Then for all $\varepsilon \ge 0$:
\begin{align*}
\Pr\left[\sum\limits_{i = 1}^l Z_i \ge (p + \varepsilon) l\right] &\le \exp\left(-D(p + \varepsilon||p)\cdot  l\right) \\
\Pr\left[\sum\limits_{i = 1}^l Z_i \le (p - \varepsilon) l\right] &\le \exp\left(-D(p - \varepsilon||p) \cdot l\right),
\end{align*}
where $D(x||y)$ is the Kullback -- Leibler divergence:
$$D(x||y) = x \ln\left(\frac{x}{y}\right) + (1 - x) \ln\left(\frac{1 - x}{1 - y}\right).$$
\end{proposition}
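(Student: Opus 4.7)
The plan is to prove this via the standard exponential moment (Chernoff) method, deriving the Kullback--Leibler bound by optimizing over the exponential parameter. I will handle the upper tail in detail; the lower tail follows by symmetry (apply the upper-tail argument to the variables $Z_i' = 1 - Z_i$, which are Bernoulli with parameter $1-p$, and observe that $D(p-\varepsilon \| p) = D(1 - p + \varepsilon \| 1 - p)$).

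For the upper tail, set $S = Z_1 + \cdots + Z_l$ and fix an arbitrary $\lambda > 0$. By monotonicity of $x \mapsto e^{\lambda x}$ and Markov's inequality,
\[
\Pr[S \ge (p+\varepsilon) l] \;=\; \Pr\bigl[e^{\lambda S} \ge e^{\lambda(p+\varepsilon) l}\bigr] \;\le\; e^{-\lambda(p+\varepsilon) l}\,\Expect\bigl[e^{\lambda S}\bigr].
\]
Independence of the $Z_i$ factorises the moment generating function as
\[
\Expect\bigl[e^{\lambda S}\bigr] \;=\; \prod_{i=1}^{l} \Expect\bigl[e^{\lambda Z_i}\bigr] \;=\; \bigl(1 - p + p\,e^{\lambda}\bigr)^{l}.
\]
Hence for every $\lambda > 0$,
\[
\Pr[S \ge (p+\varepsilon) l] \;\le\; \Bigl( (1 - p + p\,e^{\lambda})\, e^{-\lambda(p+\varepsilon)} \Bigr)^{l}.
\]

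The key step is optimising the right-hand side. Let $\varphi(\lambda) = \ln(1 - p + p\,e^{\lambda}) - \lambda(p+\varepsilon)$. Differentiating, $\varphi'(\lambda) = \frac{p\,e^{\lambda}}{1 - p + p\,e^{\lambda}} - (p+\varepsilon)$, which vanishes exactly when $e^{\lambda^\star} = \frac{(p+\varepsilon)(1-p)}{p(1-p-\varepsilon)}$ (note $\lambda^\star > 0$ since $\varepsilon > 0$, and we may assume $p + \varepsilon < 1$, otherwise the event is empty and the bound is trivial). Substituting $\lambda^\star$ and simplifying gives
\[
\varphi(\lambda^\star) \;=\; -\Bigl[(p+\varepsilon)\ln\tfrac{p+\varepsilon}{p} + (1-p-\varepsilon)\ln\tfrac{1-p-\varepsilon}{1-p}\Bigr] \;=\; -D(p+\varepsilon \,\|\, p),
\]
which yields the claimed inequality after multiplying through by $l$ and exponentiating.

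The main (and essentially only) obstacle is the algebraic simplification at $\lambda^\star$: one must verify that $1 - p + p\,e^{\lambda^\star} = \frac{1-p}{1-p-\varepsilon}$ and then combine this with $\lambda^\star \cdot (p+\varepsilon) = (p+\varepsilon)\ln\frac{(p+\varepsilon)(1-p)}{p(1-p-\varepsilon)}$ to recover the two-term KL expression. Everything else is routine. For the lower tail, the substitution $Z_i \mapsto 1 - Z_i$ reduces to the already-proved upper-tail statement with $p$ replaced by $1-p$ and $\varepsilon$ unchanged, and then the identity $D(1-p+\varepsilon\,\|\,1-p) = D(p-\varepsilon\,\|\,p)$ (immediate from the definition of $D$) finishes the proof.
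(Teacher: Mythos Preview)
Your argument is correct: the exponential-moment method with optimisation over $\lambda$ is the standard derivation of the Chernoff--Hoeffding bound in its KL form, and your algebra and the symmetry reduction for the lower tail are both sound. The paper itself gives no proof of this proposition at all --- it is simply quoted from Hoeffding's original paper \cite{hoeffding1963probability} as a black box --- so there is nothing to compare your approach against.
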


We also need the following lower bound on   the Kulback -- Leibler divergence:
\begin{proposition}[\cite{topsoe2000some}]
\label{topsoe}
$D(x||y) \ge \frac{(x - y)^2}{2(x + y)}$.
\end{proposition}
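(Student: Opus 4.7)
My plan is to prove the inequality via elementary calculus. Fix $y\in(0,1)$ and consider the auxiliary function
$$h(x)=D(x||y)-\frac{(x-y)^{2}}{2(x+y)}$$
on $x\in(0,1)$. I will show $h(y)=h'(y)=0$ and $h''(x)\ge 0$ throughout $(0,1)$; this forces $h\ge 0$, which is exactly the claim.

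The boundary data at $x=y$ is straightforward. One has $h(y)=0$ by inspection. Differentiating,
$$\partial_{x}D(x||y)=\ln\frac{x(1-y)}{y(1-x)},\qquad \frac{d}{dx}\frac{(x-y)^{2}}{2(x+y)}=\frac{(x-y)(x+3y)}{2(x+y)^{2}},$$
and both vanish at $x=y$, so $h'(y)=0$.

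The main step is convexity. We have $\partial_{x}^{2}D(x||y)=\frac{1}{x(1-x)}$, and a direct quotient-rule computation gives
$$\frac{d^{2}}{dx^{2}}\left(\frac{(x-y)^{2}}{2(x+y)}\right)=\frac{4y^{2}}{(x+y)^{3}}.$$
Hence $h''(x)\ge 0$ is equivalent to the pointwise inequality $(x+y)^{3}\ge 4xy^{2}(1-x)$ for $x,y\in(0,1)$. Since $1-x\le 1$, it suffices to prove $(x+y)^{3}\ge 4xy^{2}$. This is the only nontrivial step, but it is routine: applying AM--GM to the three positive terms $x,\,y/2,\,y/2$ (whose sum is $x+y$ and whose product is $xy^{2}/4$) gives
$$(x+y)^{3}\ge 27\cdot\frac{xy^{2}}{4}>4xy^{2},$$
since $27/4>4$. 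This is the main (but easy) obstacle in the argument.

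Having $h''\ge 0$ on $(0,1)$ makes $h$ convex. Combined with $h(y)=h'(y)=0$, this forces $h(x)\ge 0$ everywhere on $(0,1)$, establishing the desired divergence inequality.
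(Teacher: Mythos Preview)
Your proof is correct. The computations of $h(y)$, $h'(y)$, $h''(x)$ are all right, the AM--GM step cleanly gives $(x+y)^3\ge \tfrac{27}{4}xy^2>4xy^2\ge 4xy^2(1-x)$, and the standard convexity argument (a convex function with a vanishing value and derivative at $y$ is globally nonnegative) finishes the claim.

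There is nothing to compare against in the paper: Proposition~\ref{topsoe} is simply cited from~\cite{topsoe2000some} and stated without proof. Your elementary calculus argument is therefore a welcome self-contained addition; it is in fact close in spirit to Tops{\o}e's original derivation, which also proceeds by a second-derivative (convexity) comparison of $D(x\|y)$ with the rational lower bound. One small remark: you silently restrict to $x,y\in(0,1)$, which is exactly the regime used in the paper, so nothing is lost.
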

From Propositions \ref{additive_chernoff} and \ref{topsoe} we obtain:
\begin{corollary}
\label{our_chernoff}
Let $Z_1, \ldots, Z_l$ be $l$ independent Bernoulli random variables. Assume that each $Z_i$ takes value $1$ with probability $p$. Then for all $\varepsilon \ge 0$:
$$\Pr\left[\sum\limits_{i = 1}^l Z_i \notin [(p - \varepsilon) l, (p + \varepsilon) l] \right] \le 2\exp\left(-\frac{\varepsilon^2 \cdot l}{4p + 2\varepsilon}\right).$$
\end{corollary}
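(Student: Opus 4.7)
The strategy is a direct two-line deduction combining the two cited propositions with a union bound on the two tails. First, I would write
$$\left\{\sum_{i=1}^{l} Z_i \notin [(p-\varepsilon)l,(p+\varepsilon)l]\right\} \subseteq \left\{\sum_{i=1}^{l} Z_i \ge (p+\varepsilon)l\right\} \cup \left\{\sum_{i=1}^{l} Z_i \le (p-\varepsilon)l\right\},$$
and apply Proposition~\ref{additive_chernoff} to each of the two tail events separately, which gives upper bounds $\exp(-D(p+\varepsilon\|p)\cdot l)$ and $\exp(-D(p-\varepsilon\|p)\cdot l)$ respectively.

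Next, I would use Proposition~\ref{topsoe} to replace both divergences by the same explicit lower bound. Plugging $(x,y) = (p+\varepsilon,p)$ gives $D(p+\varepsilon\|p) \ge \frac{\varepsilon^2}{2(2p+\varepsilon)} = \frac{\varepsilon^2}{4p+2\varepsilon}$, and plugging $(x,y) = (p-\varepsilon,p)$ gives $D(p-\varepsilon\|p) \ge \frac{\varepsilon^2}{2(2p-\varepsilon)}$. Since $2p-\varepsilon \le 2p+\varepsilon$ for $\varepsilon \ge 0$, the latter is also at least $\frac{\varepsilon^2}{4p+2\varepsilon}$. Substituting this common bound into the two tail estimates and summing them yields the factor $2$ and the stated inequality.

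The only minor care needed is in the degenerate cases $p-\varepsilon < 0$ or $p+\varepsilon > 1$: in those regimes the corresponding tail event is vacuous (a sum of Bernoullis cannot be negative or exceed $l$), so the probability is $0$ and the bound holds trivially; one need not even invoke Proposition~\ref{topsoe} on the offending side. There is no genuine obstacle here—this is a packaging step preparing a clean Chernoff-type estimate for later use in the proof of Lemma~\ref{prob_bound}.
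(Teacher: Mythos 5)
Your proof is correct and is exactly the intended argument: the paper states the corollary as an immediate consequence of Propositions~\ref{additive_chernoff} and~\ref{topsoe} without spelling out the two-sided union bound, the two applications of Topsoe's inequality, and the observation that $\frac{\varepsilon^2}{4p-2\varepsilon}\ge\frac{\varepsilon^2}{4p+2\varepsilon}$, which is precisely what you supplied. Your handling of the degenerate case $p-\varepsilon<0$ (which also covers the only regime where $4p-2\varepsilon$ could be non-positive) is the right way to close the one small gap.
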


\begin{proof}[Proof of Lemma~\ref{prob_bound}.]
Denote $s = (a - t - 1)$. 
Let $E$ be the event that for all $r \in\{1, \ldots, n\}$ it holds that 
\begin{align*}
|\mathbf{X}\cap [1, r]| &\in \left[\frac{a}{n}\cdot r - s/2, \frac{a}{n}\cdot r + s/2\right]  \\
\mbox{ and } |\mathbf{Y}\cap [1, r]| &\in \left[\frac{a}{n}\cdot r - s/2, \frac{a}{n}\cdot r + s/2\right].
\end{align*}

Let us show that $\mathbf{X}\in\mathcal{F}, \mathbf{Y}\in\mathcal{G}\implies \lnot E$. Indeed, assume for contradiction that there are $X\in\mathcal{F}$ and $Y\in\mathcal{G}$ such that event $E$ holds 
for $\mathbf{X} = X$, $\mathbf{Y} = Y$. 
Note that 
$L_{t + 1}(Y)\not\maj_{t + 1} R_{t + 1}(X)$. 
Hence,  $m(Y, j) > m(X, a - t - 1 + j) = m(X, s + j)$  for some $j\in\{1, \ldots, t + 1\}$. 
Consider $r = m(X, s + j)$. By definition there are exactly $s + j$ elements of $X$ in $[1, r]$. Since event $E$ holds 
for $\mathbf{X} = X$, $\mathbf{Y} = Y$, 
we get:
\begin{equation}
\label{a_equation}
s + j \le \frac{a}{n} \cdot r + s/2.
\end{equation}
On the other hand, there are at most $j - 1$ elements of $Y$ in $[1, m(X, s + j)] = [1, r]$ (this is because $m(Y, j) > m(X, s + j)$). Hence 
\begin{equation}
\label{a_equation_2}
 \frac{a}{n} \cdot r - s/2 \le j - 1
\end{equation}
(we use once again the fact that $E$ holds for $(X, Y)$).
By adding \eqref{a_equation_2} and \eqref{a_equation} we get $0 \le -1$. Thus an implication $\mathbf{X}\in\mathcal{F}, \mathbf{Y}\in\mathcal{G}\implies \lnot E$ is proved. 

In particular, we get:
$$\Pr[\mathbf{X}\in\mathcal{F}, \mathbf{Y}\in\mathcal{G}] \le \Pr[\lnot E].$$
Hence it is enough to upper bound the probability of  $\lnot E$. If $\lnot E$ holds, then for some $r\in\{1, \ldots, n\}$ we have:
\begin{align*}
|\mathbf{X}\cap [1, r]| &\notin \left[\frac{a}{n}\cdot r - s/2, \frac{a}{n}\cdot r + s/2\right] = \left[\left(\frac{a}{n} - \frac{s}{2r}\right)r, \left(\frac{a}{n} + \frac{s}{2r}\right)r \right]  \\
\mbox{ or } |\mathbf{Y}\cap [1, r]| &\notin \left[\frac{a}{n}\cdot r - s/2, \frac{a}{n}\cdot r + s/2\right] = \left[\left(\frac{a}{n} - \frac{s}{2r}\right)r, \left(\frac{a}{n} + \frac{s}{2r}\right)r \right].
\end{align*}
By Corollary \ref{our_chernoff}  both of these events have probability at most 
\begin{align*}
2\exp\left(- \frac{\left(\frac{s}{2r}\right)^2 \cdot r}{4\cdot \frac{a}{n} + 2\cdot \frac{s}{2r}}\right) &= 2\exp\left(- \frac{s^2}{16\cdot \frac{ar}{n} + 4s}\right) \\
&\le 2\exp\left(- \frac{s^2}{16\cdot \frac{an}{n} + 4s}\right) \\
&=2\exp\left(- \frac{s^2}{16a + 4s}\right) \le 2 \exp\left(- \frac{s^2}{20a}\right)
\end{align*}
Hence the probability of the union of these two events is at most twice as large as the last expression. Then by summing over all $a\in\{1, \ldots, n\}$ we get the required bound.
\end{proof}

\subsection{Tying up loose ends --- proof of Theorem \ref{low-intersections}}

Assume that $\F \subseteq \binom{[n]}{a}$ and $\G \subseteq \binom{[n]}{a}$ are $t$-far. By Lemma \ref{reduction_to_compressed_families} there are $\F^\prime, \G^\prime\subseteq \binom{[n]}{m}$ satisfying the following three conditions:
\begin{itemize}
\item $\F^\prime$ and $\G^\prime$ are $t$-far;
\item $|\F^\prime| = |\F|$ and $|\G^\prime| = |G|$;
\item $\F^\prime$ is left-compressed.
\end{itemize}

By Proposition \ref{ideals-filters} we have that $\F^\prime$ is an ideal of $\maj_a$. Then by Lemma \ref{FI-noncomparable} we get that $L_{t + 1}(G) \not\maj_{t + 1} R_{t + 1}(F)$ for all $F\in\F^\prime$ and $G\in \G^\prime$. Hence by Lemma \ref{prob_bound} we have 
\begin{equation}
\label{prob_2}
\Pr[\mathbf{X} \in \F^\prime, \mathbf{Y}\in \G^\prime] \le 4n \exp\left(-(a - t - 1)^2/(20 a)\right),
\end{equation}
where $\mathbf{X}$ and $\mathbf{Y}$ are two independent random variables distributed according to $\mu_{a/n}$. The left hand side of \eqref{prob_2} equals
$$ |\F^\prime| \cdot |\G^\prime| \cdot \left[\left(\frac{a}{n}\right)^a \cdot \left(1 - \frac{a}{n}\right)^{n - a}\right]^2.$$
Finally, from the following lower bound on $\binom{n}{a}$ 
(see \cite[Lemma 2.4.2]{cohen1997covering})
$$\binom{n}{a} \ge \sqrt{\frac{1}{8n \cdot \frac{a}{n} \cdot \frac{n - a}{n}}} \cdot \left(\frac{n}{a}\right)^a \cdot \left(\frac{n}{n - a}\right)^{n - a},$$
we get:
\begin{align*}
|\F| \cdot |\G| &= |\F^\prime| \cdot |\G^\prime| \\
&\le \left[ \left(\frac{n}{a}\right)^a \cdot \left(\frac{n}{n - a}\right)^{n - a} \right]^2 \cdot 4n \exp\left(-(a - t - 1)^2/(20 a)\right)\\
&\le \left[8n \cdot \frac{a}{n} \cdot \frac{n - a}{n} \cdot \binom{n}{a}^2 \right] \cdot 4n \exp\left(-(a - t - 1)^2/(20 a)\right)\\
&= 32 a (n - a) \cdot \exp\left(-(a - t - 1)^2/(20 a)\right)
\cdot \binom{n}{a}^2.
\end{align*}

\section{Communication lower bound}
\label{sec_com}
Our proof of Theorem~\ref{communication_lower_bound} relies on Theorem \ref{low-intersections}.
Since we are dealing with $k$-party setting, we need the following $k$-dimensional generalization of Theorem \ref{low-intersections}. Fortunately, this generalization  can be obtained 
via a~very simple induction argument.

\begin{lemma}
\label{main_lemma}
For all $n, a, t, k\in\mathbb{N}$ satisfying $t < a < n$ the following holds.  Assume that $\mathcal{F}_1, \mathcal{F}_2, \ldots, \mathcal{F}_k\subseteq \binom{[n]}{a}$ are such that 
$$|\mathcal{F}_i| \ge 2^{k - 2} \cdot \sqrt{32a(n - a)} \cdot \exp\left(-\frac{(a - t - 1)^2}{40 a}\right)\cdot  \binom{n}{a} + 2^{k - 2}$$ 
for all $i\in\{1, 2, \ldots, k\}$. 
Then  there are $F_1\in\mathcal{F}_1, F_2\in\mathcal{F}_2, \ldots, F_k\in\mathcal{F}_k$ such that $|F_1\cap F_i| \ge t + 1$ for all $i\in\{2, \ldots, k\}$.  
\end{lemma}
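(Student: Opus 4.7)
My plan is to prove Lemma \ref{main_lemma} by induction on $k$, with Theorem \ref{low-intersections} used as a black box at every step. It is convenient to set
$M := \sqrt{32\,a(n-a)}\cdot\exp\bigl(-(a-t-1)^2/(40a)\bigr)\cdot\binom{n}{a},$
so that the hypothesis of the lemma reads $|\mathcal{F}_i|\ge 2^{k-2}(M+1)$, while Theorem \ref{low-intersections} asserts that if $\mathcal{F},\mathcal{G}\subseteq\binom{[n]}{a}$ are $t$-far then $|\mathcal{F}|\cdot|\mathcal{G}|\le M^2$.

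The base case $k=2$ is immediate: if $\mathcal{F}_1,\mathcal{F}_2$ were $t$-far then $|\mathcal{F}_1|\cdot|\mathcal{F}_2|\le M^2$, contradicting $|\mathcal{F}_1|\cdot|\mathcal{F}_2|\ge (M+1)^2 > M^2$. Hence some pair $F_1\in\mathcal{F}_1$, $F_2\in\mathcal{F}_2$ satisfies $|F_1\cap F_2|\ge t+1$.

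For the induction step, I peel off the last family. Define
$\mathcal{F}_1^{*}=\{F_1\in\mathcal{F}_1 : \exists\,F_k\in\mathcal{F}_k,\ |F_1\cap F_k|\ge t+1\}.$
By construction $\mathcal{F}_1\setminus\mathcal{F}_1^{*}$ is $t$-far from $\mathcal{F}_k$, so Theorem \ref{low-intersections} yields
$|\mathcal{F}_1\setminus\mathcal{F}_1^{*}|\le M^{2}/|\mathcal{F}_k|\le M^{2}/\bigl(2^{k-2}(M+1)\bigr)\le M/2^{k-2}.$
A one-line arithmetic check shows that $|\mathcal{F}_1^{*}|\ge 2^{k-2}(M+1)-M/2^{k-2}\ge 2^{k-3}(M+1)$, the relevant inequality $2^{2k-5}(M+1)\ge M$ being trivial for $k\ge 3$. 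Thus the $k-1$ families $\mathcal{F}_1^{*},\mathcal{F}_2,\ldots,\mathcal{F}_{k-1}$ all have size at least $2^{(k-1)-2}(M+1)$, so the induction hypothesis delivers $F_1\in\mathcal{F}_1^{*}$ together with $F_2\in\mathcal{F}_2,\ldots,F_{k-1}\in\mathcal{F}_{k-1}$ such that $|F_1\cap F_i|\ge t+1$ for all $i\in\{2,\ldots,k-1\}$. Finally, because $F_1\in\mathcal{F}_1^{*}$, the definition of $\mathcal{F}_1^{*}$ guarantees some $F_k\in\mathcal{F}_k$ with $|F_1\cap F_k|\ge t+1$, closing the induction.

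There is no serious obstacle: the exponential factor $2^{k-2}$ in the hypothesis is calibrated precisely so that the loss $M/2^{k-2}$ incurred by passing from $\mathcal{F}_1$ to $\mathcal{F}_1^{*}$ is cheap enough to halve the threshold and recurse. The only thing to watch is that the ``$+1$'' additive slack in $2^{k-2}(M+1)$ is preserved through the halving, which the above arithmetic check takes care of.
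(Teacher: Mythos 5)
Your argument is correct, and it takes a genuinely different route through the induction step than the paper does. The paper works with the extremal function $A_{a,t}^{k,n}$ (the threshold size forcing a good $k$-tuple) and proves the recursion $A_{a,t}^{k+1,n}\le 2\,A_{a,t}^{k,n}$ by a \emph{build-then-pair} strategy: greedily extract $N=A_{a,t}^{k,n}$ distinct sets $G_1,\ldots,G_N\in\mathcal{F}_1$, each already admitting high-intersection partners in $\mathcal{F}_2,\ldots,\mathcal{F}_k$, and then play the family $\{G_1,\ldots,G_N\}$ against $\mathcal{F}_{k+1}$ via the $k=2$ bound. You instead use a \emph{filter-then-recurse} strategy: discard from $\mathcal{F}_1$ the elements that are $t$-far from all of $\mathcal{F}_k$, observe that Theorem~\ref{low-intersections} makes this discarded part small (at most $M/2^{k-2}$, where $M$ denotes the right-hand side of the $k=2$ bound), and apply the induction hypothesis to the remaining $\mathcal{F}_1^{*}$ together with $\mathcal{F}_2,\ldots,\mathcal{F}_{k-1}$, recovering the partner in $\mathcal{F}_k$ from the definition of $\mathcal{F}_1^{*}$ at the end. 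Both schemes lose exactly a factor $2$ per step, which is what the $2^{k-2}$ in the hypothesis is calibrated to absorb. Your version is a little leaner: it avoids introducing the auxiliary quantity $A_{a,t}^{k,n}$ and the monotonicity observation $A_{a,t}^{k,n}\le A_{a,t}^{k+1,n}$, and the greedy extraction of $N$ witnesses is replaced by a single application of Theorem~\ref{low-intersections}. The paper's version has the mild advantage of isolating the purely combinatorial recursion on $A_{a,t}^{k,n}$ from the numerical estimate, which some readers may find more modular. The arithmetic in your step (namely $2^{2k-5}(M+1)\ge M$ for $k\ge 3$) is verified correctly, and the boundary situations (empty $\mathcal{F}_1\setminus\mathcal{F}_1^{*}$, the vacuous case $k=1$) cause no trouble.
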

\begin{proof}

For $t < a < n$ 
let $A_{a,t}^{k,n}$ be the minimal positive integer $N$ such that
for all $\mathcal{F}_1, \ldots, \mathcal{F}_k\subseteq\binom{[n]}{a}$ the following holds. If $|\mathcal{F}_i| \ge N$ for all $i\in\{1, \ldots, k\}$, then there are $F_1\in\mathcal{F}_1, F_2\in\mathcal{F}_2, \ldots, F_k\in\mathcal{F}_k$ such that $|F_1\cap F_i| \ge t + 1$ for all $i\in\{2, \ldots, k\}$.

Let us verify that $A_{a,t}^{k,n}$ 
are non-decreasing in $k$, i.e.:
\begin{equation}
\label{A_non-decreasing}
A_{a,t}^{k,n} \le A_{a,t}^{k + 1,n}
\end{equation}
for all $k \ge 2$ and $t < a < n$.
Indeed, take $\F_1, \F_2, \ldots, \F_k \subseteq\binom{[n]}{a}$  such that $|\F_i| \ge  A_{a,t}^{k + 1,n}$ for all $i\in\{1, \ldots, k\}$. It is clear that  $A_{a,t}^{k + 1, n} \le \binom{n}{a}$. So, from  definition of $A_{a,t}^{k + 1,n}$ applied to the families $\F_1, \F_2, \ldots, \F_{k +1}$, where $\F_{k + 1} = \binom{[n]}{a}$, we conclude that there are  $F_1\in\mathcal{F}_1, F_2\in\mathcal{F}_2, \ldots, F_{k + 1}\in\mathcal{F}_{k + 1}$ satisfying $|F_1\cap F_i| \ge t + 1$ for all $i\in\{2, \ldots, k + 1\}$.  

Theorem \ref{low-intersections} implies that:
$$A^{2,n}_{a,t} \le \left\lfloor \sqrt{32a(n - a)} \cdot \exp\left(-\frac{(a - t - 1)^2}{40 a} \right) \cdot \binom{n}{a} \right \rfloor + 1.$$
Indeed, assume that $\F_1, \F_2 \subseteq \binom{[n]}{a}$ are such that:
$$|\F_1|, |\F_2| \ge \left\lfloor\sqrt{32a(n - a)} \cdot \exp\left(-\frac{(a - t - 1)^2}{40 a} \right) \cdot \binom{n}{a} \right \rfloor + 1.$$
Then $|\F_1| \cdot |\F_2|$ is strictly larger than $32a(n - a) \cdot \exp\left(-\frac{(a - t - 1)^2}{20 a}\right) \cdot \binom{n}{a}^2$. By Theorem \ref{low-intersections} this means that there are $F_1\in\F_1, F_2\in\F_2$ such that $|F_1\cap F_2| \ge t + 1$.

To show the lemma it is enough to demonstrate 
that
$$A_{a,t}^{k+1,n} \le 2 \cdot A_{a,t}^{k,n},$$
for all $k\ge 2$ and $t < a < n$.
To do so, fix $k + 1$ families 
$\mathcal{F}_1, \ldots, \mathcal{F}_{k + 1}\subseteq\binom{[n]}{a}$. Assume that $|\mathcal{F}_i| \ge 2 \cdot A_{a,t}^{k,n}$ for all $i\in\{1, \ldots, k + 1\}$. Our goal is to show that there are 
$F_1\in \mathcal{F}_1, \ldots, F_{k + 1}\in \mathcal{F}_{k + 1}$ 
satisfying
$$|F_1\cap F_i| \ge t + 1  \mbox{ for all } i\in\{2, \ldots, k + 1\}.$$

Denote $N = A_{a,t}^{k,n}$. We claim that there are $N$ distinct $G_1, \ldots, G_N \in \mathcal{F}_1$ such that for every $j\in\{1, 2, \ldots, N\}$ there are $F_2^j\in\mathcal{F}_2, \ldots, \mathcal{F}_k^j \in\mathcal{F}_k$ satisfying $|G_j\cap F_i^j| \ge t + 1$ for all $i\in\{2, \ldots, k\}$.  

We construct such $G_1, \ldots, G_N$ one by one. Assume that $G_1, \ldots, G_j$ for some $j\in\{0, \ldots, N - 1\}$ are already constructed.  Notice that:
$$\left|\mathcal{F}_1\setminus\{G_1, \ldots, G_j\}\right| \ge 2 \cdot A_{a,t}^{k,n} - j \ge  2 \cdot A_{a,t}^{k,n} - N =  A_{a,t}^{k,n},$$
$$|\mathcal{F}_i| \ge 2\cdot A_{a,t}^{k,n} \ge  A_{a,t}^{k,n}, \qquad i = 2, \ldots, k.$$
This means  by definition of $A_{a,t}^{k,n}$ that there are $G\in \mathcal{F}_1\setminus\{G_1, \ldots, G_j\}$, $H_2\in\mathcal{F}_2, \ldots, H_k\in\mathcal{F}_k$ satisfying:
$$|G\cap H_i| \ge t + 1  \mbox{ for all } i \in\{2, \ldots, k\}.$$ 
Then we set $G_{j + 1} = G, F_2^{j + 1} = H_2, \ldots, F_k^{j + 1} = H_k$. Note that $G_{j + 1}$ is distinct from $G_1, \ldots, G_j$ because $G\notin \{G_1, \ldots, G_j\}$.

Finally, consider two families $\{G_1, \ldots, G_N\}$ and $\mathcal{F}_{k + 1}$. These two families are both of size at least $N = A_{a,t}^{k,n} \ge A_{a,t}^{2,n}$ (the last inequality here is by \eqref{A_non-decreasing}). Hence there are $j\in\{1, \ldots, N\}$ and $H_{k + 1} \in \mathcal{F}_{k + 1}$ such that $|G_j\cap H_{k + 1}| \ge t + 1$.  To finish the proof set $F_1 = G_j, F_2 = F_2^j, \ldots, F_k = F_k^j$ and $F_{k + 1} = H_{k + 1}$. 
\end{proof}

We are now ready to prove Theorem \ref{communication_lower_bound}.

\begin{proof}[Proof of theorem  \ref{communication_lower_bound}]

Set $a = \lfloor n/k\rfloor$ and $t = \lfloor(1 - \gamma/4)a\rfloor$. Note that
\begin{equation}
\label{a_t_relation}
a - t \ge \frac{\gamma a}{4} \ge \frac{\gamma (n/k - 1)}{4} = \frac{n}{4 \cdot \frac{k}{\gamma}} - \frac{\gamma}{4} \ge 25\sqrt{n} - \frac{1}{4}
\end{equation}
Here we use the assumption that $\frac{k}{\gamma} \le \frac{\sqrt{n}}{100}$. 
In particular, \eqref{a_t_relation} implies $t < a$ 
for all large enough $n$.

 Define
$$ \mathcal{D} = \left\{ (X_1, \ldots, X_k) \in \binom{[n]}{a}^k : X_1, X_2, \ldots, X_k \mbox{ are disjoint}\right\},$$
$$ \mathcal{I} = \left\{ (F_1, \ldots, F_k) \in \binom{[n]}{a}^k : |F_i\triangle F_{i^\prime}| \le \gamma a \mbox{ for all } i, i^\prime\in\{1, \ldots, k\}\right\}.$$

Observe that:
\begin{equation}
\label{size_of_D}
|\mathcal{D}| = \binom{n}{a} \cdot \binom{n - a}{a} \cdot \ldots \cdot \binom{n - (k - 1) \cdot a}{a} > 0.
\end{equation}

Assume that there is a $c$-bit non-deterministic communication protocol for 
$\mathrm{DISJ}^\prime_{k,\gamma}(n)$. Hence there is a cover of $\mathcal{D}$ by at most $2^c$ boxes  which are  disjoint with $\mathcal{I}$. Among these boxes there is  one which contains at least $|\mathcal{D}|/2^c$ elements of $\mathcal{D}$. Let this box be $\mathcal{F}_1\times\ldots \times\mathcal{F}_k$ for some $\mathcal{F}_1, \ldots, \mathcal{F}_k\subseteq \binom{[n]}{a}$.

Let us show that for some $i\in\{1, \ldots, k\}$ it holds that
\begin{equation}
\label{i_we_need}
|\F_i| < 2^{k - 2} \cdot \sqrt{32a(n - a)} \cdot \exp\left(-\frac{(a - t - 1)^2}{40 a}\right)\cdot  \binom{n}{a} + 2^{k - 2}.
\end{equation}
Indeed, assume that it is not true. Then, since $t < a< n$, we can apply Lemma \ref{main_lemma} to find  $F_1\in\F_1, F_2\in\F_2, \ldots, F_k\in\F_k$ such that $|F_1\cap F_i| \ge t + 1$ for all $i\in\{2, \ldots, k\}$. Note also that $|F_1\cap F_1| = |F_1| = a \ge t+1$.
From that for every $i,i^\prime\in\{1, \ldots, k\}$ we obtain:
\begin{align*}
|F_i\triangle F_{i^\prime}| &\le |F_i\triangle F_1| + |F_{i^\prime} \triangle F_1|\\
 &= |F_i| + |F_1| - 2|F_i\cap F_1| + |F_{i^\prime}| + |F_1| - 2|F_{i^\prime}\cap F_1| \\
&\le 4 \cdot a - 4 \cdot (t + 1) \le  \gamma a.
\end{align*}
This means that $\F_1\times \F_2\times\ldots \times\F_k$ intersects $\mathcal{I}$, contradiction.

Take any $i\in\{1, 2, \ldots, k\}$ satisfying \eqref{i_we_need}. Recall that by definition there are at least $|\mathcal{D}|/2^c$ elements of $\mathcal{D}$ in  $\F_1\times \F_2\times\ldots \times\F_k$. On the other hand,  notice that for any fixed $X\in\binom{[n]}{a}$ there are exactly  
$$\binom{n - a}{a} \cdot \ldots \cdot \binom{n - (k - 1) \cdot a}{a}. $$
elements of $\mathcal{D}$ with the $i$th coordinate  
equals to $X$. 
Hence there are at most
$$ |\F_i| \cdot \binom{n - a}{a} \cdot \ldots \cdot \binom{n - (k - 1) \cdot a}{a}$$
elements of $\mathcal{D}$ in $\F_1\times \F_2\times\ldots \times\F_k$. By combining these two bounds we obtain:
$$|\mathcal{D}|/2^c \le  |\F_i| \cdot \binom{n - a}{a} \cdot \ldots \cdot \binom{n - (k - 1) \cdot a}{a}.$$
By \eqref{size_of_D} this transforms to 
$$2^c \ge \frac{\binom{n}{a}}{|\F_i|}.$$
Recall that the size of $\F_i$ satisfies \eqref{i_we_need}. This gives us the following:
\begin{align*}
2^c &\ge \frac{\binom{n}{a}}{2^{k - 2} \cdot \sqrt{32a(n - a)} \cdot \exp\left(-\frac{(a - t - 1)^2}{40 a}\right)\cdot  \binom{n}{a} + 2^{k - 2}} \\
&\ge \frac{1}{2}\min\left\{\frac{\exp\left(\frac{(a - t - 1)^2}{40 a}\right)}{2^{k - 2}\sqrt{32a(n - a)}}, \frac{\binom{n}{a}}{2^{k - 2}}\right\} \\
&\ge   \frac{1}{2}\min\left\{\frac{\exp\left(\frac{(a - t - 1)^2}{40 a}\right)}{2^{k - 2}\sqrt{32} \cdot n}, \frac{\binom{n}{a}}{2^{k - 2}}\right\}.
\end{align*}
After taking $\log_2$ of the last inequality (bearing in mind that $\log_2(e) > 1$) we obtain that $c$ for all large enough $n$ satisfies the following:
$$c \ge \min\left\{ \frac{(a - t - 1)^2}{40 a} - k  - 1.5\log_2(n) , \log_2\left(\binom{n}{a}\right) - k\right\}$$
(here we subtract $0.5\log_2(n)$ from the first argument of $\min$ to compensate negative constant terms).
It remains to demonstrate that both expressions in the minimum above are at least $\frac{\gamma^2 n}{10^4 \cdot k} - 2\log_2(n)$ for all large enough $n$:
\begin{align}
\label{first_min}
\frac{(a - t - 1)^2}{40 a} - k  - 1.5\log_2(n) \ge \frac{\gamma^2 n}{10^4 \cdot k} - 2\log_2(n),\\
\label{second_min}
\log_2\left(\binom{n}{a}\right) - k  \ge \frac{\gamma^2 n}{10^4 \cdot k} - 2\log_2(n).
\end{align}

Let us start with \eqref{first_min}. At first, note that
$$a - t - 1 \ge \frac{\gamma a}{4} - 1 \ge \frac{\gamma a}{8},$$
where the last inequality is because $\gamma a$ is large enough:
$$\gamma a \ge \gamma (n/k - 1)  \ge 100\sqrt{n} - 1.$$
(in the second inequality of the last line  we use the assumption that $\frac{k}{\gamma} \le \frac{\sqrt{n}}{100}$). In particular,
 $a - t - 1$ is positive. Hence
\begin{align*}
\frac{(a - t - 1)^2}{40 a} - k  - 1.5\log_2(n) &\ge \frac{\gamma^2 a}{2560} - k - 1.5\log_2(n) \\
&\ge \frac{\gamma^2 n}{2560 \cdot  k} - k - 2\log_2(n)
\end{align*}
(here once again we subtract $0.5\log_2(n)$ to compensate a negative constant term which is due to rounding of $a = \lfloor n/k\rfloor$).
To prove \eqref{first_min} it remains to notice that $k \le \frac{\gamma^2 n}{10^4 \cdot k}$ because $\frac{k}{\gamma} \le \frac{\sqrt{n}}{100}$. 

To show \eqref{second_min} we will actually show that the left hand side of \eqref{second_min} is at least the left hand side of \eqref{first_min}. Indeed, 
$$\log_2\left(\binom{n}{a}\right) \ge a \cdot \log_2\left(\frac{n}{a}\right) \ge a.$$
The last inequality is because $k\ge 2$ and hence $a = \lfloor n/k\rfloor \le n/2$. But recall that $a - t - 1$ is positive, which implies that $a$ is  at least $\frac{(a - t - 1)^2}{40 a}$.
\end{proof}

\bigskip

\textbf{Acknowledgment.} The article was prepared within the framework of the HSE University Basic Research Program and funded by the Russian Academic Excellence Project '5-100'. Mikhail Vyalyi is partially supported  by RFBR grant 17--01-00300 and by 
the state assignment topic no. 0063-2016-0003.

Authors are sincerely grateful to anonymous reviewers for valuable comments.

\appendix

\section{Reduction to finite time}
\label{sec_reduction}
\begin{proposition}
Assume that a deterministic finite automaton $\A$ with $q$ states separates $\mathrm{EvenCycles}_{n,d}$ from $\mathrm{OddCycles}_{n,d}$. Then $\A$ separates $\mathrm{EvenCycles}_{n,d}$ from $\mathrm{OddCycles}_{n,d}$ in time $q n$. 
\end{proposition}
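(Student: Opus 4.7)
The plan is to first pass to an equivalent automaton in which $q_{accept}$ is an absorbing state, and then apply a pigeonhole argument on pairs of the form (state of $\A$, current node of the realizing game graph). For the first step, I would replace every outgoing transition from $q_{accept}$ by a self-loop, obtaining an automaton $\A'$ with the same $q$ states. Then $\A'$ still separates: on words in $\mathrm{OddCycles}_{n,d}$ neither automaton ever visits $q_{accept}$, so $\A'$ agrees with $\A$; on words in $\mathrm{EvenCycles}_{n,d}$, the two automata follow the same trajectory until $\A$ first reaches $q_{accept}$ (which does happen by the original separation hypothesis), after which $\A'$ simply stays there. Thus it suffices to prove the statement for an automaton whose accepting state is absorbing, and for such an automaton separation in time $qn$ is equivalent to the single assertion that $q_{accept}$ is reached within $qn$ letters.

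Now fix $w=(v_1,l_1)(v_2,l_2)\ldots \in \mathrm{EvenCycles}_{n,d}$ and an even game graph $G$ with at most $n$ nodes that realizes $w$. Write $t_j = \delta_{\A'}(q_{start}, w_1\ldots w_j)$. Consider the $qn+1$ pairs $(t_j, v_{j+1})$ for $j=0,1,\ldots,qn$. They take values in $Q\times V(G)$, a set of size at most $qn$, so pigeonhole yields $0\le a<b\le qn$ with $(t_a,v_{a+1})=(t_b,v_{b+1})$. Since $v_{a+1}=v_{b+1}$, the segment $v_{a+1}\to v_{a+2}\to\ldots\to v_{b+1}$ is a cycle in $G$, and hence the word
\[
w^{*} \;=\; w_1 \ldots w_a \,(w_{a+1}\ldots w_b)^\omega
\]
is realized by an infinite path in $G$, giving $w^{*}\in\mathrm{EvenCycles}_{n,d}$. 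On $w^{*}$ the automaton $\A'$ enters a periodic regime from position $a$, cycling through $t_a,t_{a+1},\ldots,t_{b-1}$ forever (using $t_b=t_a$).

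By the separation property applied to $w^{*}$, the state of $\A'$ on $w^{*}$ must equal $q_{accept}$ at all sufficiently large positions. Each of $t_a,\ldots,t_{b-1}$ recurs infinitely often along the cycle, so every one of them must be $q_{accept}$; in particular $t_a=q_{accept}$. Since $a\le qn$ and $q_{accept}$ is absorbing, $t_j=q_{accept}$ for every $j\ge a$, and therefore for every $j\ge qn$. The $\mathrm{OddCycles}_{n,d}$ clause of separation is unaffected by the modification, so $\A'$ separates in time $qn$ with the same $q$ states, which is the content of the proposition. The only mildly delicate step is the reduction to an absorbing $q_{accept}$; once it is in place, everything is a routine pigeonhole on the product of $\A$ with the realizing game graph.
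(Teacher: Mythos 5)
Your proof is correct and uses essentially the same argument as the paper: make $q_{accept}$ absorbing, then apply pigeonhole to the $qn+1$ pairs (automaton state, current graph node) along the run, and close up the repeated segment into an eventually periodic word lying in $\mathrm{EvenCycles}_{n,d}$ to force $q_{accept}$ to appear early. The paper phrases it as a proof by contradiction and leaves the absorbing-state reduction as a ``without loss of generality'' remark, whereas you construct $\A'$ explicitly and argue directly, but these are only stylistic differences.
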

\begin{proof}
Let $Q$ be the set of states of $\A$ and let $q_{start}$ be the initial state of $\A$.
Without loss of generality we may assume that $q_{accept}$ is an absorbing state of $\A$, i.e.,
$$\delta_\A(q_{accept}, a) = q_{accept}$$
for all 
$a\in [n] \times \{1, 2,\dots,d\}$. 
Thus it is enough to show that for every $w\in\mathrm{EvenCycles}_{n,d}$ there exists $i \in \{1, 2, \ldots, qn\}$ such that
$$\delta_\A(q_{start}, w_1 \ldots w_i) =q_{accept}.$$
Assume that for some $w = (v_1, l_1) (v_2, l_2) (v_3, l_3) \ldots \in\mathrm{EvenCycles}_{n,d}$ this is false. Let $G$ be an even game graph with at most $n$ nodes which has an infinite path corresponding to $w$.
 Define a mapping $\phi\colon[qn + 1] \to [n] \times Q$ as follows:
$$\phi(i) = (v_{i}, \delta_\A(q_{start}, w_1\ldots w_{i - 1})).$$
By the pigeonhole principle 
there are $i, j\in[qn + 1]$, $i < j$ such that
$$\phi(i) = \phi(j).$$
I.e., $v_i = v_j$ and $\delta_A(q_{start}, w_1\ldots w_{i - 1}) = \delta_A(q_{start}, w_1\ldots w_{j - 1})$. Consider the following infinite path $w^\prime$ of $G$. This path starts at $v_1$ and goes to $v_i$ by edges encoded in $w_1\ldots w_{i - 1}$. Then it stays forever on a cycle starting at $v_i = v_j$ and formed by edges encoded in $w_i\ldots w_{j - 1}$. It is easy to see that the only states $\A$ reaches on $w^\prime$ are:
$$q_{start}, \delta_\A(q_{start}, w_1), \ldots, \delta_\A(q_{start}, w_1 w_2\ldots w_{j - 1}).$$
By our assumption $\delta_\A(q_{start}, w_1), \ldots, \delta_\A(q_{start}, w_1 w_2\ldots w_{j - 1})$ are all different from $q_{accept}$ (and $q_{start}$ is obviously too, because otherwise $\A$ reaches $q_{accept}$ on every word).  On the other hand $w^\prime$ is an infinite path of an even game graph with at most $n$ nodes, contradiction.
\end{proof}

\end{document}